 \newtheorem{thm}{Theorem}[section]
 \newtheorem{cor}[thm]{Corollary}
 \newtheorem{lemma}[thm]{Lemma}
 \newtheorem{prop}[thm]{Proposition}
 \theoremstyle{definition}
 \theoremstyle{remark}
 \newtheorem{assumption}[thm]{Assumption}
 \numberwithin{equation}{section}
 \def\idty{{\mathchoice {\mathrm{1\mskip-4mu l}} {\mathrm{1\mskip-4mu l}} %
{\mathrm{1\mskip-4.5mu l}} {\mathrm{1\mskip-5mu l}}}}
\newcommand{\bR}{{\mathbb R}}
\newcommand{\bC}{{\mathbb C}}
\newcommand{\bN}{{\mathbb N}}
\newcommand{\bZ}{{\mathbb Z}}
\newcommand{\cB}{{\mathcal B}}
\newcommand{\cG}{{\mathcal G}}
\newcommand{\cH}{{\mathcal H}}
\newcommand{\cI}{{\mathcal I}}
\newcommand{\cT}{{\mathcal T}}
\newcommand{\cS}{{\mathcal S}}
\newcommand{\cE}{{\mathcal E}}
\newcommand{\caB}{{\mathcal B}}
\newcommand{\caC}{{\mathcal C}}
\newcommand{\caD}{{\mathcal D}}
\newcommand{\caG}{{\mathcal G}}
\newcommand{\caH}{{\mathcal H}}
\newcommand{\caI}{{\mathcal I}}
\newcommand{\cK}{{\mathcal K}}
\newcommand{\caR}{{\mathcal R}}
\newcommand{\caS}{{\mathcal S}}
\newcommand{\caT}{{\mathcal T}}
\newcommand{\caV}{{\mathcal V}}
\newcommand{\bbC}{{\mathbb C}}
\newcommand{\bbZ}{{\mathbb Z}}
\newcommand{\gap}{\mathrm{gap}}
\newcommand{\braket}[2]{\left\langle #1 , #2\right\rangle}
\newcommand{\Span}{\mathrm{span}}
\newcommand{\ran}{{\rm ran}}
\newcommand{\ketbra}[1]{\vert #1\rangle\langle #1\vert}
\newcommand{\kettbra}[2]{\vert #1\rangle\langle #2\vert}
\newcommand{\ket}[1]{\vert #1 \rangle}
\newcommand{\spec}{\mathop{\rm spec}}
\newcommand{\be}{\begin{equation}}
\newcommand{\ee}{\end{equation}}
\newcommand{\bea}{\begin{eqnarray}}
\newcommand{\eea}{\end{eqnarray}}
\newcommand{\beann}{\begin{eqnarray*}}
\newcommand{\eeann}{\end{eqnarray*}}
\newcommand{\eq}[1]{(\ref{#1})}
\title[]{On a bulk gap strategy for quantum lattice models}
\author[A. Young]{Amanda Young}
\address{Munich Center for Quantum Science and Technology and Zentrum Mathematik, TU M\"{u}nchen 85747 Garching, Germany}
\address{Department of Mathematics, University of Illinois Urbana-Champaign, Urbana, IL, USA 61801}
\begin{document}
\date{\today }

\maketitle

\begin{abstract}
Establishing the (non)existence of a spectral gap above the ground state in the thermodynamic limit is one of the fundamental steps for characterizing the topological phase of a quantum lattice model. This is particularly challenging when a model is expected to have low-lying edge excitations, but nevertheless a positive bulk gap. We review the bulk gap strategy introduced in \cite{Warzel:2022, Warzel:2023} while studying truncated Haldane pseudopotentials. This approach is able to avoid low-lying edge modes by separating the ground states and edge states into different invariant subspaces before applying spectral gap bounding techniques. The approach is stated in a general context, and we reformulate specific spectral gap methods in an invariant subspace context to illustrate the necessary conditions for combining them with the bulk gap strategy. We then review its application to a truncation of the 1/3-filled Haldane pseudopotential in the cylinder geometry.
\end{abstract}

\section{Introduction}\label{sec:intro}
One of the fundamental quantities in the classification of quantum phases of matter, including topological phases, is the spectral gap \cite{Bachmann2012, Chen2010, Chen2011, Hastings2005, Ogata2021}, and a model belongs to a gapped ground state phase if the spectrum of its GNS Hamiltonian in the thermodynamic limit has a positive gap above its ground state energy. This is typically proved by showing that the model is \emph{uniformly gapped}, meaning that the spectral gaps of an appropriately chosen sequence of local Hamiltonians can be bounded from below by a positive constant independent of the system size.

When the physical space of the infinite system does not have a boundary, e.g. $\bZ^d$, the gap of the GNS Hamiltonian can also be referred to as a \emph{bulk gap}. In such cases, imposing different boundary conditions on the local Hamiltonians can result in the same GNS Hamiltonian, see \cite[Section III]{Nachtergaele2019}. One can then try to optimize the uniform gap over different boundary conditions to produce a sharper lower bound on the bulk gap. This is especially useful when the model with open boundary conditions has low-lying (or even gapless) edge modes. These are low energy states where the excitations are localized to the boundary of the system. As the boundary disappears in the thermodynamic limit, these do not typically converge to bulk excitations, and so the model with periodic boundary conditions would be better suited for studying the bulk gap.

There are two main classes of methods for proving uniform gap estimates for frustration-free quantum spin systems: ones based on using ground state projections to localize low-lying excitations \cite{Affleck1988, Anshu2020, Fannes1992, Kastoryano2019, Nachtergaele1996, Nachtergaele2018}, and finite size criteria which prove uniform gaps in the situation that the spectral gap of finite volume Hamiltonian is sufficiently large \cite{Knabe1988, Gosset2016, Lemm2019, Lemm2020}. Unfortunately, the main idea that makes both types of methods successful also makes them susceptible to edge modes, in the sense that the uniform bound produced would be proportional to the energy of the edge states regardless of the chosen boundary conditions. In cases where the edge state energy is anticipated to be significantly smaller than the bulk gap, or even vanishing in the thermodynamic limit, this results in uniform gap estimates that do not accurately reflect the behavior of the bulk gap.

The aim of this work is to review recent progress in proving bulk gaps in the presence of edge modes. The spectral gap for truncatations of Haldane pseudopotentials with certain fillings on different geometries was analyzed in \cite{Nachtergaele2021, Nachtergaele2020, Warzel:2022, Warzel:2023}. These models have edge modes that are nonvanishing, but small, which produce uniform gap estimates that are unstable in the limit of a certain model parameter and do not reflect the true behavior of the bulk gap. To overcome these low-lying excitations, a new scheme based off identifying invariant subspaces of the local Hamiltonian and catering the gap method to the individual subspaces was introduced in \cite{Warzel:2022, Warzel:2023}. We explain the bulk gap strategy in a general context as it can be applied to other models, and illustrate its application to a truncation of the 1/3-filled Haldane pseudopotential on the cylinder. 

\subsection{The truncated $1/3$-filled Haldane pseudopotential}
Haldane pseudopotentials were first introduced in \cite{Haldane1983} as Hamiltonian models for the fractional quantum Hall effect (FQHE). These are designed to have a Laughlin state \cite{Laughlin1983} as a maximally filled ground state, and it has long been conjectured that they exhibit other characteristic properties of the FQH phase, including a spectral gap above its ground state energy \cite{Haldane1985, Pokrovsky1985, Trugman1985}. While numerical evidence supports this conjecture \cite{Haldane1990}, a rigorous proof has been elusive.

Letting $\alpha=\ell/R$ denote the ratio of the magnetic length to the cylinder radius, the Hilbert space for the $1/3$-filled model on an infinite cylinder is the fermionic Fock space generated by the lowest Landau orbitals:
\[
\psi_n(x,y) = \left(\frac{\alpha}{2\pi^{3/2}}\right)^{1/2}e^{i\frac{\alpha}{\ell}ny}e^{-\frac{1}{2}(\frac{x}{\ell}-\alpha n)^2}, \qquad n \in \bbZ,
\]
where $x\in\bR$ and $y\in[0,2\pi R).$ Denoting by $c_n$ the annihilation operator associated with $\psi_n,$ the $1/3$-filled Haldane pseudopotential in second quantization takes the form
\begin{equation}\label{eq:Haldane}
    W\propto \sum_{s\in\bbZ/2}B_s^*B_s, \quad B_s = \sum_{k\in \bZ+s}F(2\alpha k)c_{s-k}c_{s+k}, \quad F(t) = te^{-t^2/4}.
\end{equation}
One challenge for proving the spectral gap conjecture is that the above quantum lattice interaction is not finite range. This motivated \cite{Nachtergaele2021}, where the gap question was rigorously treated for a finite-range truncation of the interaction that should well-approximate the model for small cylinder radii. This truncated model, first introduced in \cite{Nakamura2012} and subsequently analyzed in \cite{Wang2015}, is defined by restricting $B_s$ to the sum over $|k|\leq 3/2.$ 

Apply the Jordan-Wigner transformation leads to a quantum spin system with local Hilbert space $\caH_\Lambda = (\bC^2)^{\otimes|\Lambda|}$ for any finite interval $\Lambda=[a,b]\subseteq \bZ.$ Generalizing the model parameters to $\kappa>0$ and $\lambda\in\bbC$, the local Hamiltonian with open boundary conditions (OBC) for the truncated $1/3$-filled Haldane pseudopotential becomes
\begin{align}
    & H_\Lambda = \sum_{x=a}^{b-2}n_xn_{x-2} + \kappa \sum_{x=a+1}^{b-2}q_x^*q_x\quad \text{where} \label{eq:OBC_Ham} \\
  n_x = \sigma_x^+\sigma_x^-, &\quad q_x = \sigma_{x}^-\sigma_{x+1}^--\lambda \sigma_{x-1}^-\sigma_{x+2}^-, \quad \sigma^-=(\sigma^+)^* =\kettbra{0}{1}
\,.
   \label{eq:interaction}
\end{align}
where $n\ket{i}=i\ket{i}$ labels an eigenbasis of $n=\sigma^+\sigma^-\in\caB(\bC^2)$. We say that $\ket{0}$ denotes a vacant site, and $\ket{1}$ denotes an occupied site. The Hamiltonian preserves particle number $N_\Lambda = \sum_xn_x$, and center of mass $M_\Lambda =\sum_{x}xn_x,$ and the choices 
\begin{equation}
\kappa=(F(\alpha)/F(2\alpha))^2=e^{3\alpha^2/2}/4,\qquad \lambda=-F(3\alpha)/F(\alpha)=-3e^{-2\alpha^2}    
\end{equation}
correspond to the \emph{physical regime} in \eqref{eq:Haldane}.

The model with OBC is uniformly gapped if there is a sequence of finite intervals $\Lambda_n \uparrow \bZ$ so that
\[
\gamma := \liminf_{n\to\infty}\gap(H_{\Lambda_n})>0,
\]
where $\gap(H_{\Lambda_n})$ is the difference between the first excited state and ground state energies. While the uniform gap is independent of the system size $\Lambda$, it does depend on the model parameters, $\gamma=\gamma(\kappa,\lambda).$ The lower bound obtained in \cite{Nachtergaele2021} for the uniform gap of \eqref{eq:OBC_Ham} satisfied $\gamma \geq \mathcal{O}(|\lambda|^2)$ in the regime of small $|\lambda|$. This corresponds to $R$ small in the physical regime. It was also show that this estimate is sharp. For example, $\Span\{\ket{110010\ldots 0}, \, \ket{10110\ldots 0}\}$ is invariant under $H_\Lambda$ and has an eigenvalue $\frac{\kappa}{\kappa+1}|\lambda|^2+\mathcal{O}(|\lambda|^4)$ for small $|\lambda|$. As the ground state space of this model is $\caG_\Lambda = \ker(H_\Lambda)$, this is an upper bound on the spectral gap. However, in the thermodynamic limit $\Lambda\uparrow \bZ$, any state from from this subspace converges to the the vacuum state, which is a ground state for the generator of the the infinite system dynamics. This is the hallmark of an edge excitation.

Numerical results for the model with periodic boundary conditions (PBC) further showed that the spectral gap had a positive limit as $|\lambda|\to 0$ for small system sizes  \cite{Nachtergaele2021}. This suggests that the eigenstates of the model with OBC and $\mathcal{O}(|\lambda|^2)$ energy are edge states, and that the uniform gap
\begin{equation}\label{eq:bulk_gap_1}
    \liminf_{n\to\infty} \gap(H_{\Lambda_n}^{\rm per} )>0
\end{equation}
of the model with PBC would produce better bounds on the bulk gap for small $|\lambda|$ where
\begin{equation}\label{eq:PBC_Ham}
    H_\Lambda^{\rm per} = \sum_{x=a}^{b}(n_{x}n_{x+2} + \kappa q_x^*q_x).
\end{equation}
(Note that for the periodic interaction, addition is understood modulo $|\Lambda|.$) The main difficulty to proving this claim is that existing methods for proving positive uniform gaps rely on spectral gaps for local Hamiltonians associated with subvolumes $\Lambda'\subseteq\Lambda$. Even if one considers the model with PBC, the natural boundary conditions for a subvolume are OBC. Thus, if $H_{\Lambda'}$ has edge modes, this will be reflected in the bound for $\gap(H_\Lambda^{\rm per})$, producing an estimate on the bulk gap with the wrong behavior in the limit $|\lambda|\to 0.$ The bulk gap approach from \cite{Warzel:2022, Warzel:2023} overcame this challenge. In sections \ref{sec:subspaces}-\ref{sec:gap_proof}, we apply this strategy to \eqref{eq:PBC_Ham} to produce the following result.


\begin{thm}
    \label{thm:main_result}
    Fix any nonzero $\lambda \in \bC$ and $\kappa>0.$ There exists a monotone increasing function $f:[0,\infty)\to[0,\infty)$ such that if $f(|\lambda|^2)<1/3$, then
    \begin{equation}
        \liminf_{L\to\infty}\gap(H_{[1,L]}^{\rm per})\geq \min\left\{\frac{\kappa}{6(1+2|\lambda|^2)}\left(1-\sqrt{3f(|\lambda|^2)}\right)^2, \, \gamma^{\rm per}\right\}
    \end{equation}
    where
    \begin{equation}
        \gamma^{\rm per}= \frac{1}{3}\min\left\{1, \frac{\kappa}{2+2\kappa|\lambda|^2}, \, \frac{\kappa}{\kappa+2}, \right\}.
    \end{equation}
\end{thm}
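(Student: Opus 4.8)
The plan is to implement the bulk gap strategy outlined in the introduction: decompose the Hilbert space $\caH_{[1,L]}$ into invariant subspaces of $H_{[1,L]}^{\rm per}$ that separate the ground states from the would-be edge modes, bound the gap within each subspace separately, and then combine the estimates. Since $H_{[1,L]}^{\rm per}$ commutes with $N_\Lambda$ and $M_\Lambda$, the first step is to reduce to fixed particle-number sectors; the ground state space at filling $1/3$ consists of the ``frozen'' configurations (roughly, patterns with no three consecutive occupancies clustered to create an $n_xn_{x+2}$ penalty and respecting the $q_x$ kernel condition), and the relevant competitors are configurations where a single defect can propagate. One identifies a subspace $\caV$ spanned by configurations ``close to'' the ground states (e.g. those reachable by local moves from frozen states) and its complement $\caV^\perp$; the key structural input is that both are invariant under $H_{[1,L]}^{\rm per}$, which follows from the explicit hopping structure of $q_x^*q_x$.

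On the low-energy subspace $\caV$, the plan is to apply a finite-size / martingale-type or Knabe-type estimate adapted to the invariant-subspace setting (as reformulated in the earlier sections the excerpt refers to), producing the first term in the minimum, $\frac{\kappa}{6(1+2|\lambda|^2)}(1-\sqrt{3f(|\lambda|^2)})^2$; here the factor $1/6$ and the $(1-\sqrt{3f})^2$ shape are the signature of a Knabe-style bound where $f(|\lambda|^2)$ measures the norm of cross-terms between overlapping local Hamiltonians, the condition $f(|\lambda|^2)<1/3$ being exactly the threshold needed for the bound to be positive, and the $(1+2|\lambda|^2)$ denominator coming from normalizing $q_x^*q_x$ relative to its operator norm. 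On the complementary subspace $\caV^\perp$ one expects a direct gap bound — these states already carry a bulk penalty from some $n_xn_{x+2}$ or from $q_x^*q_x$ acting on a non-edge configuration — giving the explicit constant $\gamma^{\rm per} = \tfrac13\min\{1, \tfrac{\kappa}{2+2\kappa|\lambda|^2}, \tfrac{\kappa}{\kappa+2}\}$, where the three entries correspond to the three ``types'' of minimal local penalty (a bare $n_xn_{x+2}$ giving $1$, a $q$-term configuration giving $\tfrac{\kappa}{2+2\kappa|\lambda|^2}$, and a mixed configuration giving $\tfrac{\kappa}{\kappa+2}$), each divided by $3$ for the usual overlap/coarse-graining loss.

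The decomposition-and-combination step is routine once the pieces are in place: if $\caH = \caV \oplus \caV^\perp$ with both invariant, $H^{\rm per}|_{\caV}$ has its first excited energy bounded by the Knabe term, and $H^{\rm per}|_{\caV^\perp}$ is bounded below by $\gamma^{\rm per}$ (note all states in $\caV^\perp$ are excited, since the ground states live in $\caV$), then $\gap(H^{\rm per}) \geq \min$ of the two, and taking $\liminf_{L\to\infty}$ preserves this since the bounds are $L$-independent. The main obstacle — and the heart of the argument — is verifying the invariance of the decomposition together with establishing the low-energy bound on $\caV$: one must show that the coarse-grained local Hamiltonians restricted to $\caV$ still satisfy a Knabe-type inequality with the stated constant, which requires controlling exactly how the edge-like configurations interact with the bulk terms and confirming that, once the genuine edge states are removed by passing to PBC and to the $\caV$-sector, the residual spectrum near zero is governed by a frustration-free local estimate rather than by the $\mathcal{O}(|\lambda|^2)$ edge scale. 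Establishing the precise form of $f$ and checking $f(|\lambda|^2)<1/3$ is where the real work lies; everything downstream is bookkeeping.
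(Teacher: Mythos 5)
Your outline follows the same high-level strategy as the paper (split $\caH_\Lambda$ into an invariant subspace $\caV_\Lambda$ containing $\caG_\Lambda^{\rm per}$ and its complement, bound the gap on $\caV_\Lambda$ by a Knabe-type criterion whose threshold is verified by a martingale-method estimate, bound the ground state energy on $\caV_\Lambda^\perp$ directly, and take the minimum), but as written it has genuine gaps: everything that actually produces the stated constants is asserted rather than constructed. First, you never identify $\caV_\Lambda$. In the paper it is the span $\caC_\Lambda^{\rm per}$ of VMD tiling configurations; proving that it is invariant, that it contains \emph{all} periodic ground states (Lemma~\ref{lem:gss}), and that restrictions of subvolume Hamiltonians to it are isospectral to restrictions on smaller tiling spaces (Lemma~\ref{lem:isospectral}, which is what lets $L$-independent constants emerge) requires the tiling/replacement-rule machinery and the characterization of $\ran(\sigma_\Lambda^{\rm per})$ in Proposition~\ref{prop:tiling_configs}. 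Your description of the ground states as ``frozen'' configurations reachable by local moves is not accurate: individual configurations are generically \emph{not} in $\ker(q_x)$; the ground states are the dressed superpositions $\sum_{T\leftrightarrow R}\lambda^{n_D(T)}\ket{T}$, and without the precise tiling characterization you cannot show that $\caV_\Lambda^\perp$ consists only of configurations carrying a quantifiable bulk penalty, which is exactly the point of the construction.

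Second, the quantitative heart is missing and partly misattributed. The function $f$ is not ``the norm of cross-terms between overlapping local Hamiltonians'' in a Knabe bound, and $f(|\lambda|^2)<1/3$ is not the positivity threshold of the finite-size criterion (that threshold is met by taking the coarse-graining parameter $n$ large). Rather, $f$ bounds $\epsilon^2=\|G_X(G_{\Lambda'}-G_\Lambda)\|^2_{\caC_\Lambda}$ in the martingale method applied to the OBC Hamiltonian restricted to the tiling space; $f<1/3$ is the condition $\epsilon<1/\sqrt{\ell}$ with $\ell=3$, and computing $f_n$ requires the explicit orthogonal basis $\{\psi_\Lambda(R),\eta_\Lambda(R)\}$ of $\caC_\Lambda(R)\cap(\caG_{\Lambda'}\otimes\caH_{\Lambda\setminus\Lambda'})$ from Lemma~\ref{lem:ob_tilings} and the recursions \eqref{eq:phi_1}--\eqref{eq:phi_3} (Theorem~\ref{thm:MM_FQHE}); this uniform-in-$L$ OBC gap is then fed into the Knabe threshold (Theorem~\ref{thm:Knabe_FQHE}), producing the factor $\kappa/(6(1+2|\lambda|^2))$. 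Similarly, the $1/3$ in $\gamma^{\rm per}$ is not a ``coarse-graining loss'': it comes from partitioning the configurations spanning $(\caC_\Lambda^{\rm per})^\perp$ into the three classes $\caS_\Lambda^{(1)},\caS_\Lambda^{(2)},\caS_\Lambda^{(3)}$ and averaging three separate Cauchy--Schwarz estimates, where the constants $\kappa/(2+2\kappa|\lambda|^2)$ and $\kappa/(\kappa+2)$ require connecting each such configuration to an electrostatically charged one by one or two hops with injectivity/disjointness of the chosen hop data (Theorem~\ref{thm:GSE}). Since you defer all of these steps as ``where the real work lies,'' the proposal is a strategy outline consistent with the paper, not a proof of the stated bound.
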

The function $f(r)$, which is defined in Section~\ref{sec:E_1}, was analyzed in \cite[Appendix A]{Nachtergaele2021}, where it was shown that $f(|\lambda|^2)<1/3$ for all $|\lambda|<5.3$. This range includes the entire physical regime.

While we focus on the truncated $1/3$-filled model on the cylinder, this strategy has also been successfully applied to the analogously truncated $1/3$-filled model on the torus \cite{Warzel:2022}, as well as a truncation of the $1/2$-filled model on the cylinder \cite{Warzel:2023}, the latter of which is of interest for studying rapidly rotating Bose gases \cite{Lewin2009, Seiringer2020}.

In Section~\ref{sec:gap_strategy}, we explain the bulk gap strategy and prove the two spectral gap techniques used in its application to the Haldane pseudopotential. In contrast to previous statements of these results, e.g. \cite{Knabe1988, Nachtergaele1996}, we formulate them in an invariant subspace context to illustrate the necessary conditions for using them with the bulk gap strategy. In Section~\ref{sec:subspaces}, we identify the invariant subspaces of the truncated Haldane pseudopotential used for the bulk gap strategy, and discuss some of their important properties. In Section~\ref{sec:gap_proof} the bulk gap strategy is applied to prove Theorem~\ref{thm:main_result}. We conclude with a discussion of the edge states. Throughout this text, we use $H_\Lambda^{\rm per}$ and $H_\Lambda$ to denote a local Hamiltonian with PBC and OBC, respectively.

\section{Spectral gap techniques and the bulk gap strategy} \label{sec:gap_strategy}

\subsection{The invariant subspace approach} 

 The main idea underlying all spectral gap methods for quantum spin systems comes from the intuition that low-lying bulk excitations of a short-ranged model should be localized to finite regions of space. Roughly, if one determines the largest size $M$ that a region needs to be to effectively localize such an excitation, then the spectral gap of the Hamiltonian associated to a region of size $M$ can be used to produce a lower bound on the spectral gap of the Hamiltonian associated with any (sufficiently) larger region $\Lambda$. The type of estimate this produces is approximately of the form:
\begin{equation}
\gap(H_\Lambda^{\#}) \gtrapprox C^{\#} \inf_{\substack{X\subseteq \Lambda : \\|X| \leq M}}\gap(H_X), \quad \#\in \{{\rm per}, \, \cdot\}
\end{equation}
where $C^\#>0$ is independent of $\Lambda,$ but depends on $M.$ Rigorous statements will be made precise and explicit in Sections~\ref{sec:fsc}-\ref{sec:MM}. As the Hamiltonian for the subregion where an excitation is localized naturally has OBC, if $H_X$ has low-lying edge modes, the lower bounds produced for $H_\Lambda^{\rm per}$ will reflect these energies, even though they do not persist to the model with PBC.

The idea behind the invariant subspace strategy is to separate the ground states and edge states into different invariant subspaces, so that the edge modes will not be detected in the gap estimates. The trade off is that one needs to also produce a lower bound on a ground state energy. To be more explicit, if $\cG_\Lambda^{\rm per}$ is the ground state space of $H_\Lambda^{\rm per}$, and $\cE_{\Lambda}$ is the space of edge states associated with $H_\Lambda$, one wants to decompose the Hilbert space as
\[
\cH_\Lambda = \caV_\Lambda \oplus \caV_\Lambda^\perp, \quad H_\Lambda^{\rm per}\caV_\Lambda \subseteq \caV_\Lambda, \quad \cG_\Lambda^{\rm per}\subseteq \caV_\Lambda, \quad \cE_{X}\otimes \cH_{\Lambda\setminus X}\subseteq \caV_\Lambda^\perp, \quad \forall X\subseteq \Lambda.
\]
In the case that the ground state energy of $H_\Lambda^{\rm per}$ is zero, the spectral gap of $H_\Lambda^{\rm per}$ is then
\begin{equation}\label{eq:bulk_gap}
    \gap(H_\Lambda^{\rm per}) = \min\{E_{1}(\caV_\Lambda), \, E_0(\caV_\Lambda^\perp)\}
\end{equation}
where
\begin{equation}\label{E_defs}
  E_1(\caV_\Lambda) = \inf_{0\neq \psi \in \caV_\Lambda\cap (\caG_\Lambda^{\rm per})^\perp}\frac{\braket{\psi}{H_\Lambda^{\rm per}\psi}}{\|\psi\|^2}, \quad E_0(\caV_\Lambda^\perp) = \inf_{0\neq \psi \in \caV_\Lambda^\perp}\frac{\braket{\psi}{H_\Lambda^{\rm per}\psi}}{\|\psi\|^2}.  
\end{equation}
The first quantity, $E_1(\caV_\Lambda)$, is the spectral gap of $H_\Lambda^{\rm per}\restriction_{\caV_\Lambda}$, and applying gap methods to this restricted Hamiltonian will produce an estimate that is unimpeded by the edge modes. The second quantity, $E_0(\caV_\Lambda^\perp)$, is the ground state energy of $H_\Lambda^{\rm per}\restriction_{\caV_\Lambda^\perp}$. As long as one determines a lower bound on this quantity that is independent of the ground state energy of $H_X\restriction_{\caV_\Lambda^\perp}$ for any $X\subseteq \Lambda$, the result is a lower bound on $\gap(H_\Lambda^{\rm per})$ that is independent of the edge excitations.

Let us briefly point out that an obvious candidate for $\caV_\Lambda$ is $\caG_\Lambda^{\rm per}$ itself. However, the orthogonal complement of the ground state space will often have a complicated description that is not easy to manipulate. Taking a larger space for $\caV_\Lambda$ might yield an orthogonal complement with a simple description that is also easily seen to support the edge states. This is the case for the truncated Haldane pseudopotential, where $\caV_\Lambda^\perp$ will have an orthonormal basis of configuration states:
\[
\caV_\Lambda^{\perp} = \Span\{\ket{\mu} : \mu \in \caI_\Lambda\}, \quad \caI_\Lambda \subset \{0,1\}^{|\Lambda|}.
\]

In Section~\ref{sec:gap_methods}, we present proofs of the gap methods that are used to bound $E_1(\caV_\Lambda)$ for the truncated Haldane pseudopotential in Section~\ref{sec:E_1}. The approach used to bound $E_0(\caV_\Lambda^\perp)$ for this model uses Cauchy-Schwarz estimates to show
\[
\braket{\psi}{H_\Lambda^{\rm per}\psi} = \sum_{\mu\in \cI_\Lambda}\sum_{x\in \Lambda}\mu_{x}\mu_{x+2}|\psi(\mu)|^2+\sum_{\nu\in\{0,1\}^{|\Lambda|}}\sum_{x\in \Lambda}|\braket{\nu}{q_x\psi}|^2\geq \gamma^{\rm per} \sum_{\mu\in \caI_\Lambda}|\psi(\mu)|^2
\]
for any $\psi=\sum_{\mu\in \caI_\Lambda}\psi(\mu)\ket{\mu}\in\caV_\Lambda^\perp$. This is explained in more detail in Section~\ref{sec:E_0}.

\subsection{Spectral gap methods}\label{sec:gap_methods}

Both finite size criteria and ground state projection methods can be applied to systems with open or periodic boundary conditions \cite{Gosset2016,Lemm2019,Nachtergaele1996,Young2016}. However, finite size criteria are more amenable to PBC, while ground state projection methods, such as the martingale method, work well for models with OBC. The proof of Theorem~\ref{thm:main_result} follows this tradition - a version of Knabe's finite size criterion \cite{Knabe1988} is used to produce a $\Lambda$-independent lower bound on $E_1(\caV_\Lambda)$ for the truncated Haldane pseudopotential. Finite size criteria require that the spectral gap of a finite size Hamiltonian with OBC is sufficiently large. To complete the proof of the bulk gap, we use a martingale method \cite{Nachtergaele2018} to show that the gap of this Hamiltonian is larger than the necessarily threshold. However, these are not the only gap methods that can be used with this bulk gap strategy. For example, the technique introduced in \cite{Fannes1992} for finitely correlated states and generalized to decorated lattice models in \cite{AbdulRahman2020, Lucia:2023} can also be adapted to this invariant subspace approach.

\subsubsection{A coarse-grained finite size criterion}\label{sec:fsc}
We begin with presenting a generalized version of Knabe's finite size criterion. For this result, let $\cH$ denote an arbitrary finite-dimensional Hilbert space, and assume that $\{P_i: 1\leq i \leq N\}$ is a family of orthogonal projections on $\cH$ that satisfy the following commutation relations:
\begin{equation} \label{eq:commutator}
    [P_i,P_j] \neq 0 \implies |i-j|=1 \; \text{or}\; \{i,j\} = \{1,N\}.
\end{equation}
With these projections and for $1\leq n,k\leq N$, define Hamiltonians
\begin{equation}\label{eq:fcs_hams}
H_{N} = \sum_{i=1}^N P_i, \quad H_{n,k} =\sum_{i=k}^{n+k-1}P_i
\end{equation}
where $i \equiv i+N.$ Note that $H_N\geq 0$ as it is the sum of non-negative terms. The finite size criterion shows that if the ground state energy of $H_N$ is zero, then $\gap(H_N)$ can be bounded from below by a constant that only depends on $n$ and $\inf_k\gap(H_{n,k})$.

\begin{thm}[Generalized Knabe Bound \cite{Knabe1988}]\label{thm:Knabe} Suppose $H_N$ and $H_{n,k}$ are defined as in \eqref{eq:fcs_hams} for orthogonal projections that satisfy \eqref{eq:commutator}. If $\ker(H_N)\neq \{0\}$, then for any integer $1<n\leq N/2$,
\begin{equation}\label{eq:Knabe_bound}
  \gap(H_N) \geq \frac{n}{n-1}\left(\min_{1\leq k \leq N}\gap(H_{n,k})-\frac{1}{n}\right).
\end{equation}
\end{thm}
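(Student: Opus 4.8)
The plan is to follow Knabe's original operator-squaring argument, adapted to the coarse-grained / periodic setting. The key observation is that $H_N$ and the local Hamiltonians $H_{n,k}$ are all built from the same projections $P_i$, so one can compare $H_N^2$ with an average of the $H_{n,k}^2$ and use the finite-size gap hypothesis locally. Write $\gamma_n = \min_{1\le k\le N}\gap(H_{n,k})$ and $G = \ker(H_N)$, with $P_G$ the orthogonal projection onto $G$. Since each $H_{n,k}\ge 0$ and $H_{n,k}$ annihilates $G$ (because $P_i G = 0$ for every $i$, as $H_N\psi = 0$ and $\psi\in G$ forces $\langle\psi, P_i\psi\rangle = 0$ hence $P_i\psi = 0$), the spectral gap hypothesis gives the operator inequality $H_{n,k}^2 \ge \gamma_n H_{n,k}$ on all of $\cH$ — indeed $H_{n,k}$ has eigenvalue $0$ on $G$ and eigenvalues $\ge \gamma_n$ on $G^\perp\cap\overline{\Ran H_{n,k}}$, and $H_{n,k}^2 - \gamma_n H_{n,k} = H_{n,k}(H_{n,k}-\gamma_n)$ is then positive semidefinite.

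Next I would sum this over all $N$ cyclic windows of length $n$:
\begin{equation}
\sum_{k=1}^N H_{n,k}^2 \ge \gamma_n \sum_{k=1}^N H_{n,k} = \gamma_n \, n \, H_N,
\end{equation}
where the last equality uses that each $P_i$ appears in exactly $n$ of the windows $H_{n,k}$ (this is where periodicity, i.e. $i\equiv i+N$, is essential; without it the boundary windows would be undercounted). The heart of the argument is then to bound $\sum_{k=1}^N H_{n,k}^2$ from above by something proportional to $H_N^2$ plus a correction proportional to $H_N$. Expanding, $\sum_k H_{n,k}^2 = \sum_k \sum_{i,j \in I_k} P_iP_j$ where $I_k = \{k,\dots,k+n-1\}$ (indices mod $N$). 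Splitting into diagonal terms $i=j$ and off-diagonal terms, and using $P_i^2 = P_i$, the diagonal contribution is $n H_N$. For the off-diagonal part one compares with $H_N^2 = \sum_{i,j} P_iP_j$: each ordered pair $(i,j)$ with $i\ne j$ that lies in a common window of length $n$ is counted with some multiplicity $m_{ij} \le n$, and — crucially — by the commutation hypothesis \eqref{eq:commutator}, any pair with $|i-j|\ge 2$ (and not $\{1,N\}$) commutes, so $P_iP_j + P_jP_i = P_jP_i + P_iP_j$ can be reorganized; the standard trick is that $\sum_k H_{n,k}^2 \le (n-1) H_N^2 + (\text{something})\, H_N$. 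Getting the precise constant $\frac{n-1}{n}$ in front of $H_N^2$ and the $-\frac1n$ shift requires carefully tracking these multiplicities.

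Putting the two inequalities together yields $\gamma_n n H_N \le (n-1)H_N^2 + n H_N$, i.e. on $G^\perp$ (where $H_N$ is invertible with smallest eigenvalue $\gap(H_N)$) one divides by $H_N$ to get $\gamma_n n \le (n-1)\gap(H_N) + n$, which rearranges to \eqref{eq:Knabe_bound}. The condition $n \le N/2$ enters to ensure the combinatorial multiplicity bounds used in the off-diagonal estimate are valid (so that, e.g., a pair $\{i,j\}$ is not "wrapped around" and counted twice in a single window). I expect the main obstacle to be the bookkeeping in the off-diagonal term: getting the sharp coefficient requires exploiting \eqref{eq:commutator} to pair up $P_iP_j$ with $P_jP_i$ for non-adjacent indices and to control the adjacent and "wrap-around" pairs $\{1,N\}$ separately, and verifying that the resulting estimate has exactly the constant $(n-1)/n$ rather than something weaker. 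The rest — the operator inequality $H_{n,k}^2\ge\gamma_n H_{n,k}$ and the clean counting $\sum_k H_{n,k} = n H_N$ — is routine.
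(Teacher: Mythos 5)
Your skeleton is the same as the paper's (Knabe's squaring argument: $H_{n,k}^2\geq \gamma_n H_{n,k}$, $\sum_{k=1}^N H_{n,k}=nH_N$, then compare $\sum_k H_{n,k}^2$ with $H_N^2$), and the two inequalities you label routine are indeed correct. The problem is that the step you defer --- the upper bound on $\sum_k H_{n,k}^2$ --- is the entire content of the theorem, and the placeholder you write for it is wrong. The correct bookkeeping: the anticommutator $\{P_i,P_{i+m}\}$ appears in $\sum_k H_{n,k}^2$ with multiplicity exactly $n-m$ for $1\leq m\leq n-1$ and multiplicity $0$ for $m\geq n$, whereas in $H_N^2=H_N+\sum_{m=1}^{\floor{N/2}}\sum_i\{P_i,P_{i+m}\}$ each appears once. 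The mechanism that reconciles these counts is not ``pairing up $P_iP_j$ with $P_jP_i$'' but the positivity $\{P_i,P_{i+m}\}=2P_iP_{i+m}\geq 0$ for $m\geq 2$, which holds precisely because \eqref{eq:commutator} forces $P_i$ and $P_{i+m}$ to commute, so their product is itself an orthogonal projection. This lets you inflate each multiplicity $n-m$ (for $m\geq 2$) up to $n-1$ and add in the wholly absent terms with $n\leq m\leq\floor{N/2}$, with the inequality going the right way; the $m=1$ anticommutators, which need not be positive, already carry the exact coefficient $n-1$. The condition $n\leq N/2$ ensures $n-1\leq\floor{N/2}$, so every anticommutator occurring on the left also occurs in the expansion of $H_N^2$. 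The outcome is $\sum_k H_{n,k}^2\leq nH_N+(n-1)\bigl(H_N^2-H_N\bigr)=(n-1)H_N^2+H_N$.

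Your displayed combination, $\gamma_n nH_N\leq(n-1)H_N^2+nH_N$, has $+nH_N$ where the correct bound has $+H_N$: you forgot to subtract the diagonal contribution $(n-1)H_N$ when reassembling the off-diagonal sum into $(n-1)H_N^2$. As written, your inequality rearranges to $\gap(H_N)\geq\frac{n}{n-1}\bigl(\gamma_n-1\bigr)$, not to \eqref{eq:Knabe_bound}; that bound is strictly weaker and typically vacuous, since for the applications at hand $\gamma_n$ need not exceed $1$. So the proposal identifies the right route but neither isolates the positivity of the commuting anticommutators (the only place \eqref{eq:commutator} is actually used) nor produces the constant $-1/n$ that makes the criterion useful.
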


\begin{proof}
The result will follow from using that the Hamiltonians $H_{n,k}$, $1\leq k \leq N$, satisfy
    \begin{equation} \label{Hnk_properties}
            H_{n,k}^2 \geq \gap(H_{n,k})H_{n,k}, \qquad \sum_{k=1}^{N}H_{n,k} = n H_N 
    \end{equation}
to show that $H_N^2 \geq \gamma_nH_N$ where $\gamma_n$ is the lower bound in \eqref{eq:Knabe_bound}. As addition is taken modulo $N$, $H_N^2$ and $H_{n,k}^2$ can be rewritten as
\begin{align}
     H_N^2  &= \sum_{i=1}^N P_i + \sum_{i<j}\{P_i,P_j\} = H_N + \sum_{m=1}^{\lfloor N/2\rfloor}\sum_{i=1}^{N}\{P_i,P_{i+m}\},\label{eq:HN} \\
     H_{n,k}^2 &= H_{n,k}+\sum_{m=1}^{n-1}\sum_{i=k}^{k+n-m-1}\{P_i,P_{i+m}\}.\label{eq:Hnk}
\end{align}

Notice that $n-m = \big|\big\{k \, |\, i,i+m\in \{k, k+1, \ldots, k+n-1\}\big\}\big|$ for any $1\leq i \leq N$ and $m<n$. As a consequence, summing \eqref{eq:Hnk} over all possible $k$, and applying \eqref{Hnk_properties} produces
\begin{align}
    \sum_{k=1}^N H_{n,k}^2 & = nH_N + \sum_{m=1}^{n-1}\left(\sum_{k=1}^N\sum_{i=k}^{k+n-m-1}\{P_i,P_{i+m}\}\right) = nH_N+\sum_{m=1}^{n-1}(n-m)\sum_{i=1}^{N}\{P_i,P_{i+m}\} \label{summed_Hnk}.
\end{align}
Since $\{P_i, P_{i+m}\}\geq 0$ whenever $m\geq 2$, combining \eqref{summed_Hnk} with \eqref{eq:HN} and using that $n\leq \lfloor N/2 \rfloor$ give
\[
\sum_{k=1}^NH_{n,k}^2 \leq nH_N+(n-1)\sum_{m=1}^{\lfloor N/2\rfloor}\sum_{i=1}^N\{P_i,P_{i+m}\} = (n-1)H_N^2+H_N.
\]

On the other hand, \eqref{Hnk_properties} implies
\[
\sum_{k=1}^NH_{n,k}^2 \geq \min_{1\leq k \leq N}\gap(H_{n,k})\sum_{k=1}^NH_{n,k} = n\left(\min_{1\leq k \leq N}\gap(H_{n,k})\right) H_N,
\]
which when coupled with the previous operator inequality yields the desired bound:
\[
(n-1)H_N^2 \geq n\left(\min_{1\leq k \leq N}\gap(H_{n,k})-\frac{1}{n}\right)H_N.
\]
\end{proof}

The previous statement held for any finite sequence of positive operators on a finite-dimensional Hilbert space defined as in \eqref{eq:commutator}-\eqref{eq:fcs_hams}. We now turn to results that produce a lower bound on the spectral gap of a quantum spin Hamiltonian that is restricted to an invariant subspace. However, we first set some notation that will be used throughout the rest of the paper.

Let $\Gamma$ denote a lattice and $\cH_\Lambda = \bigotimes_{x\in\Lambda}\bC^{n_x}$ be the Hilbert space associated with a quantum spin model on any finite subset $\Lambda \subseteq \Gamma$. All existing spectral gap methods require that the model is frustration-free, meaning that the ground states of any local Hamiltonian simultaneously minimizes the energy of all of its interaction terms. After shifting each interaction term, $h_X$, by its ground state energy, i.e. $h_X-\min\spec(h_X)\geq 0$, a model is \emph{frustration-free} if and only if the kernel of every local Hamiltonian is nontrivial. For frustration-free models, let $G_\Lambda^{\#}\in\caB(\caH_\Lambda)$ be the the orthogonal projection onto the ground state space
\begin{equation}
    \caG_\Lambda^{\#}=\ker(H_\Lambda^{\#})\subseteq \cH_\Lambda, \quad \#\in\{{\rm per},\, \cdot \,\}\,.
\end{equation}
Recall that for any $\Lambda'\subseteq \Lambda$ there is a natural embedding $\cB(\cH_{\Lambda'})\ni A \mapsto A\otimes \idty_{\Lambda\setminus\Lambda'}\in \cB(\cH_\Lambda)$. We slightly abuse notation and let $G_{\Lambda'}\in\caB(\caH_\Lambda)$ denote the orthogonal projection onto $\cG_{\Lambda'}\otimes \cH_{\Lambda\setminus \Lambda'}$ for all $\Lambda'\subseteq\Lambda.$ Since the ground states of frustration-free models are the common ground states of all interaction terms, it follows that for all $\Lambda''\subseteq \Lambda'\subseteq \Lambda$:
\begin{equation}\label{eq:ff_projections}
  G_{\Lambda'}G_{\Lambda''} = G_{\Lambda''}G_{\Lambda'} = G_{\Lambda'} \quad \text{and}\quad G_{\Lambda'}G_{\Lambda}^{\rm per} = G_{\Lambda}^{\rm per}G_{\Lambda'} = G_{\Lambda}^{\rm per}\, .  
\end{equation}
Finally, for $\#\in\{{\rm per},\, \cdot \, \}$, we say that $H_\Lambda^{\#}$ is \emph{frustration-free on an invariant subspace} $\caV\subseteq \caH_\Lambda$, if it is a frustration-free Hamiltonian and
\begin{equation}\label{eq:FFonV}
    H_\Lambda^{\#}\caV \subseteq \caV, \qquad \caV \cap \caG_\Lambda^{\#} \neq \{0\}.
\end{equation}

The restriction of any operator $A\in\cB(\cH_\Lambda)$ to a subspace $\caV\subseteq \caH_\Lambda$ produces a linear map $A\restriction_{\caV}\in\cB(\caV,\caH_\Lambda)$, and we denote by $\|A\|_{\caV}$ the norm of the restriction:
\begin{equation}\label{eq:restricted_norm}
    \|A\|_{\caV} = \sup_{0\neq \psi \in \caV_\Lambda}\frac{\|A\psi\|}{\|\psi\|} \quad \text{for any} \quad A\in \caB(\caH_\Lambda),
\end{equation}
and set $\|A\|_{\caV}=0$ if $\caV=\{0\}$. When $A$ is invariant under $\caV$, the restriction is equal to
\begin{equation}\label{eq:invariant_op}
A\restriction_{\caV} = P_\caV AP_{\caV} = AP_{\caV}= P_\caV A    
\end{equation}
where $P_{\caV}\in \cB(\caH_\Lambda)$ is the orthogonal projection onto $\caV$. In this case, the operator can be block-diagonalized as $A=A\restriction_{\caV}\oplus A\restriction_{\caV^\perp}$ and $\spec(A) = \spec(A\restriction_{\caV})\cup \spec(A\restriction_{\caV^\perp})$. Moreover, if $A\geq 0$ and $\ker(A\restriction_\caV)\neq\{0\}$, the spectral gap of $A\restriction_\caV$ is its the smallest positive eigenvalue in $\caV$:
\begin{equation}\label{eq:subspace_gap}
\gap(A\restriction_{\caV}):= \inf_{\substack{0\neq \psi \in \caV \\ \psi \in \ker(A\restriction_\caV)^\perp}}\frac{\braket{\psi}{A\psi}}{\|\psi\|^2}> 0\,
\end{equation}
where by convention $\gap(A\restriction_{\caV}) = \infty$ if $\caV\subseteq \ker(A).$

We now show how the generalized Knabe criterion, Theorem~\ref{thm:Knabe}, can be combined with a coarse-graining procedure to produce a lower bound on the spectral gap of a local Hamiltonian for a finite-range, frustration-free quantum spin chain. While finite size criteria exist for multi-dimensional lattices \cite{Gosset2016,Lemm2019,Lemm2020}, we focus on quantum spin chains as this is the case of interest for Theorem~\ref{thm:main_result}.

\begin{cor}\label{cor:fsc}
Let $\Lambda\subseteq \bZ$ be a finite interval with the ring geometry, and assume that $H_\Lambda^{\rm per}$ is frustration-free on an invariant subspace $\caV_\Lambda\subseteq\caH_\Lambda$. Moreover, suppose $\Lambda = \cup_{i=1}^N X_i$ where the subintervals satisfy the following for all $1\leq i,j\leq N$ and some $1< n\leq \lfloor N/2\rfloor$
\begin{enumerate}
    \item $\caV_\Lambda$ is invariant under each $H_{X_i}$ and $H_{\Lambda_{n,k}}$ where $\Lambda_{n,k} = \bigcup_{i=k}^{n+k-1}X_i.$
    \item If $i<j$ and $X_i\cap X_j \neq \emptyset$ then $j=i+1$, or $j=N$ and $i=1$.
    \item Every interaction term of $H_\Lambda^{\rm per}$ is supported on at least one $X_i$.
\end{enumerate}
Let $\gamma = \inf_{1\leq i \leq N}\gap(H_{X_i}\restriction_{\caV_\Lambda}),$ and  $C = \sup_{1\leq i \leq N} \|H_{X_i}\|_{\caV_\Lambda}.$ Then,
\begin{equation} \label{coarse_gap}
    \gap(H_\Lambda^{\rm per}\restriction_{\caV_\Lambda})\geq \frac{\gamma n}{2C(n-1)}\left(\min_{1\leq k \leq N} \gap(H_{\Lambda_{n,k}}\restriction_{\caV_\Lambda})-\frac{C}{n}\right)\,.
\end{equation}
\end{cor}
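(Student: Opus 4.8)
The plan is to deduce the corollary from the Generalized Knabe Bound (Theorem~\ref{thm:Knabe}) by choosing the right family of projections on the Hilbert space $\caV_\Lambda$, viewed as an abstract finite-dimensional Hilbert space. The natural choice is to set $P_i$ to be the orthogonal projection onto $\caG_{X_i}\cap\caV_\Lambda$ inside $\caV_\Lambda$, equivalently $G_{X_i}\restriction_{\caV_\Lambda}$, for $1\le i\le N$. Condition (ii) guarantees that $X_i\cap X_j=\emptyset$ whenever $|i-j|\ge 2$ and $\{i,j\}\ne\{1,N\}$; since $G_{X_i}$ and $G_{X_j}$ then act on disjoint sets of sites they commute, and so do their restrictions to the invariant subspace $\caV_\Lambda$. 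Hence the commutation hypothesis \eqref{eq:commutator} holds for the $P_i$. Because $H_\Lambda^{\rm per}$ is frustration-free on $\caV_\Lambda$, we have $\caV_\Lambda\cap\caG_\Lambda^{\rm per}\neq\{0\}$, and by \eqref{eq:ff_projections} this common kernel lies in $\ker(P_i)$ for every $i$, so $\ker\big(\sum_i P_i\big)\neq\{0\}$ as required by Theorem~\ref{thm:Knabe}.

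Next I would relate the abstract Knabe Hamiltonians $\widetilde H_N=\sum_i P_i$ and $\widetilde H_{n,k}=\sum_{i=k}^{n+k-1}P_i$ to the actual quantum spin Hamiltonians restricted to $\caV_\Lambda$ by two comparison estimates valid on $\caV_\Lambda$. The first is a lower bound $H_{X_i}\restriction_{\caV_\Lambda}\ge \gamma\,(\idty-G_{X_i})\restriction_{\caV_\Lambda}=\gamma P_i$, which is just the definition of $\gamma=\inf_i\gap(H_{X_i}\restriction_{\caV_\Lambda})$ together with frustration-freeness of $H_{X_i}$ on $\caV_\Lambda$ (this uses part of condition (i) to make sense of the restriction). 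The second is an upper bound $H_{X_i}\restriction_{\caV_\Lambda}\le C\,(\idty-G_{X_i})\restriction_{\caV_\Lambda}=C P_i$, which follows from $\|H_{X_i}\|_{\caV_\Lambda}\le C$ and again frustration-freeness, since $H_{X_i}$ annihilates $G_{X_i}\caV_\Lambda$. Summing the first estimate over the $n$ indices in a window, and using condition (iii) to ensure that every interaction term of $H_{\Lambda_{n,k}}^{\rm per}$ (which, away from the seam, coincides with $H_{\Lambda_{n,k}}$ with OBC) is counted at least once among the $H_{X_i}$ with $k\le i\le n+k-1$, gives $H_{\Lambda_{n,k}}\restriction_{\caV_\Lambda}\le \sum_{i=k}^{n+k-1}H_{X_i}\restriction_{\caV_\Lambda}\le C\,\widetilde H_{n,k}$, hence $\gap(\widetilde H_{n,k})\ge C^{-1}\gap(H_{\Lambda_{n,k}}\restriction_{\caV_\Lambda})$ by a standard monotonicity-of-spectral-gap argument on the common kernel. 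Summing the second estimate over all $i$, and using condition (iii) once more so that every term of $H_\Lambda^{\rm per}$ is captured, yields $H_\Lambda^{\rm per}\restriction_{\caV_\Lambda}\le \sum_i H_{X_i}\restriction_{\caV_\Lambda}\le C\,\widetilde H_N$, but I actually want the estimate in the other direction for $H_\Lambda^{\rm per}$: combining $H_\Lambda^{\rm per}\restriction_{\caV_\Lambda}\ge \gamma\,\widetilde H_N$... — here I need to be careful. The clean route is: apply Theorem~\ref{thm:Knabe} to the $P_i$ to get $\gap(\widetilde H_N)\ge \frac{n}{n-1}\big(\min_k\gap(\widetilde H_{n,k})-\frac1n\big)$, then convert: $\gap(H_\Lambda^{\rm per}\restriction_{\caV_\Lambda})\ge \gamma\,\gap(\widetilde H_N)$ (from $H_\Lambda^{\rm per}\restriction_{\caV_\Lambda}\ge \gamma\widetilde H_N$ plus equality of kernels), and $\gap(\widetilde H_{n,k})\ge C^{-1}\gap(H_{\Lambda_{n,k}}\restriction_{\caV_\Lambda})$. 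Chaining these,
\begin{equation}
\gap(H_\Lambda^{\rm per}\restriction_{\caV_\Lambda})\ \ge\ \gamma\,\frac{n}{n-1}\left(\frac{1}{C}\min_{1\le k\le N}\gap(H_{\Lambda_{n,k}}\restriction_{\caV_\Lambda})-\frac1n\right)\ =\ \frac{\gamma n}{C(n-1)}\left(\min_{1\le k\le N}\gap(H_{\Lambda_{n,k}}\restriction_{\caV_\Lambda})-\frac Cn\right),
\end{equation}
which is twice the claimed bound; the extra factor of $2$ in the denominator of \eqref{coarse_gap} is presumably a safety slack absorbing the mismatch between the periodic Hamiltonian $H_{\Lambda_{n,k}}^{\rm per}$ on a window (which may carry an extra seam term when $n=N/2$) and the OBC Hamiltonian $H_{\Lambda_{n,k}}$, or the fact that an interaction term straddling the global seam of $\Lambda$ might be supported on two different $X_i$'s and double-counted. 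I would track that discrepancy explicitly and insert the factor $2$ where the double-counting forces $\sum_i H_{X_i}\restriction_{\caV_\Lambda}\le 2C\,\widetilde H_N$ rather than $C\,\widetilde H_N$.

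The main obstacle, and the step that needs genuine care rather than bookkeeping, is precisely this matching between the operator inequalities on $\caV_\Lambda$ and the abstract Knabe setup: I must verify that \eqref{eq:ff_projections} survives restriction to $\caV_\Lambda$, i.e. that $P_i P_j=P_j P_i=P_j$ whenever $X_i\subseteq X_j$ or more relevantly that $\widetilde H_N$ and the quantum Hamiltonian share the same kernel inside $\caV_\Lambda$ (so that comparing spectral gaps via $A\ge cB$ with $\ker A\restriction_\caV=\ker B\restriction_\caV$ is legitimate), and that the "every interaction term supported on some $X_i$" hypothesis (iii) genuinely yields the operator domination $H_\Lambda^{\rm per}\restriction_{\caV_\Lambda}\ge \gamma\,\widetilde H_N$ with the correct constant — this requires knowing that $H_{X_i}\restriction_{\caV_\Lambda}\ge \gamma(\idty-G_{X_i})\restriction_{\caV_\Lambda}$ and that $\sum_i(\idty-G_{X_i})\ge c(\idty-G_\Lambda^{\rm per})$ need not hold termwise but does hold after the Knabe inequality is applied. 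I would also double-check the edge cases $n=\lfloor N/2\rfloor$ and the treatment of the pair $\{1,N\}$ in \eqref{eq:commutator}, since condition (ii) is stated so as to allow exactly that one wrap-around overlap, mirroring the ring geometry in Theorem~\ref{thm:Knabe}.
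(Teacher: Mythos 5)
Your overall route is the paper's: take $P_i:=(\idty-G_{X_i})\restriction_{\caV_\Lambda}$, check the Knabe commutation hypothesis from condition (ii) and locality, apply Theorem~\ref{thm:Knabe} to $\widetilde H_N=\sum_iP_i$, and convert back using the two-sided comparisons furnished by $\gamma$ and $C$. (One slip at the start: you first define $P_i$ as $G_{X_i}\restriction_{\caV_\Lambda}$, the projection onto the local ground space; with that choice $\caG_\Lambda^{\rm per}\cap\caV_\Lambda$ does \emph{not} lie in $\ker(\sum_iP_i)$ — by \eqref{eq:ff_projections} it lies in its range. What you actually use later, $(\idty-G_{X_i})\restriction_{\caV_\Lambda}$, is the correct choice.) Your concern about matching kernels is handled exactly by the two-sided operator bounds: once one has $c_1\widetilde H_N\le H_\Lambda^{\rm per}\restriction_{\caV_\Lambda}\le c_2\widetilde H_N$ (and likewise for the windows), the two positive operators have the same kernel inside $\caV_\Lambda$, and \eqref{eq:subspace_gap} converts the operator inequalities into gap inequalities.

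The genuine gap is the factor of $2$, which you leave unresolved and whose source you misplace. Your ``clean route'' asserts $H_\Lambda^{\rm per}\restriction_{\caV_\Lambda}\ge\gamma\,\widetilde H_N$, but this fails whenever an interaction term is supported in an overlap $X_i\cap X_{i+1}$ — which is the generic situation here: conditions (ii)--(iii) are designed so that consecutive blocks overlap enough for every finite-range term to fit inside some block (in the application they share three sites). What (ii)--(iii) actually give is that every interaction term appears in at least one and at most two of the $X_i$, hence $H_\Lambda^{\rm per}\restriction_{\caV_\Lambda}\le\sum_iH_{X_i}\restriction_{\caV_\Lambda}\le 2\,H_\Lambda^{\rm per}\restriction_{\caV_\Lambda}$; combined with $\gamma P_i\le H_{X_i}\restriction_{\caV_\Lambda}\le C P_i$ this yields $\tfrac{\gamma}{2}\widetilde H_N\le H_\Lambda^{\rm per}\restriction_{\caV_\Lambda}$ and $H_{\Lambda_{n,k}}\restriction_{\caV_\Lambda}\le C\widetilde H_{n,k}$, and chaining through Theorem~\ref{thm:Knabe} gives exactly \eqref{coarse_gap}. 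So the $2$ is neither a seam effect in the window Hamiltonians (these are OBC Hamiltonians on arcs, no seam term arises) nor ``safety slack''; it is the overlap double-counting in the comparison of $\sum_iH_{X_i}$ with $H_\Lambda^{\rm per}$ itself. Your proposed repair — inserting the $2$ into $\sum_iH_{X_i}\restriction_{\caV_\Lambda}\le 2C\,\widetilde H_N$ — is on the wrong inequality: that bound already holds with $C$ (termwise from $H_{X_i}\restriction_{\caV_\Lambda}\le CP_i$) and plays no role in the discrepancy; the correction belongs in the lower comparison of $H_\Lambda^{\rm per}\restriction_{\caV_\Lambda}$ against $\widetilde H_N$, where your claimed constant $\gamma$ must be replaced by $\gamma/2$.
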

 Note that the invariance assumptions trivially hold if $\caV_\Lambda =\caH_\Lambda$. Given condition (ii), $\caV_\Lambda$ will be invariant under all $H_{\Lambda_{n,k}}$ if it is invariant under $H_{X_{i}\cap X_{i+1}}$ for all $1\leq i\leq N$, where $N+1\equiv 1.$ Thus, for a translation invariant model, if $|X_i|$ and $|X_i\cap X_{i+1}|$ can be chosen independent of $\Lambda$ and $i$, then the constants $\gamma$ and $C$ can be appropriately bounded independent of $\Lambda$, and \eqref{coarse_gap} will produce a $\Lambda$-independent lower bound on $\gap(H_\Lambda^{\rm per}\restriction_{\caV_\Lambda})$.

\begin{proof}
    Let $P_{\caV_\Lambda}$ be the orthogonal projection onto the invariant subspace $\caV_\Lambda$. By the first assumption on the $X_i$, the ground state projection $G_{X_i}$ is invariant under $\caV_\Lambda$ as it is a spectral projection of $H_{X_i}$. Thus,  $P_i := (\idty - G_{X_i})\restriction_{\caV_\Lambda}$ satisfies \eqref{eq:invariant_op}, and is nonzero as $H_\Lambda^{\rm per}$ is frustration-free on $\caV_\Lambda$ and $\caG_\Lambda^{\rm per}\subseteq \caG_{X_i}\otimes \caH_{\Lambda\setminus X_i}$, see \eqref{eq:ff_projections}.
    
    Now, define $H_N = \sum_{i=1}^N P_i$ and $H_{n,k} = \sum_{i=k}^{k+n-1}P_i$. Trivially, $\gamma P_i \leq P_{\caV_\Lambda}H_{X_i}P_{\caV_\Lambda} \leq C P_i.$ The assumptions on the intervals guarantee that every interaction term is supported on at least one and at most two of the $X_i.$ Combining these observations yields 
    \[
H_\Lambda^{\rm per}\restriction_{\caV_\Lambda}\leq\sum_{i=1}^N H_{X_i}\restriction_{\caV_\Lambda} \leq 2H_\Lambda^{\rm per}\restriction_{\caV_\Lambda}, \quad     \gamma H_N \leq\sum_{i=1}^N H_{X_i}\restriction_{\caV_\Lambda} \leq CH_N.
    \]    
The analogous calculations also hold for $H_{n,k}$ and $H_{\Lambda_{n,k}}\restriction_{\caV_\Lambda}$. As a consequence, 
\[
\frac{\gamma}{2} H_N\leq H_\Lambda^{\rm per}\restriction_{\caV_\Lambda} \leq C H_N, \quad \text{and} \quad  \frac{\gamma}{2} H_{n,k}\leq H_{\Lambda_{n,k}}\restriction_{\caV_\Lambda} \leq C H_{n,k}
\]
These bounds imply that the positive operators $H_N$ and $H_\Lambda^{\rm per}\restriction_{\caV_\Lambda}$ (respectively, $H_{n,k}$ and $H_{\Lambda_{n,k}}\restriction_{\caV_\Lambda}$) have the same kernels in $\caV_\Lambda$. Thus, since  $H_\Lambda^{\rm per}$ and $H_{\Lambda_{n,k}}$ are invariant under $\caV_\Lambda$, by \eqref{eq:subspace_gap}
\begin{equation}\label{two-gap}
    \gap(H_\Lambda^{\rm per}\restriction_{\caV_\Lambda}) \geq \frac{\gamma}{2}\gap(H_N), \quad \gap(H_{\Lambda_{n,k}}\restriction_{\caV_\Lambda}) \leq C\gap (H_{n,k}).
\end{equation}

The first assumption on the intervals $X_i$ imply that the family of projections $P_i$, $i=1,\ldots, N$, satisfies the conditions of Theorem~\ref{thm:Knabe} since if $|i-j|>1$ in the ring geometry,
\[
P_iP_j = P_{\caV_\Lambda}(\idty-G_{X_i})(\idty-G_{X_j})P_{\caV_\Lambda}=P_{\caV_\Lambda}(\idty-G_{X_j})(\idty-G_{X_i})P_{\caV_\Lambda} = P_jP_i
\]
where we use \eqref{eq:invariant_op}, and that spatially separated observables commute. Thus, \eqref{eq:Knabe_bound} holds by Theorem~\ref{thm:Knabe}, which combined with \eqref{two-gap} produces \eqref{coarse_gap}.
\end{proof}

\subsubsection{The martingale method}\label{sec:MM} We now turn our attention to the the martingale method, which uses ground state projections to effectively localize low-lying excitations. We assume the setup and notation discussed after the proof of Theorem~\ref{thm:Knabe}, and show that Theorem~\ref{thm:MM} below holds if the following assumptions are satisfied.

\begin{assumption}[Martingale Method]\label{assump:MM}
Fix a finite volume $\Lambda\subseteq \Gamma$, and let $H_\Lambda$ be a frustration-free Hamiltonian and $\caV_\Lambda\subseteq \caH_\Lambda$ a subspace so that $\caV_\Lambda\cap \caG_\Lambda \neq \{0\}$. Assume there exists a finite sequence \[X_n \subseteq \Lambda, \quad 1\leq n \leq N,\] 
so that the local Hamiltonians and ground state projections associated with $X_n$ and $\Lambda_n := \bigcup_{i=1}^n X_i$ satisfy the following:
\begin{enumerate}
    \item (Uniform Local Gap) There exists $\gamma >0$ so that for all $1\leq n \leq N$, and all $\psi \in \caV_\Lambda$
    \[\gamma \braket{\psi}{(\idty-G_{X_n})\psi} \leq \braket{\psi}{H_{X_n}\psi}\,.\]
    \item (Absorbs the Interaction) There exists $d\geq 1$ so that $H_\Lambda \leq \sum_{i=1}^N H_{X_i} \leq d H_\Lambda.$
    \item (Sufficiently Local Sequence) There exists $\ell\geq 0$ so that for any $m\leq n\leq N$, 
    \[X_m \cap X_n \neq \emptyset \implies  m\in [n-\ell+1, n].\]
    \item (Approximate Local Excitations) Define $G_{\Lambda_0}=\idty$, $G_{\Lambda_{N+1}}=0$, and $ E_n:=G_{\Lambda_n}-G_{\Lambda_{n+1}}$ for all $0 \leq n \leq N$. Then
    \begin{equation}
        \label{eq:epsilon}
        \epsilon := \sup_{0 \leq n \leq N-1} \|G_{X_{n+1}}\|_{E_n\caV_\Lambda} < \frac{1}{\sqrt{\ell}}.
    \end{equation}
\end{enumerate}
\end{assumption}
Recall that the norm in \eqref{eq:restricted_norm} is as in \eqref{eq:epsilon}. If $\caV_\Lambda$ is invariant under $E_n$ for all $1\leq n \leq N$, then
\begin{equation}\label{eq:invariant_relation}
    \|G_{X_n}\|_{E_n\caV_\Lambda}= \|G_{X_n}E_n\|_{\caV_\Lambda}.
\end{equation}
Since both norms are zero if $E_n\caV_\Lambda = \{0\}$, the first norm always bounds the latter as
\[
 \|G_{X_n}E_n\|_{\caV_\Lambda}^2 = \sup_{0\neq \psi \in \caV_\Lambda} \frac{\|G_{X_n}E_n\psi\|^2}{\|E_n\psi\|^2+\|(\idty-E_n)\psi\|^2} \leq \sup_{0\neq \psi \in \caV_\Lambda} \frac{\|G_{X_n}E_n\psi\|^2}{\|E_n\psi\|^2} = \|G_{X_n}\|_{E_n\caV_\Lambda}^2. 
\]
The opposite inequality holds in the case that $\caV_\Lambda$ is invariant under $E_n$ since then $E_n\caV_\Lambda \subseteq \caV_\Lambda$. 

For $\caV_\Lambda=\caH_\Lambda$, \eqref{eq:invariant_relation} trivially holds and produces the original form of Assumption (iv) from \cite{Nachtergaele1996}. It will also hold for the application in this work as $\caV_\Lambda$ will be invariant under all of the interaction terms that comprise $H_\Lambda$. However, the following result holds regardless of any such invariance.

\begin{thm}[The Martingale Method \cite{Nachtergaele1996,Nachtergaele2018}]\label{thm:MM} Let $H_\Lambda$ be the local Hamiltonian associated with a frustration-free interaction. If the conditions of Assumption~\ref{assump:MM} are satisfied, then 
\begin{equation}\label{eq:gap_lb}
\braket{\psi}{H_\Lambda \psi}\geq \frac{\gamma}{d}(1-\epsilon\sqrt{\ell})^2\|\psi\|^2,\quad \forall \,\psi\in\caG_\Lambda^\perp\cap\caV_\Lambda.
\end{equation}
\end{thm}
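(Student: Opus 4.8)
The plan is to follow the classical martingale argument of Nachtergaele, adapted to the invariant-subspace setting. The starting point is the telescoping decomposition of the identity into the mutually orthogonal projections $E_n = G_{\Lambda_n} - G_{\Lambda_{n+1}}$, $0 \le n \le N$, which gives $\sum_n E_n = \idty$ and $E_n E_m = \delta_{nm} E_n$ (using the nesting relations $G_{\Lambda_{n+1}} \le G_{\Lambda_n}$, which follow from frustration-freeness and \eqref{eq:ff_projections} since $\Lambda_n \subseteq \Lambda_{n+1}$). For $\psi \in \caG_\Lambda^\perp \cap \caV_\Lambda$ one has $G_{\Lambda_{N+1}}\psi = G_\Lambda \psi = 0$ and $G_{\Lambda_0}\psi = \psi$, so $\psi = \sum_{n=0}^{N-1} E_n \psi$. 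Expanding $\braket{\psi}{H_\Lambda \psi}$ via assumption (ii), $\braket{\psi}{H_\Lambda \psi} \ge \frac1d \sum_i \braket{\psi}{H_{X_i}\psi}$, and then lower-bounding each $\braket{\psi}{H_{X_i}\psi} \ge \gamma \braket{\psi}{(\idty - G_{X_i})\psi}$ by the uniform local gap (i), reduces the problem to bounding $\sum_i \braket{\psi}{(\idty - G_{X_i})\psi}$ from below by $(1 - \epsilon\sqrt\ell)^2 \|\psi\|^2$.

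The key algebraic step is to insert the resolution $\psi = \sum_n E_n\psi$ into $\braket{\psi}{(\idty - G_{X_i})\psi}$ and use the orthogonality of the $E_n$ together with the crucial observation that $G_{X_{i}} E_n = 0$ whenever $X_{i} \cap X_n$ can be separated — more precisely, $G_{\Lambda_n}$ contains $\idty - G_{X_i}$'s ``view'' once $X_i \subseteq \Lambda_n$, so $(\idty - G_{X_i}) G_{\Lambda_n} = 0$ for $n \ge i$; hence $(\idty - G_{X_i}) E_n$ is supported on $n < i$, and by the locality assumption (iii) only the $E_n$ with $n \in [i - \ell, i-1]$ survive after also accounting for which $X_n$ overlap $X_{i}$. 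The standard manoeuvre is then: write $(\idty - G_{X_i})\psi = (\idty - G_{X_i}) G_{\Lambda_{i-1}}\psi$, note $G_{\Lambda_{i-1}}\psi = \sum_{n \ge i-1} E_n \psi$ telescopes, and estimate the ``error'' $\|G_{X_i} E_n \psi\|$ for $n$ in the relevant window by $\epsilon \|E_n\psi\|$ using definition \eqref{eq:epsilon} of $\epsilon$ (valid because the norm in \eqref{eq:epsilon} is exactly the restricted norm $\|G_{X_{n+1}}\|_{E_n\caV_\Lambda}$, and $E_n\psi \in E_n\caV_\Lambda$ since $\psi \in \caV_\Lambda$ — no invariance of $\caV_\Lambda$ under $E_n$ is needed here, only that $\psi \in \caV_\Lambda$). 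One obtains an operator/vector inequality of Cauchy–Schwarz type showing $\sum_i \braket{\psi}{(\idty - G_{X_i})\psi} \ge \big(\sum_n \|E_n\psi\|\cdot(\text{something}) \big)$, which after the standard rearrangement collapses to $(1 - \epsilon\sqrt\ell)^2 \sum_n \|E_n\psi\|^2 = (1-\epsilon\sqrt\ell)^2\|\psi\|^2$.

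Concretely, the heart of the estimate is the bound
\[
\Big\| \sum_i G_{X_i} E_{n(i)}\psi \Big\| \le \epsilon \sqrt{\ell}\, \|\psi\|,
\]
obtained by grouping the $X_i$ into $\ell$ families within each of which the relevant $X_i$ are pairwise non-overlapping (so the corresponding $G_{X_i}$ act on disjoint regions and the contributions are orthogonal), applying the triangle inequality across the $\ell$ families, and within each family using $\sum \|G_{X_i} E_n\psi\|^2 \le \epsilon^2 \sum \|E_n\psi\|^2 \le \epsilon^2 \|\psi\|^2$. Combining this with $\psi = \sum_i (\idty - G_{X_i})E_{n}\psi + (\text{telescoped }G\text{ terms})$ and a final triangle inequality gives $\|\psi\| \le \big(\sum_i \|(\idty - G_{X_i})\psi\|^2\big)^{1/2} + \epsilon\sqrt\ell\,\|\psi\|$ — or more precisely the squared form $\sum_i \braket{\psi}{(\idty - G_{X_i})\psi} \ge (1 - \epsilon\sqrt\ell)^2\|\psi\|^2$ — and chaining back through (i) and (ii) yields \eqref{eq:gap_lb}.

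The main obstacle is the combinatorial-geometric bookkeeping in the locality step: making precise that $(\idty - G_{X_i})E_n$ vanishes unless $n$ lies in a window of length $\ell$ below $i$, and organising those surviving indices into $\ell$ groups of pairwise spatially-separated $X_i$ so that the cross terms vanish and the $\sqrt\ell$ (rather than $\ell$) appears. Everything else — the telescoping, the orthogonality of the $E_n$, the two Cauchy–Schwarz applications — is routine once that structure is set up; the only subtlety specific to this paper is to keep track that all vectors stay in $\caV_\Lambda$ so that \eqref{eq:epsilon} applies, which holds automatically since $\psi \in \caV_\Lambda$ and each $E_n\psi$ is then in $E_n\caV_\Lambda$.
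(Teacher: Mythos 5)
Your overall skeleton (telescoping $\psi=\sum_{n}E_n\psi$ with the mutually orthogonal $E_n$, reducing via assumptions (i)--(ii) to a lower bound on $\sum_i\braket{\psi}{(\idty-G_{X_i})\psi}$, and the remark that only $\psi\in\caV_\Lambda$ --- not invariance of $\caV_\Lambda$ under $E_n$ --- is needed to invoke \eqref{eq:epsilon}) matches the paper. But the central quantitative step is gapped in two places. First, the identity $(\idty-G_{X_i})\psi=(\idty-G_{X_i})G_{\Lambda_{i-1}}\psi$ is false: since $(\idty-G_{X_i})E_n=0$ exactly for $n\geq i$, what is true is $(\idty-G_{X_i})\psi=(\idty-G_{X_i})(\idty-G_{\Lambda_i})\psi$, whereas $(\idty-G_{X_i})G_{\Lambda_{i-1}}\psi$ collapses to the single term $(\idty-G_{X_i})E_{i-1}\psi$. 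Second, and more seriously, the ``heart of the estimate'' $\bigl\Vert\sum_i G_{X_i}E_{n(i)}\psi\bigr\Vert\leq\epsilon\sqrt{\ell}\,\Vert\psi\Vert$ is not delivered by the argument you sketch: the vectors $G_{X_i}E_{n(i)}\psi$ within a family of pairwise disjoint $X_i$ are not mutually orthogonal (disjoint supports make the projections commute, but applying $G_{X_i}$ destroys the orthogonality coming from the $E_n$), and even granting $\epsilon\Vert\psi\Vert$ per family, the triangle inequality over $\ell$ families gives $\epsilon\,\ell\,\Vert\psi\Vert$, not $\epsilon\sqrt{\ell}\,\Vert\psi\Vert$. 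As written, your route would at best produce a $(1-\epsilon\ell)^2$-type bound, which is strictly weaker and vacuous when $\epsilon\in[1/\ell,1/\sqrt{\ell})$, precisely the regime the theorem still covers.

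The $\sqrt{\ell}$ in the paper does not come from any grouping or orthogonality of the $X_i$; it comes from a parameter optimization. The paper writes $\|E_n\psi\|^2=\braket{(\idty-G_{X_{n+1}})\psi}{E_n\psi}+\braket{\sum_{m=n-\ell+1}^{n}E_m\psi}{G_{X_{n+1}}E_n\psi}$, using your (correct) localization observation in the form $[G_{X_{n+1}},E_m]=0$ for $m\notin[n-\ell+1,n]$, then applies $|\braket{\phi_1}{\phi_2}|\leq\tfrac{c}{2}\|\phi_1\|^2+\tfrac{1}{2c}\|\phi_2\|^2$ to each of the two terms with free parameters $c_1,c_2$, bounds $\|G_{X_{n+1}}E_n\psi\|\leq\epsilon\|E_n\psi\|$ by (iv), sums over $n$ (the length-$\ell$ window is what produces the factor $c_2\ell/2$), and only at the end optimizes, choosing $c_1=1-\epsilon\sqrt{\ell}$ and $c_2=\epsilon/\sqrt{\ell}$, before chaining back through (i) and (ii). To repair your write-up you should replace the family-grouping step by this weighted Cauchy--Schwarz-and-optimize argument (keeping your correct reduction and localization); otherwise the claimed constant $(1-\epsilon\sqrt{\ell})^2$ is not reached.
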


\begin{proof} It is easy to check that the operators $E_n$, $0\leq n\leq N-1$ form a mutually orthogonal family of orthogonal projections that sum to the identity:
\begin{equation}\label{eq:E_properties}
   E_nE_m = \delta_{n,m}E_n, \qquad E_n^* = E_n, \qquad \sum_{n=0}^{N} E_n = \idty
\end{equation}
where, for first property follows by \eqref{eq:ff_projections}. Since $H_{\Lambda_N}=H_\Lambda,$ for any $\psi \in \cG_\Lambda^\perp\cap \caV_\Lambda$ it follows that
\[
\|\psi\|^2 = \sum_{n=0}^{N-1}\|E_n\psi\|^2.
\]

Consider $\|E_n\psi\|^2$ for any $0\leq n \leq N-1$, and set $n_\ell := \max\{0,n-\ell+1\}$. Using \eqref{eq:E_properties}, one finds
\begin{align}
  \|E_n\psi\|^2  = & \braket{(\idty-G_{X_{n+1}})\psi}{E_n\psi}  + \braket{\sum_{m= n_\ell}^nE_m\psi}{G_{X_{n+1}}E_n\psi} \label{eq:mm_step1}
\end{align}
where we use that Assumption~\ref{assump:MM}(iii) and the frustration-free property imply that
\[
[G_{X_{n+1}},E_m] = 0 \quad \text{if} \quad m \notin [n-\ell+1,n].
\]
Applying the bound $|\braket{\phi_1}{\phi_2}| \leq \frac{c}{2}\|\phi_1\|^2+\frac{1}{2c}\|\phi_2\|^2$ for any $c>0$ to \eqref{eq:mm_step1} then produces
\begin{align*}
    \|E_n\psi\|^2 \leq & \frac{1}{2c_1}\braket{\psi}{(\idty-G_{X_{n+1}})\psi} + \frac{c_1}{2}\|E_n\psi\|^2 +  \frac{1}{2c_2}\|G_{X_{n+1}}E_n\psi\|^2 + \frac{c_2}{2}\sum_{m=n_\ell}^n\|E_m\psi\|^2 \nonumber\\
    \leq & \frac{1}{2c_1\gamma}\braket{\psi}{H_{X_{n+1}}\psi} + \left(\frac{c_1}{2}+\frac{\epsilon^2}{2c_2}\right)\|E_n\psi\|^2+\frac{c_2}{2}\sum_{m=n_\ell}^{n}\|E_m\psi\|^2,
\end{align*}
where the last inequality uses items (i) and (iv) of Assumption~\ref{assump:MM}. Summing over $0\leq n \leq N-1$, rearranging terms, and applying Assumption~\ref{assump:MM}(ii) one arrives at
\[
2\gamma c_1\left(1-\frac{c_1}{2}-\frac{\epsilon^2}{2c_2}-\frac{c_2\ell}{2}\right)\|\psi\|^2 \leq d\braket{\psi}{H_\Lambda \psi} \quad \forall \psi\in\caG_\Lambda^\perp \cap \caV_\Lambda\,.
\]
The bound in \eqref{eq:gap_lb} is then a consequence of maximizing over $\{(c_1,c_2):c_1,c_2>0\},$ which results in choosing $c_1=1-\epsilon\sqrt{\ell}$ and $c_2=\epsilon/\sqrt{\ell}.$
\end{proof}

The following is an immediate consequence of Theorem~\ref{thm:MM} if $H_\Lambda$ is frustration-free on $\caV_\Lambda$.

\begin{cor}\label{cor:MM} Suppose that $H_\Lambda$ is a local Hamiltonian for a frustration-free interaction. Moreover, assume Assumption~\ref{assump:MM} holds for an invariant subspace $\caV_\Lambda$ of $H_\Lambda$. Then,
\begin{equation}\label{eq:MM_gap_IS}
    \gap(H_\Lambda\restriction_{\caV_\Lambda}) \geq \frac{\gamma}{d}(1-\epsilon\sqrt{\ell})^2.
\end{equation}

\end{cor}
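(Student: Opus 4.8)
The plan is to deduce Corollary~\ref{cor:MM} directly from Theorem~\ref{thm:MM} together with the bookkeeping about restrictions to invariant subspaces that was set up around \eqref{eq:invariant_op}--\eqref{eq:subspace_gap}. The key point is that the hypothesis of the corollary already supplies Assumption~\ref{assump:MM} for the subspace $\caV_\Lambda$, so Theorem~\ref{thm:MM} applies verbatim and yields the quadratic-form bound \eqref{eq:gap_lb} for every $\psi \in \caG_\Lambda^\perp \cap \caV_\Lambda$. All that remains is to reinterpret this bound as a spectral gap statement for the restricted operator $H_\Lambda\restriction_{\caV_\Lambda}$, which is exactly the content of \eqref{eq:subspace_gap} once we know the kernel of $H_\Lambda\restriction_{\caV_\Lambda}$ is $\caG_\Lambda \cap \caV_\Lambda$.

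First I would record that since $\caV_\Lambda$ is invariant under $H_\Lambda$, \eqref{eq:invariant_op} gives $H_\Lambda\restriction_{\caV_\Lambda} = P_{\caV_\Lambda} H_\Lambda P_{\caV_\Lambda}$, and $H_\Lambda = H_\Lambda\restriction_{\caV_\Lambda} \oplus H_\Lambda\restriction_{\caV_\Lambda^\perp}$ as a non-negative operator. In particular $\ker(H_\Lambda\restriction_{\caV_\Lambda}) = \caG_\Lambda \cap \caV_\Lambda$, and the invariance condition $\caV_\Lambda \cap \caG_\Lambda \neq \{0\}$ from Assumption~\ref{assump:MM} ensures this kernel is nontrivial, so $\gap(H_\Lambda\restriction_{\caV_\Lambda})$ is well-defined as the smallest positive eigenvalue of $H_\Lambda$ in $\caV_\Lambda$ via \eqref{eq:subspace_gap}. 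Next, for any nonzero $\psi \in \caV_\Lambda$ with $\psi \in \ker(H_\Lambda\restriction_{\caV_\Lambda})^\perp = (\caG_\Lambda \cap \caV_\Lambda)^\perp$, I would observe that within $\caV_\Lambda$ the orthogonal complement of $\caG_\Lambda \cap \caV_\Lambda$ coincides with $\caG_\Lambda^\perp \cap \caV_\Lambda$ — here one uses that $G_\Lambda$ commutes with $P_{\caV_\Lambda}$ (again by \eqref{eq:invariant_op}, since $\caG_\Lambda$ is a spectral subspace of the invariant operator $H_\Lambda$), so $P_{\caV_\Lambda}$ maps $\caG_\Lambda^\perp$ into itself and every such $\psi$ lies in $\caG_\Lambda^\perp \cap \caV_\Lambda$. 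Then Theorem~\ref{thm:MM} applies to $\psi$ and gives $\braket{\psi}{H_\Lambda \psi} \geq \frac{\gamma}{d}(1-\epsilon\sqrt{\ell})^2 \|\psi\|^2$; dividing by $\|\psi\|^2$ and taking the infimum over all such $\psi$ yields \eqref{eq:MM_gap_IS} by the definition \eqref{eq:subspace_gap}.

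There is essentially no hard part here — the corollary is a packaging of Theorem~\ref{thm:MM} into the invariant-subspace language. The only point requiring a moment's care is the identification $(\caG_\Lambda \cap \caV_\Lambda)^\perp \cap \caV_\Lambda = \caG_\Lambda^\perp \cap \caV_\Lambda$, i.e.\ that restricting to $\caV_\Lambda$ and removing the ground state space commute; this is immediate from the commutation of $G_\Lambda$ with $P_{\caV_\Lambda}$, which in turn follows from $\caV_\Lambda$ being invariant under $H_\Lambda$ and $G_\Lambda$ being a function of $H_\Lambda$. Once that is in hand the proof is a one-line appeal to Theorem~\ref{thm:MM} and \eqref{eq:subspace_gap}.
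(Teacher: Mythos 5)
Your proof is correct and follows essentially the same route as the paper: both identify $\caG_\Lambda^{\perp}\cap\caV_\Lambda$ with $\ker(H_\Lambda\restriction_{\caV_\Lambda})^\perp\cap\caV_\Lambda$ (using that $\caV_\Lambda$, $\caG_\Lambda$, and their complements are invariant under $H_\Lambda$, equivalently that $G_\Lambda$ commutes with $P_{\caV_\Lambda}$) and then combine \eqref{eq:gap_lb} with \eqref{eq:subspace_gap}. Your justification of the key set identity via the commutation of $G_\Lambda$ with $P_{\caV_\Lambda}$ is a slightly more explicit version of the paper's one-line argument, but it is the same idea.
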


\begin{proof}
Since $\caV_\Lambda$, $\caG_\Lambda$ and their orthogonal complements are all invariant under $H_\Lambda$, it follows that
\[
\caG_\Lambda^{\perp} \cap \caV_\Lambda = (\caG_\Lambda \cap \caV_\Lambda)^{\perp}\cap \caV_\Lambda = \ker(H_\Lambda\restriction_{\caV_\Lambda})^\perp \cap \caV_\Lambda.
\]
Hence, \eqref{eq:MM_gap_IS} follows immediately from combining \eqref{eq:subspace_gap} with \eqref{eq:gap_lb}.
\end{proof}

\section{Invariant subspaces of the truncated Haldane pseudopotential}\label{sec:subspaces}

In this section, we identify the invariant subspaces that will be used to prove Theorem~\ref{thm:main_result} via the strategy outlined in Section~\ref{sec:gap_strategy}. Invariant subspaces that support the edge states will also be briefly discuss. All of these subspaces will share the property that they are invariant under all of the interaction terms ($n_xn_{x+2}$ and $q_x^*q_x$) that comprise a Hamiltonian with either OBC or PBC, see \eqref{eq:OBC_Ham}-\eqref{eq:interaction} and \eqref{eq:PBC_Ham}, and will be spanned by a subset of the configuration basis
\begin{equation}\label{eq:conifguration_basis}
    \cB_\Lambda=\left\{\ket{\mu} : \mu = (\mu_1,\mu_2,\ldots, \mu_{|\Lambda|})\in\{0,1\}^{|\Lambda|}\right\}\subseteq \caH_\Lambda := \bigotimes_{x\in\Lambda}\bC^2.
\end{equation} 
 As many of the proofs in this section require only elementary calculations, we focus the discussion on their main ideas, and point the interested reader to \cite{Warzel:2022, Warzel:2023} for the precise details.

\subsection{Ground state tilings for the periodic model}\label{sec:vmd_tilings}
Consider a finite interval $\Lambda$ with PBC. We identify a subspace that contains $\cG_\Lambda^{\rm per}$ and is invariant under all interaction terms that comprise $H_\Lambda^{\rm per}$. Notice that the truncated Haldane pseudopotential is frustration-free, as every interaction term is non-negative and $\ket{0}^{\otimes |\Lambda|}\in \ker(H_\Lambda^{\rm per})$. As a consequence,
\begin{equation}\label{eq:ff}
    \caG_\Lambda^{\rm per} = \ker(H_\Lambda^{\rm per}) = \bigcap_{x\in \Lambda}\left(\ker(n_xn_{x+2}) \cap \ker(q_x)\right).
\end{equation}
Moreover, $\caB_\Lambda$ is an eigenbasis of every electrostatic term $n_xn_{x+2}$, and so 
\begin{equation} \label{eq:electrostatic_gs}
   \bigcap_{x\in \Lambda}\ker(n_xn_{x+2}) = \Span\left\{\ket{\mu}\in \cB_\Lambda : \mu_x\mu_{x+2}=0 \; \forall x\in\Lambda\right\}.
\end{equation}
It is left to identify a subspace of \eqref{eq:electrostatic_gs} that is invariant under all $q_x^*q_x$, and contains the joint kernel from \eqref{eq:ff}.
 
 The restriction of any $\ket{\mu}$ from the spanning set of \eqref{eq:electrostatic_gs} to an interval of four sites has at most two particles, and it is easy to check that $ \ket{\mu}\in\ker(q_x)$ in all cases except $\ket{\mu}\restriction_{[x-1,x+2]} \in \{\ket{1001}, \, \ket{0110}\}.$
One can also verify that $\Span\{\ket{1001}, \, \ket{0110}\}$ is invariant under $q^*q\in\cB((\bC^2)^{\otimes 4)}$, and
\begin{equation}\label{eq:hopping_gs}
 \ket{\mu_L}\otimes\left(\ket{1001}+\lambda\ket{0110}\right)\otimes\ket{\mu_R}  \in \ker(q_x), 
\end{equation}
for any $\mu_L\in \{0,1\}^{x-2}$ and $\mu_R\in \{0,1\}^{|\Lambda|-x-2}$. However, even if $\ket{\mu}$ belongs to \eqref{eq:electrostatic_gs}, it is not true that any $\ket{\mu'}$ obtained from replacing one or more subconfigurations $0110$ with $1001$ also belongs to this set. For example, $\ket{\ldots011001\ldots}$ can belong to \eqref{eq:electrostatic_gs}, but $\ket{\ldots100101\ldots}$ does not. 

We restrict the spanning set from \eqref{eq:electrostatic_gs} to states that remain in \eqref{eq:electrostatic_gs} under all such replacements. These states can be characterized by domino tilings of $\Lambda.$ There are three types of tiles:
\begin{enumerate}
    \item The void, $V=(0)$, covers one site and contains no particles.
    \item The monomer, $M=(100)$, covers three sites and has a single particle on the first site.
    \item The dimer, $D=(011000)$, covers six sites and has particles on the second and third sites.
\end{enumerate}

Any tiling $T$ that covers all sites of the ring $\Lambda$ with these tiles is called a \emph{Void-Monomer-Dimer (VMD) tiling}, see Figure~\ref{fig:VMD_tiling}. The set of all tilings is denoted $\caT_\Lambda^{\rm per}$, and the map
\[
\sigma_\Lambda^{\rm per}:\caT_\Lambda^{\rm per} \to \{0,1\}^{|\Lambda|}, \quad T\mapsto \sigma_\Lambda^{\rm per}(T),
\]
 where $\sigma_\Lambda^{\rm per}(T)$ is the configuration whose occupation matches that of the tiling, is injective. This follows from noticing that any configuration $\ket{\mu}\in\ran(\sigma_\Lambda^{\rm per})$ with neighboring occupied sites must be covered by a dimer, and any isolated occupied site must be covered by a monomer. After first placing all dimers and monomers, all other sites are vacant and, hence, must be covered by a void. This constructs the unique tiling corresponding to $\mu$. To simplify notation, we will use $\ket{T}$ to denote the occupation state $\ket{\sigma_\Lambda^{\rm per}(T)}.$

The range of $\sigma_\Lambda^{\rm per}$ can also be precisely identified, and will be useful for bounding $E_0(\caV_\Lambda^\perp)$ for the invariant subspace strategy.
\begin{prop}\label{prop:tiling_configs}
A configuration $\mu\in\{0,1\}^{|\Lambda|}$ belongs to $\ran(\sigma_\Lambda^{\rm per})$ if and only if the following hold for all $x\in\Lambda$ where $x\equiv x+|\Lambda|:$
\begin{enumerate}
    \item If $\mu_x=1$ and $\mu_{x\pm 1}=0$, then $\mu_{x\pm 2}=0$.
    \item If $\mu_{x}=\mu_{x+1}=1$, then the following hold where $s=x+1/2$:
    \begin{enumerate}
        \item The first three sites on either side of this pair are vacant: $\mu_{s\pm 3/2}=\mu_{s\pm 5/2}=\mu_{s\pm 7/2}=0$.
        \item The next two sites have at most one particle: $\mu_{s+ 9/2}\mu_{s+ 11/2}=0$ and $\mu_{s-9/2}\mu_{s-11/2}=0$. 
    \end{enumerate}
\end{enumerate}
\end{prop}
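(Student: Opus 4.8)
\textbf{Proof proposal for Proposition~\ref{prop:tiling_configs}.}

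The plan is to prove both implications directly from the structure of the three tiles. For the forward direction, suppose $\mu = \sigma_\Lambda^{\rm per}(T)$ for some VMD tiling $T$. Every occupied site is the leading site of a monomer $M=(100)$ or one of the two occupied sites of a dimer $D=(011000)$; in either case one reads off locally which tile covers a given site from the surrounding occupation pattern. So I would argue by cases on the tile covering site $x$. If $\mu_x = 1$ and $\mu_{x\pm1}=0$, then $x$ cannot be covered by a dimer (whose occupied sites are adjacent), so it is the head of a monomer; the next two sites of that monomer are the $00$ following it, forcing $\mu_{x+2}=0$, and looking at whichever of $x\pm1$ is the first vacant site on the other side, the tile ending there (a $V$, or the tail of an $M$ or $D$) contributes a $0$ at $x\pm2$ as well — more carefully, one checks that no tile can place a particle two sites before its own start without overlapping, giving condition (i). If $\mu_x=\mu_{x+1}=1$, adjacency forces both to lie in the same dimer $D=(011000)$, with $x$ the second site and $x+1$ the third; then the explicit shape of $D$ already gives $\mu_{s-3/2}=\mu_{s+3/2}=\mu_{s+5/2}=\mu_{s+7/2}=0$ (the first site and the trailing $000$), and the tile immediately preceding $D$ has its occupied sites, if any, far enough back that $\mu_{s-5/2}=\mu_{s-7/2}=0$ and the two-sites-further constraints in (ii)(b) hold because the earliest the next tile's particles can appear is at $s\pm 9/2$ and no tile has two particles in its first two covered sites. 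That last point — that $M$ and $D$ each carry at most one particle among any two consecutive sites except the dimer's central pair, which is itself shielded by three vacant sites on each side within the tile — is exactly what (ii)(a)--(b) encode.

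For the converse, suppose $\mu$ satisfies (i) and (ii). I would construct the tiling greedily from left to right (the ring is handled by noting the conditions are translation-invariant mod $|\Lambda|$, so one may start the scan at any vacant site, which exists since the all-ones configuration violates (i)). Whenever the scan reaches an occupied site $x$: if $\mu_{x+1}=1$, place a dimer on $[x-1,x+4]$ — condition (ii)(a) guarantees $\mu_{x-1}=\mu_{x+2}=\mu_{x+3}=\mu_{x+4}=0$ so this is a legal $D=(011000)$, and (i) applied at $x$ and $x+1$ plus (ii) guarantees the site $x-1$ was indeed vacant and not already claimed; if instead $\mu_{x+1}=0$, condition (i) gives $\mu_{x+2}=0$, so place a monomer $M=(100)$ on $[x,x+2]$. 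Every vacant site not thereby covered gets a void $V=(0)$. The key verification is that these placements never overlap and exhaust $\Lambda$: a dimer placed for the pair at $x$ occupies one cell to the left of $x$; condition (ii)(a) at this pair (and, if the previous occupied region was also a pair, condition (ii)(a) there, which pushes its dimer's right end at least to $s'-7/2 < x-1$) ensures that cell $x-1$ is free, and similarly condition (i)/(ii) prevents a monomer's trailing $00$ from colliding with the next occupied block. Running the scan to completion produces a VMD tiling $T$ with $\sigma_\Lambda^{\rm per}(T)=\mu$, using injectivity of $\sigma_\Lambda^{\rm per}$ to see the tiling is the expected one.

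I expect the main obstacle to be bookkeeping at the \emph{left boundary of a dimer}: a dimer $D=(011000)$ has its particles on its second and third covered sites, so when we encounter an adjacent occupied pair we must reach one site backwards, and we need to be sure that site was not already covered by the preceding tile. This is where condition (ii)(a) does real work — it forces three vacant sites on \emph{both} sides of every occupied pair, and combined with the fact that the only tile whose occupied sites are not its first covered site is the dimer (which itself is surrounded by its own three-vacant-site buffers), it rules out the problematic collision. The cleanest way to present this is probably to observe first that conditions (i)--(ii) imply the occupied sites of $\mu$ split uniquely into "isolated singles" and "adjacent pairs" separated by runs of at least... vacant sites of the right lengths, then assign $M$ to each single and $D$ to each pair with its mandated offset, and check the leftover sites tile by $V$; the separation counts needed are exactly (i) for singles and (ii)(a)--(b) for pairs. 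The forward direction is then essentially immediate by inspecting the three tile shapes, and I would keep that part brief, citing \cite{Warzel:2022} for the routine case-check.
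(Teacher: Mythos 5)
Your overall route is the same as the paper's: check the conditions tile-by-tile for the forward direction, and for the converse assign a dimer to each adjacent occupied pair (reaching one cell to the left of the pair), a monomer to each isolated particle, and voids elsewhere, with conditions (i)--(ii) supplying exactly the vacancy counts that prevent overlaps. The forward direction and the monomer--monomer, monomer--pair and pair--monomer separation checks are fine as sketched. The problem is the one overlap check you work out in detail — the pair--pair case, which you yourself flag as the main obstacle — where you invoke the wrong condition. If the previous occupied block is a pair at $x',x'+1$ and the current pair sits at $x,x+1$, then (ii)(a), applied at either pair, only forces the three sites between them to be vacant, i.e.\ $x\geq x'+5$; it does \emph{not} give the strict inequality you assert (``pushes its dimer's right end \ldots $<x-1$'', where incidentally the right end is $s'+7/2$, not $s'-7/2$). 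The borderline pattern $1100011$ — two pairs separated by exactly three vacancies — satisfies (i) and (ii)(a) locally, yet the two dimers $(011000)$ you would place both claim the middle cell $x'+4=x-1$. That pattern is excluded only by (ii)(b), namely $\mu_{x'+5}\mu_{x'+6}=0$ (equivalently $\mu_{x-4}\mu_{x-5}=0$ at the current pair), which forces $x\geq x'+6$ and hence $x-1>x'+4$; it is also precisely the configuration behind the anomalous OBC ground state $\ket{1100011}$ mentioned after Lemma~\ref{lem:obc_gss}, so (ii)(b) is doing real, non-redundant work here.

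Your closing outline does list ``(ii)(a)--(b) for pairs'' among the needed separation counts, so the repair is a one-line substitution of (ii)(b) for (ii)(a) in that step; but as written the crucial case is not correctly discharged, and since (ii)(b) would otherwise never be used in the converse, a referee would flag it. One further trivial slip: the all-ones configuration is excluded by (ii)(a) (condition (i) is vacuous there, since its hypothesis requires a vacant neighbor).
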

\begin{proof}
    That any tiling configuration $\sigma_\Lambda^{\rm per}(T)$ satisfies these conditions is easy to verify from considering the occupied sites that result from laying any combination of three tiles next to each other. The reverse direction is proved using the same argument that shows $\sigma_\Lambda^{\rm per}$ is injective. The only additional observation needed is that the occupation constraints from items (i)-(ii) guarantee there are enough vacant sites so that tiles covering occupied sites do not overlap one another.
\end{proof}

\begin{figure}
    \centering
    \includegraphics[scale=.3]{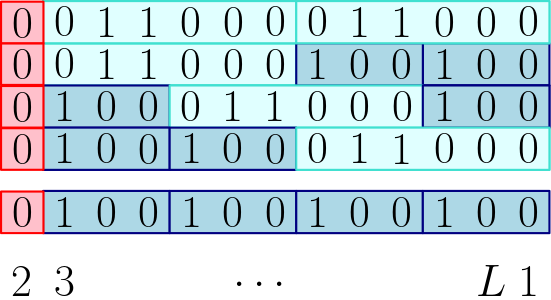}
    \caption{An equivalence class of VMD tilings generated by the replacement rule. We use the void on the second site to unravel the periodic tiling to an interval.}
    \label{fig:VMD_tiling}
\end{figure}
The subspace of all VMD tilings
\begin{equation}
\caC_\Lambda^{\rm per} := \Span\{\ket{T}: T\in \cT_\Lambda^{\rm per}\}
\end{equation}
is the invariant subspace $\caV_\Lambda=\caC_\Lambda^{\rm per}$ we use to prove Theorem~\ref{thm:main_result} via the bulk gap strategy from Section~\ref{sec:gap_strategy}. To see that this contains $\caG_\Lambda^{\rm per}$, we first use \eqref{eq:hopping_gs} to partition $\caC_\Lambda^{\rm per}$ into mutually orthogonal invariant subspaces. The bidirectional tile replacement rule
\[
(100)(100) \leftrightarrow (011000) \quad \text{or, equivalently,} \quad MM\leftrightarrow D
\]
generates an equivalance relation on $\caT_\Lambda^{\rm per}$, where $T\leftrightarrow T'$ if and only if the tiling $T'$ can be constructed from $T$ after a finite number of tile replacements. Any tiling consisting of only voids and monomers is called a \emph{root tiling}. By replacing all dimers in $T$ with a pair of monomers, it is clear that there is a unique root tiling $R$ such that $T\leftrightarrow R$. Hence, 
\[\caR_\Lambda^{\rm per}=\{R\in \cT_\Lambda^{\rm per}: R \text{ contains no dimers } D\}\]
is a set of representatives for the equivalence classes:
\[
\caT_\Lambda^{\rm per} = \biguplus_{R\in \caR_\Lambda^{\rm per}} \caT_\Lambda^{\rm per}(R), \quad \caT_\Lambda^{\rm per}(R) = \{T\in \caT_\Lambda^{\rm per}: T\leftrightarrow R\},
\]
see Figure~\ref{fig:VMD_tiling}. This combined with the injectivity of $\sigma_\Lambda^{\rm per}$ and orthogonality of $\caB_\Lambda$ imply that  \begin{equation}\label{eq:tiling_properties}
    \caC_\Lambda^{\rm per} = \bigoplus_{R\in \caR_\Lambda^{\rm per}} \caC_\Lambda^{\rm per}(R)\quad\text{and}\quad\caC_\Lambda^{\rm per}(R)\perp\caC_\Lambda^{\rm per}(R') \quad \forall \, R\neq R'
\end{equation}
where $\caC_\Lambda^{\rm per}(R): = \Span\left\{\ket{T}: T\leftrightarrow R\right\}$ is the \emph{VMD space generated by $R$}. The following result shows that each $\caC_\Lambda^{\rm per}(R)$ supports a unique ground state, and that these form an orthogonal basis for $\caG_\Lambda^{\rm per}$.

\begin{lemma}\label{lem:gss}
   Fix a ring $\Lambda=[1,L]$ with $L\geq 6$. Then $\caC_\Lambda^{\rm per}(R)$ is an invariant under all interaction terms $\{n_xn_{x+2}, \, q_x^*q_x$: $x\in\Lambda\}$ for any $R\in \caR_\Lambda^{\rm per}$. Moreover,
    \begin{equation}
       \caG_\Lambda^{\rm per} = \Span\{\psi_\Lambda^{\rm per}(R) : R\in\caR_\Lambda^{\rm per}\}
   \end{equation}
 where, denoting by $n_D(T)$ the number of dimers in the tiling $T$, $\psi_\Lambda^{\rm per}(R)$ is the VMD state
   \begin{equation}
       \psi_\Lambda^{\rm per}(R) =\sum_{T\leftrightarrow R}\lambda^{n_D(T)}\ket{T}\in\caC_\Lambda^{\rm per}(R).
   \end{equation} 
\end{lemma}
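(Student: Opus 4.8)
The plan is to prove four statements and then assemble them: (I) each VMD space $\caC_\Lambda^{\rm per}(R)$ is invariant under every interaction term $n_xn_{x+2}$ and $q_x^*q_x$; (II) every $\psi_\Lambda^{\rm per}(R)$ lies in $\caG_\Lambda^{\rm per}$; (III) $\caG_\Lambda^{\rm per}\subseteq\caC_\Lambda^{\rm per}$; and (IV) $\dim\ker(H_\Lambda^{\rm per}\restriction_{\caC_\Lambda^{\rm per}(R)})\leq 1$ for each $R\in\caR_\Lambda^{\rm per}$. Granting these, (I) together with \eqref{eq:tiling_properties} gives $\caC_\Lambda^{\rm per}=\bigoplus_R\caC_\Lambda^{\rm per}(R)$ with each summand invariant under $H_\Lambda^{\rm per}$, so by (III), $\caG_\Lambda^{\rm per}=\caG_\Lambda^{\rm per}\cap\caC_\Lambda^{\rm per}=\bigoplus_R\ker(H_\Lambda^{\rm per}\restriction_{\caC_\Lambda^{\rm per}(R)})$; by (II) and (IV) each summand equals $\Cx\,\psi_\Lambda^{\rm per}(R)$, and since the supports $\{T:T\leftrightarrow R\}$ are mutually disjoint these vectors are linearly independent, yielding $\caG_\Lambda^{\rm per}=\Span\{\psi_\Lambda^{\rm per}(R):R\in\caR_\Lambda^{\rm per}\}$.

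For (I), each configuration state $\ket T$ is annihilated by every $n_xn_{x+2}$, and $q_x\ket T=0$ unless $T\restriction_{[x-1,x+2]}\in\{\ket{1001},\ket{0110}\}$. Using the structure of VMD tilings (an isolated occupied site must be covered by a monomer, an adjacent occupied pair by a dimer) one checks that the case $\ket{0110}$ forces a dimer on $[x-1,x+4]$ and the case $\ket{1001}$ forces two adjacent monomers there; the elementary fact that $\Span\{\ket{1001},\ket{0110}\}$ is invariant under $q^*q$ then shows $q_x^*q_x\ket T\in\Span\{\ket T,\ket{T'}\}$, where $T'\leftrightarrow R$ is obtained from $T$ by the single replacement $D\leftrightarrow MM$ on $[x-1,x+4]$. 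Hence each interaction term, and so $H_\Lambda^{\rm per}$, preserves $\caC_\Lambda^{\rm per}(R)$.

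Statements (II) and (IV) both rest on the same pairing. For fixed $x$, the tilings $T$ with $q_x\ket T\neq 0$ split into pairs $(T_D,T_{MM})$ that agree off $[x-1,x+4]$, with $T_D$ carrying a dimer and $T_{MM}$ two monomers on that block; a short computation gives $q_x\ket{T_D}=\ket S$ and $q_x\ket{T_{MM}}=-\lambda\ket S$ for a common configuration $S=S(T_D,x)$, and distinct pairs give distinct $\ket S$. Since $n_D(T_{MM})=n_D(T_D)-1$, the $\lambda$-weights in $\psi_\Lambda^{\rm per}(R)$ are calibrated precisely so that the contributions of each pair cancel, giving $q_x\psi_\Lambda^{\rm per}(R)=0$ for all $x$; with $n_xn_{x+2}\psi_\Lambda^{\rm per}(R)=0$ this gives (II). For (IV): any $\phi=\sum_{T\leftrightarrow R}a_T\ket T$ in the kernel satisfies $q_x\phi=0$ for all $x$ by frustration-freeness, which by the above is equivalent to $a_{T_D}=\lambda a_{T_{MM}}$ for every such pair; removing the dimers of a given $T$ one at a time shows $a_T=\lambda^{n_D(T)}a_R$, so $\phi\in\Cx\,\psi_\Lambda^{\rm per}(R)$.

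The main obstacle is (III). Writing $\psi=\sum_\mu c_\mu\ket\mu\in\caG_\Lambda^{\rm per}$, the electrostatic terms force $c_\mu=0$ unless $\mu\in\bigcap_x\ker(n_xn_{x+2})$; by Proposition~\ref{prop:tiling_configs} (noting that its condition~(i) is automatic on this set) it then suffices to show $c_\mu=0$ whenever $\mu$ violates (ii)(a) or (ii)(b). I would argue in two steps, using that each relation $q_y\psi=0$, grouped by the values of $\mu$ off the block $[y-1,y+2]$, pairs the (unique) $\ket{0110}$-type and $\ket{1001}$-type configurations of each group by a relation $c_{\mu^{0110}}=\lambda c_{\mu^{1001}}$, so that vanishing of either coefficient forces vanishing of the other. \emph{Step 1:} if $\mu$ has an occupied pair $\mu_x=\mu_{x+1}=1$ with $\mu_{x+4}=1$ (resp. $\mu_{x-3}=1$), then swapping the block $[x+1,x+4]$ (resp. $[x-3,x]$) to the other type yields a configuration with two occupied sites at distance two, violating the electrostatic constraint, so its coefficient vanishes and $c_\mu=0$; this disposes of every violator of (ii)(a). \emph{Step 2:} if instead $\mu$ satisfies (ii)(a) everywhere but some occupied pair $\mu_x=\mu_{x+1}=1$ has $\mu_{x+5}=\mu_{x+6}=1$ (the left case being symmetric), then swapping the block $[x+4,x+7]$ to the other type yields a configuration that still contains the pair at $(x,x+1)$ but now has $\mu_{x+4}=1$, hence violates (ii)(a); by Step 1 its coefficient is zero, so $c_\mu=0$. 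Since any $\mu$ in the support with no occupied pair already satisfies the conditions of Proposition~\ref{prop:tiling_configs}, these cases are exhaustive and $\caG_\Lambda^{\rm per}\subseteq\caC_\Lambda^{\rm per}$. (A shorter but non-self-contained route to (III) is to invoke the estimate of Section~\ref{sec:E_0}, which gives $H_\Lambda^{\rm per}\restriction_{(\caC_\Lambda^{\rm per})^\perp}\geq\gamma^{\rm per}>0$, hence trivial kernel on $(\caC_\Lambda^{\rm per})^\perp$.)
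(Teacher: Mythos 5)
Your proposal is correct and follows essentially the same route the paper sketches (and defers to \cite{Warzel:2022}): invariance of each $\caC_\Lambda^{\rm per}(R)$ from the local analysis around \eqref{eq:hopping_gs}, the pairing $\lambda^{n_D(T_D)}=\lambda\cdot\lambda^{n_D(T_{MM})}$ for both membership of $\psi_\Lambda^{\rm per}(R)$ in the kernel and uniqueness within each block, and Proposition~\ref{prop:tiling_configs} to show any ground state is supported on $\ran(\sigma_\Lambda^{\rm per})$. The only caveat is that Steps 1--2 of your part (III) divide by $\lambda$ (concluding $c_{\mu^{1001}}=0$ from $c_{\mu^{0110}}=0$), so they implicitly assume $\lambda\neq 0$; this is harmless since for $\lambda=0$ the relations $q_y\psi=0$ directly annihilate every configuration with an adjacent pair, and the main theorem assumes $\lambda\neq 0$ anyway.
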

We omit the details of this proof as they are straightforward, and can be found in \cite{Warzel:2022}. The idea for showing that $\caC_\Lambda^{\rm per}(R)$ is invariant under all interaction terms and supports the unique ground state $\psi_\Lambda^{\rm per}(R)$ is a consequence of the observations made in \eqref{eq:ff}-\eqref{eq:hopping_gs} and the surrounding discussion. That these states form an orthogonal basis for $\caG_\Lambda^{\rm per}$ results from combining \eqref{eq:tiling_properties} with Proposition~\ref{prop:tiling_configs} to prove that $\psi(\mu')\neq 0$ only if $\mu' \in \ran (\sigma_{\Lambda}^{\rm per})$ for any $\psi=\sum_{\mu\in\{0,1\}^{|\Lambda|}}\psi(\mu)\ket{\mu}\in \cG_\Lambda^{\rm per}$. 

An immediate consequence of Lemma~\ref{lem:gss} is that $\caC_\Lambda^{\rm per}(R)$ is invariant under $H_\Lambda^{\rm per}.$ 

\subsection{Ground state tilings for the open boundary model}\label{sec:obc}
As the martingale method will be used to prove the threshold criterion of the finite size criterion, we also need to know the ground states of $H_\Lambda$. If $\Lambda=[1,L]$, the frustration-free property implies
\[
\caG_\Lambda =\bigcap_{x=1}^{L-2}\ker(n_xn_{x+2)} \cap \bigcap_{x=2}^{L-2} \ker(q_x).
\]
A similar tiling description as in Section~\ref{sec:vmd_tilings} holds for the model with OBC. However, several additional \emph{boundary tiles} are needed to complete the description. As one can always embed an open interval as a subinterval of a ring, it is not surprising that these boundary tiles are obtained from restricting a VMD tiling to an interval. Discarding any truncated tiles that can be built from the void and monomer tiles, this produces the following sets of boundary tiles:

\emph{The left boundary tiles:}
\begin{enumerate}
    \item A dimer $B_l = (11000)$ covering five sites with particles on the first and second site.
\end{enumerate}

\emph{The right boundary tiles:}
\begin{enumerate}
    \item A monomer $M_1 = (1)$ covering one site with a particle.
    \item A monomer $M_2=(10)$ covering two sites with a particle on the first site.
    \item A dimer $B_r = (011)$ covering three sites with particles on the second and third sites.
    \item A dimer $D_1 = (0110)$ covering four sites with particles on the second and third sites.
    \item A dimer $D_2 = (01100)$ covering five sites with particles on the second and third sites.
\end{enumerate}
With these additional tiles, a \emph{boundary-void-monomer-dimer (BVMD) tiling} $T$ of an interval $\Lambda$ is any covering of $\Lambda$ with the bulk tiles $\{V, \, M, \, D\}$ from the previous section or the above boundary tiles. As suggested by the notation, if used, a boundary tile can only be placed on its respective boundary in the interval, see Figure~\ref{fig:BVMD_tiling}.

Denoting by $\caT_\Lambda$ the set of all BVMD tilings, there is again an injective map $\sigma_\Lambda: \cT_\Lambda \to \{0,1\}^{|\Lambda|}$ which identifies any tiling with the configuration that agrees with the tiling. Every VMD tiling can be viewed as a BVMD tiling by cutting the VMD tiling between the first and last site and appropriately adjusting the boundary tiles. Thus, one can identify $\caT_\Lambda^{\rm per}\subset \caT_\Lambda$ and $\sigma_\Lambda\restriction_{\caT_\Lambda^{\rm per}} = \sigma_\Lambda^{\rm per}$. Thus, we again write $\ket{T}\equiv\ket{\sigma_\Lambda(T)}$. The analog of Proposition~\ref{prop:tiling_configs} holds for $\ran(\sigma_\Lambda)$, with the only modification being that the identification $x\equiv x+|\Lambda|$ is replaced with the convention that all sites in $\bZ\setminus \Lambda$ are considered to be vacant.

\begin{figure}
    \centering
    \includegraphics[scale=.3]{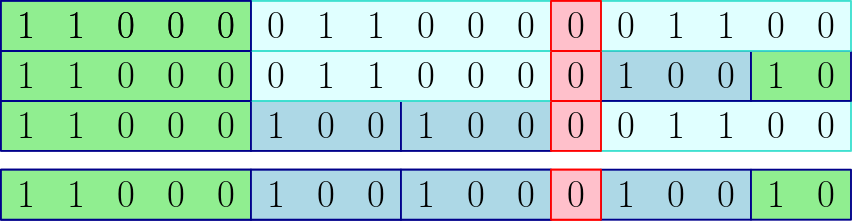}
    \caption{A BMVD tiling equivalence class generated by the replacement rules. Boundary tiles can only be placed on their designated boundary, but are not required to be used.}
    \label{fig:BVMD_tiling}
\end{figure}
Replacement rules are again used to partition the BVMD tiling space 
\[\caC_\Lambda = \Span\{\ket{T} : T\in \caT_\Lambda\}\]
into subspaces that each support a unique ground state. Introducing the alternate notation $M_3\equiv M$ and $D_3\equiv D$, the three bidirectional BVMD replacement rules are
\begin{equation}\label{eq:replacements}
  MM_i \leftrightarrow D_i, \quad i=1,2,3\,. 
\end{equation}
(We note that the boundary dimers $B_l$ and $B_r$ are not subject to replacement rules as any change in particle content from such a replacement would violate either the particle number or center of mass symmetry of $H_\Lambda$.) The replacement rules again generate an equivalence relation on $\caT_\Lambda$ where $T\leftrightarrow T'$ if a finite number of BVMD tile replacements can be used to transform $T$ into $T'$. The set of all BVMD root tilings,
\[
\caR_\Lambda = \left\{R\in \caT_\Lambda : R \text{ does not contain any dimers } D_i,\; i =1,2,3\right\},
\]
is again a set of representatives for this equivalence relation, see Figure~\ref{fig:BVMD_tiling}. Let $\caC_\Lambda(R) = \Span\{\ket{T}: T\leftrightarrow R\}$ denote the BVMD space generated by $R\in\caR_\Lambda$. Then, $\caC_\Lambda$ can be decomposed as
\[
\caC_\Lambda = \bigoplus_{R\in\caR_\Lambda} \caC_\Lambda(R) \quad \text{where} \quad \caC_\Lambda(R) \perp \caC_{\Lambda}(R') \; \text{whenever}\; R\neq R'.
\]
It is not surprising that the analog of Lemma~\ref{lem:gss} holds for the model with OBC.

\begin{lemma}\label{lem:obc_gss} Let $\Lambda=[1,L]$ with $L\geq 6$. For any $R\in \caR_\Lambda$, the space $\caC_\Lambda(R)$ is invariant under all interaction terms $n_xn_{x+2}$ or $q_x^*q_x$ that comprise $H_\Lambda$, see \eqref{eq:OBC_Ham}. Moreover, if $L\neq 7$, then 
\begin{equation}
    \caG_\Lambda = \Span\{\psi_\Lambda(R): R\in\caR_\Lambda\}
\end{equation}
where, denoting by $n_D(T)$ the number of dimers (bulk or boundary) contained in the dimer $T$, $\psi_\Lambda(R)$ is the BVMD state
\begin{equation}
\psi_{\Lambda}(R) = \sum_{T\leftrightarrow R} \lambda^{n_D(T)} \ket{T}\in \caC_\Lambda(R).
\end{equation}
\end{lemma}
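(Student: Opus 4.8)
The plan is to follow the two-part structure of Lemma~\ref{lem:gss}: first show that each $\caC_\Lambda(R)$ is invariant under the individual interaction terms, then deduce that $\ker(H_\Lambda)\cap\caC_\Lambda=\bigoplus_{R\in\caR_\Lambda}\Span\{\psi_\Lambda(R)\}$, and finally argue that every ground state of $H_\Lambda$ already lies in $\caC_\Lambda$.

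For the invariance claim, note first that each electrostatic term $n_xn_{x+2}$ is diagonal in the configuration basis and annihilates $\ket{T}$ for every BVMD tiling $T$: since the boundary tiles are restrictions of the bulk tiles to a subinterval, the interval analog of Proposition~\ref{prop:tiling_configs} shows that every $\sigma_\Lambda(T)$ satisfies $\mu_x\mu_{x+2}=0$ for all $x$; hence $\caC_\Lambda\subseteq\bigcap_x\ker(n_xn_{x+2})$, and in particular each $\caC_\Lambda(R)$ is invariant. For a bulk hopping term $q_x^*q_x$ with $2\le x\le L-2$, a direct computation on the configuration basis shows that $q_x$ sends a tiling basis vector $\ket{\mu}$ to $\ket{\nu}$ when $\mu$ restricts to $0110$ on $[x-1,x+2]$, to $-\lambda\ket{\nu}$ when it restricts to $1001$ there (with $\nu$ obtained from $\mu$ by emptying that window), and to $0$ for every other tiling configuration. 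Consequently $q_x^*q_x$ preserves each two-dimensional ``swap space'' spanned by a pair of tiling basis vectors related by the replacement $0110\leftrightarrow 1001$ on $[x-1,x+2]$, and kills the remaining tiling basis vectors; so the only point to check is that this swap sends a BVMD tiling configuration to an equivalent one. This is a finite local check: a $0110$ occurring inside a tiling must be the particle pair of a $D$, $D_1$, or $D_2$ tile, and the corresponding $1001$ that of the matching $MM_i$, so the two configurations differ by one of the replacements in \eqref{eq:replacements}; the non-replaceable boundary dimers $B_l$ and $B_r$ carry their pairs on the sites $\{1,2\}$ and $\{L-1,L\}$ and are never reached by a bulk $q_x$.

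The inclusion $\caG_\Lambda\supseteq\Span\{\psi_\Lambda(R)\}$ then follows quickly. Each $\psi_\Lambda(R)$ is killed by every $n_xn_{x+2}$ by the previous step; and for each bulk $x$ the tilings $T\leftrightarrow R$ carrying a $0110$ or $1001$ on $[x-1,x+2]$ match up in pairs under the swap at $x$, the remaining ones lying in $\ker(q_x)$. Since a replacement $D_i\to MM_i$ removes exactly one dimer, a matched pair $(T,T')$ with $T$ carrying the $0110$ has $n_D(T')=n_D(T)-1$, hence contributes $\big(\lambda^{n_D(T)}-\lambda\cdot\lambda^{n_D(T)-1}\big)\ket{\nu}=0$ to $q_x\psi_\Lambda(R)$, so $q_x\psi_\Lambda(R)=0$. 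Each $\psi_\Lambda(R)$ is nonzero because the coefficient of $\ket{R}$ equals $\lambda^{n_D(R)}\neq 0$ -- this is where $\lambda\neq 0$ is used, since a root tiling may still contain $B_l$ or $B_r$ -- and the $\psi_\Lambda(R)$ are linearly independent because the subspaces $\caC_\Lambda(R)$ are pairwise orthogonal.

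The reverse inclusion carries the real content. On a fixed $\caC_\Lambda(R)$ the argument is clean: any $\psi\in\ker(H_\Lambda)\cap\caC_\Lambda(R)$ must satisfy $\psi(T)=\lambda\,\psi(T')$ for every matched pair $(T,T')$ above -- and for $L\ge 6$ every single replacement $D_i\leftrightarrow MM_i$ is witnessed by some bulk $q_x$ -- so, the equivalence class being connected by replacements, all coefficients are determined by $\psi(R)$, which gives $\ker(H_\Lambda)\cap\caC_\Lambda(R)=\Span\{\psi_\Lambda(R)\}$. The delicate step is proving $\caG_\Lambda\subseteq\caC_\Lambda$, i.e., that $\psi(\mu)=0$ for every $\mu\notin\ran(\sigma_\Lambda)$ when $\psi\in\ker(H_\Lambda)$. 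Membership in $\bigcap_x\ker(n_xn_{x+2})$ disposes of the non-electrostatic configurations; for an electrostatic $\mu$ that is not a tiling configuration, one uses the interval analog of Proposition~\ref{prop:tiling_configs} to produce a finite chain of swaps starting at $\mu$ and ending at a non-electrostatic configuration, each link either yielding a relation $\psi(\mu)=\lambda^{\pm 1}\psi(\mu')$ (from the vanishing of the $\ket{\nu}$-component of the relevant $q_x\psi$) or forcing a coefficient to vanish outright; propagating along the chain gives $\psi(\mu)=0$. I expect organizing this propagation to be the main obstacle; since $H_\Lambda$ conserves both $N_\Lambda$ and $M_\Lambda$ one can work within a fixed symmetry sector and induct on a suitable statistic, for instance the position of the left-most violation of the tiling conditions, as carried out in detail in \cite{Warzel:2022, Warzel:2023}. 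Finally, the hypothesis $L\neq 7$ enters precisely here: for $L=7$ the tiles $B_l$ and $B_r$ cannot both be placed without overlapping, and $\ket{1100011}$ is then an additional ground state lying outside $\caC_\Lambda$; this isolated exception does not affect the application of the martingale method in the proof of Theorem~\ref{thm:main_result}.
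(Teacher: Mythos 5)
Your proposal is correct and follows essentially the same route as the paper, which itself only sketches this proof by pointing to the argument for Lemma~\ref{lem:gss} (invariance via the local $0110\leftrightarrow 1001$ swap analysis of \eqref{eq:ff}--\eqref{eq:hopping_gs}, uniqueness of $\psi_\Lambda(R)$ within each class, and showing via Proposition~\ref{prop:tiling_configs} that ground states are supported only on tiling configurations) together with the $L=7$ anomaly from \cite{Nachtergaele2021}. Your treatment of the boundary tiles, the coefficient cancellation $\lambda^{n_D(T)}-\lambda\,\lambda^{n_D(T)-1}=0$, and the anomalous state $\ket{1100011}$ matches the intended argument, with the support-on-tilings propagation deferred to \cite{Warzel:2022} exactly as the paper does.
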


 The proof here follows the analogous steps as those used to prove Lemma~\ref{lem:gss}. Moreover, as proved in \cite{Nachtergaele2021}, when $|\Lambda|=7$ an additional anomalous ground state appears, and an orthogonal basis for the ground state space is $\{\psi_\Lambda(R) : R\in\caR_\Lambda\} \cup \left\{\ket{1100011}\right\}.$

\subsection{Edge state tilings}\label{sec:edge_states}
We briefly introduce the invariant subspaces that support the edge states of $H_\Lambda$, which can also be described by domino tilings. Edge tilings agree with BVMD tilings on the interior of $\Lambda$, but violate one of the occupation conditions of Proposition~\ref{prop:tiling_configs} near the boundary of $\Lambda$. As such, they belong to $(\caC_\Lambda^{\rm per})^\perp$.

Several additional boundary tiles are needed to define edge tilings. Two of these tiles are used for the edge root tilings, namely
\begin{enumerate}
    \item The left edge root tile: $E_l = (1100100)$
    \item The right edge root tile: $E_r = (10011).$
\end{enumerate}
These result from removing the inner most zero of $B_l$, respectively $B_r$, and then appending a monomer $M$ to the interior. This breaks condition (iia) of Proposition~\ref{prop:tiling_configs}. Four other edge tiles and replacement rules are need to ensure these subspaces are invariant under all interaction terms of $H_\Lambda$. Each of these tiles also breaks one of the criteria from Proposition~\ref{prop:tiling_configs}.
\begin{enumerate}
    \item The additional left edge tiles are $T_l^1=(1011000)$ and $T_l^2=(1100011000)$. These satisfy the replacement rules:
    \begin{equation}\label{eq:replacement_el}
            E_l \leftrightarrow T_l^1, \quad E_lM \leftrightarrow T_l^2.
    \end{equation}
    \item The additional right edge tiles are $T_r^1=(01101)$ and $T_r^2=(01100011)$. These satisfy the replacement rules:
    \begin{equation}\label{eq:replacement_er}
            E_r \leftrightarrow T_r^1, \quad ME_r \leftrightarrow T_r^2.
    \end{equation}
\end{enumerate}

\begin{figure}
    \centering
    \includegraphics[scale=.3]{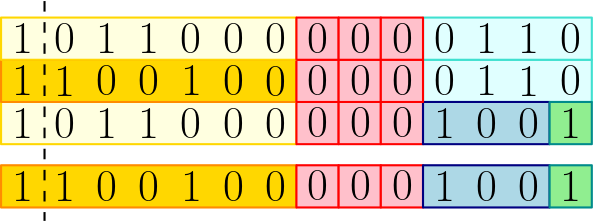}
    \caption{An edge tiling equivalence class generated by the replacement rules. Note that removing the boundary site of the edge tile produces a BVMD tiling.}
    \label{fig:Edge_tiling}
\end{figure}

Edge tiles can only be placed on their indicated boundary. An \emph{edge tiling} is any tiling that has \emph{at least one} edge tile, and otherwise follows the rules of a BVMD tiling, see Figure~\ref{fig:Edge_tiling}. It is an \emph{edge root tiling} if it contains no $D_i$, $i=1,2,3$ and no $T_l^j,T_r^j,$ $j=1,2$. Combining the replacement rules \eqref{eq:replacements} with \eqref{eq:replacement_el}-\eqref{eq:replacement_er} forms an equivalence relation on the set of all edge tilings, $\caT_\Lambda^e$, where two edge tilings are equivalent if one is transformed to the other after a finite number of tile replacements. The equivalence classes are indexed by the set of edge root tilings, $\caR_\Lambda^e$, and it is easy to check that
\[
\caC_\Lambda^e(R) = \{\ket{T} : T\leftrightarrow R\}, \qquad R\in \caR_\Lambda^e
\]
is invariant under each interaction term of $H_\Lambda.$  We continue the edge state discussion in Section~\ref{sec:edge_states}.

\subsection{Tiling space framentation and isospectral properties}
 The (B)VMD spaces are designed to be invariant under all of interaction terms of a local Hamiltonians, and so they remain invariant subspaces if some of those interaction terms are removed. As such, for any subinterval $\Lambda'\subseteq \Lambda$,
 \[
 H_{\Lambda'}\caC_\Lambda^{\#}(R) \subseteq \caC_\Lambda^{\#}(R), \quad \#\in \{{\rm per}, \, \cdot \,\} .
 \]
 Since removing interaction terms lifts constraints on the invariance, $\caC_\Lambda^\#(R)$ decomposes into multiple invariant subspaces of $H_{\Lambda'}$. Namely, this becomes a sum of spaces where the replacement relations only apply to $\Lambda'.$ We analyze this fragementation as it will be helpful for applying the spectral gap methods from Section~\ref{sec:gap_methods} to $\caV_\Lambda =\caC_\Lambda^{\rm per}.$ 

The decomposition of $\caC_\Lambda^\#(R)$ for any $R\in\caR_{\Lambda}^{\#}$ into invariant subspaces of $H_{\Lambda'}$ can be determined by considering the truncation of any $T\leftrightarrow R$ to $\Lambda'$. As the boundary tiles in Section~\ref{sec:obc} were defined to account for all possible truncations, for any $T\in\caT_\Lambda^{\#}(R)$, there is a unique $T'\in\caT_{\Lambda'}$ such that
\[\sigma_{\Lambda}^{\#}(T)\restriction_{\Lambda'} = \sigma_{\Lambda'}(T').\]
We call $T'$ the \emph{truncation} of $T$ to $\Lambda'$ and write $T'=T\restriction_{\Lambda'},$ see Figure~\ref{fig:truncation}.

The truncation of a root tiling $R\in\caR_\Lambda^{\#}$ to $\Lambda'\subseteq \Lambda$ is always a root tiling $R\restriction_{\Lambda'}\in\caR_{\Lambda'}$. However, there may be other $T\leftrightarrow R$ such that $T\restriction_{\Lambda'}\in \caR_{\Lambda'}$. The number of distinct $R'\in\caR_{\Lambda'}$ obtained as truncations of $T\in\caT_\Lambda^{\#}(R)$ only depends if the truncation separates the particles of two neighboring monomers in $R$. Said differently, whether either boundary of $\Lambda'$ lays in the interior of a sub-configuration $1001$ in $R$, see Figure~\ref{fig:truncation}. For $R\in\caR_{\Lambda}^{\#}$ and $\Lambda'\subseteq \Lambda$ fixed, let $n_R(\Lambda')\in\{0,1,2\}$ be the number of neighboring monomer pairs whose particles are separated by the truncation $R\restriction_{\Lambda'}$. These $n_R(\Lambda')$ pairs of neighboring monomers are precisely those for which, when they are replaced by a dimer in $R$, the particle content of both $\Lambda'$ and $\Lambda\setminus\Lambda'$ changes. Every other tile replacement in $R$ either does not change the particle content on $\Lambda'$, or is in one-to-one correspondence with a tile replacement on $\Lambda'.$ In the case of OBC, this means that given a root $R'\in\caR_{\Lambda'}$
\[
 (\mu_l, \sigma_{\Lambda'}(R'), \mu_r) \in \ran(\sigma_\Lambda\restriction_{\caT_\Lambda(R)}) \iff (\mu_l, \sigma_{\Lambda'}(T'), \mu_r) \in \ran(\sigma_\Lambda\restriction_{\caT_\Lambda(R)}) \quad \forall \; T'\leftrightarrow R'.
\]
where  $\mu_l\in\{0,1\}^{a-1}$ and $\mu_r\in\{0,1\}^{L-b}$ given that $\Lambda'=[a,b]\subseteq [1,L]= \Lambda$.

For OBC, there are $2^{n_R(\Lambda')}$ distinct root tilings of $\caR_{\Lambda'}$ produced from truncating any $T\leftrightarrow R$ to $\Lambda'$. These are  obtained from the $2^{n_R(\Lambda')}$ ways of replacing or not replacing the $n_R(\Lambda')$ pairs of neighboring monomers whose particles are separated by the truncation with dimers in $R$, and then truncating the resulting tiling to $\Lambda'$, see Figure~\ref{fig:truncation}.
 Enumerating these root tilings as $R_k'\in \caR_{\Lambda'}$, $1\leq k \leq 2^{n_R(\Lambda')}$, the BVMD space generated by $R$ decomposes as
\begin{equation}\label{eq:OBC_fragmentation}
\caC_\Lambda(R) = \bigoplus_{k=1}^{2^{n_R(\Lambda')}} \bigoplus_{\substack{T\leftrightarrow R : \\ T\restriction_{\Lambda'}=R_k'}} \ket{T\restriction_{\Lambda_l'}} \otimes \caC_{\Lambda'}(R_k') \otimes \ket{T\restriction_{\Lambda_r'}},    
\end{equation}
where $\phi\otimes \caV \otimes \psi := \{\phi\otimes \xi \otimes \psi : \xi \in \caV\}$ for any set of vectors $\caV$, and $\Lambda_l'=[1,a-1]$, $\Lambda_2'=[b+1,L]$. By convention, we set $\ket{T\restriction_{\emptyset}}=1$.

\begin{figure}
    \centering
    \includegraphics[scale=.3]{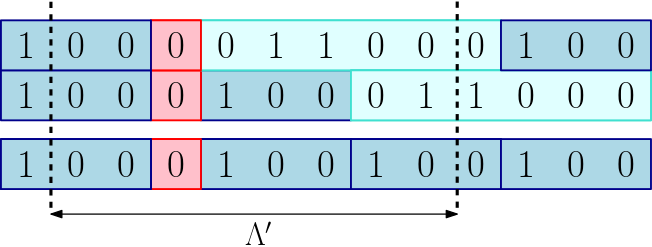}
    \caption{A truncation of a BVMD equivalence class. The right boundary of $\Lambda'$ separates the particles of a pair of neighboring monomers, but the left boundary does not. Hence, $n_R(\Lambda')=1$. The bottom two tilings produces the two distinct root tilings on $\Lambda'$ from obtained from truncating any $T\leftrightarrow R.$}
    \label{fig:truncation}
\end{figure}

The same number of distinct truncated root tilings of $\Lambda'$, is also produced in the case that $\Lambda$ has PBC as long as $|\Lambda|\geq |\Lambda'|+4$. This ensures that if both boundaries of $\Lambda'$ separate the particles of a pair of neighboring monomers, then both pairs can be simultaneously replaced by \emph{different} dimers. Under this constraint, the same analysis applies and
\begin{equation}\label{eq:PBC_fragmentation}
\caC_\Lambda^{\rm per}(R) = \bigoplus_{k=1}^{2^{n_R(\Lambda')}} \bigoplus_{\substack{T\leftrightarrow R : \\ T\restriction_{\Lambda'}=R_k'}} \caC_{\Lambda'}(R_k) \otimes \ket{T\restriction_{\Lambda\setminus \Lambda'}}.    
\end{equation}
The following is then an immediate consequence of \eqref{eq:OBC_fragmentation}-\eqref{eq:PBC_fragmentation}, and Lemmas \ref{lem:gss}-\ref{lem:obc_gss}.

\begin{prop}\label{prop:counting_gs}
    Suppose that $\Lambda'\subseteq \Lambda$ are two intervals with OBC. Then for any $R\in \caR_\Lambda$,
    \begin{equation}\label{eq:counting_gs}
        \dim\left(\caC_\Lambda(R) \cap (\caG_{\Lambda'}\otimes \cH_{\Lambda\setminus \Lambda'})\right) = \sum_{k=1}^{2^{n_R(\Lambda')}}\sum_{\substack{T\leftrightarrow R : \\ T\restriction_{\Lambda'}=R_k'}} 1
    \end{equation}
    This also holds in the case that $\Lambda$ has PBC as long as $|\Lambda|\geq |\Lambda'|+4.$
\end{prop}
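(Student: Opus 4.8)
The plan is to feed the fragmentation identities \eqref{eq:OBC_fragmentation} and \eqref{eq:PBC_fragmentation} into the ground state descriptions of Lemmas~\ref{lem:gss}--\ref{lem:obc_gss}, applied to the \emph{smaller} interval $\Lambda'$. Fix $R\in\caR_\Lambda$ and take $\Lambda$ with OBC first. Identity \eqref{eq:OBC_fragmentation} exhibits $\caC_\Lambda(R)$ as an orthogonal direct sum of blocks $\ket{T\restriction_{\Lambda_l'}}\otimes\caC_{\Lambda'}(R_k')\otimes\ket{T\restriction_{\Lambda_r'}}$ indexed by $1\le k\le 2^{n_R(\Lambda')}$ and by the tilings $T\leftrightarrow R$ with $T\restriction_{\Lambda'}=R_k'$. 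The first step is to observe that each such block is invariant under the embedded local Hamiltonian $H_{\Lambda'}$: it acts as the identity off $\Lambda'$, and by Lemma~\ref{lem:obc_gss} applied to $\Lambda'$ the space $\caC_{\Lambda'}(R_k')$ is invariant under every interaction term of $H_{\Lambda'}$. Hence $H_{\Lambda'}$ is block diagonal with respect to this decomposition of $\caC_\Lambda(R)$, its kernel there is the direct sum of its kernels on the individual blocks, and — since $\caC_\Lambda(R)\cap(\caG_{\Lambda'}\otimes\caH_{\Lambda\setminus\Lambda'})$ is precisely that kernel — one obtains
\[
\caC_\Lambda(R)\cap\big(\caG_{\Lambda'}\otimes\caH_{\Lambda\setminus\Lambda'}\big)
= \bigoplus_{k=1}^{2^{n_R(\Lambda')}}\ \bigoplus_{\substack{T\leftrightarrow R:\\ T\restriction_{\Lambda'}=R_k'}}
\ket{T\restriction_{\Lambda_l'}}\otimes\big(\caG_{\Lambda'}\cap\caC_{\Lambda'}(R_k')\big)\otimes\ket{T\restriction_{\Lambda_r'}}.
\]

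The second step is to show $\dim\big(\caG_{\Lambda'}\cap\caC_{\Lambda'}(R_k')\big)=1$ for every $k$; summing these $1$'s over all admissible pairs $(k,T)$ then reproduces exactly the right-hand side of \eqref{eq:counting_gs}. For $|\Lambda'|\neq 7$ this is immediate from Lemma~\ref{lem:obc_gss} together with the orthogonal splitting $\caC_{\Lambda'}=\bigoplus_{R'\in\caR_{\Lambda'}}\caC_{\Lambda'}(R')$: the orthogonal basis $\{\psi_{\Lambda'}(R')\}$ of $\caG_{\Lambda'}$ furnished by that lemma consists of nonzero vectors with $\psi_{\Lambda'}(R')\in\caC_{\Lambda'}(R')$, so $\caC_{\Lambda'}(R')\cap\caG_{\Lambda'}=\bC\psi_{\Lambda'}(R')$. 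I would then dispose of the anomalous case $|\Lambda'|=7$ by noting that the extra ground state $\ket{1100011}$ lies outside $\caC_{\Lambda'}$ altogether — the configuration $(1100011)$ is not in $\ran(\sigma_{\Lambda'})$, since no right-boundary tile produces two occupied sites at the far end — hence $\ket{1100011}\perp\caC_{\Lambda'}(R_k')$ and the intersection is still the single line $\bC\psi_{\Lambda'}(R_k')$. (For $|\Lambda'|<6$, outside the hypotheses of Lemma~\ref{lem:obc_gss}, the one-dimensionality can be checked by hand; only $|\Lambda'|\ge 6$ is needed for Theorem~\ref{thm:main_result}.)

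The PBC statement follows by the identical argument with \eqref{eq:OBC_fragmentation} replaced by \eqref{eq:PBC_fragmentation}; the hypothesis $|\Lambda|\ge|\Lambda'|+4$ is exactly what makes the periodic fragmentation valid, as it guarantees that truncating the $T\leftrightarrow R$ to $\Lambda'$ still yields exactly $2^{n_R(\Lambda')}$ distinct root tilings (when both boundaries of $\Lambda'$ split a neighboring-monomer pair, the two pairs can be replaced by \emph{different} dimers). I expect the only point of the argument not reducible to bookkeeping to be the one isolated in the first step — that intersecting the global ground state subspace $\caG_{\Lambda'}\otimes\caH_{\Lambda\setminus\Lambda'}$ with $\caC_\Lambda(R)$ genuinely respects the block decomposition — which is precisely where the $H_{\Lambda'}$-invariance of each block (a consequence of Lemma~\ref{lem:obc_gss}) enters. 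Everything else is a direct appeal to Lemmas~\ref{lem:gss}--\ref{lem:obc_gss}.
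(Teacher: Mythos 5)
Your proof is correct and follows exactly the route the paper intends: the paper itself offers no argument beyond declaring the proposition an immediate consequence of the fragmentation identities \eqref{eq:OBC_fragmentation}--\eqref{eq:PBC_fragmentation} and Lemmas~\ref{lem:gss}--\ref{lem:obc_gss}, and your write-up supplies precisely the missing bookkeeping (block-diagonality of the embedded $H_{\Lambda'}$, one-dimensionality of $\caG_{\Lambda'}\cap\caC_{\Lambda'}(R_k')$, and the disposal of the anomalous $|\Lambda'|=7$ state, which indeed lies outside $\caC_{\Lambda'}$ by the OBC analogue of Proposition~\ref{prop:tiling_configs}). The only quibble is your parenthetical reason for excluding $\ket{1100011}$ — the tile $B_r=(011)$ does place two particles at the far end; the actual obstruction is that it would have to overlap the $B_l$ tile forced on sites $1$--$5$ — but the conclusion is unaffected.
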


We use this to determine a basis for the subspace $\caG_\Lambda^\perp\cap (\caG_{\Lambda'}\otimes \caH_{\Lambda\setminus \Lambda'})$ of interest in the next section. Any BVMD tiling $T$ can be written as an order tiling $T=(T_1, T_2, \ldots, T_k)$ where $T_1$ covers the first site, $T_2$ is the tile to the right of $T_1$, and so forth. Similarly, $T$ can be partitioned into subtilings for any $1\leq \ell <k$ as $T=(T',T'')$ where $T'=(T_1, \ldots, T_\ell)$ and $T''=(T_{\ell+1}, \ldots, T_k)$. Since the BVMD replacement rules only apply to the tiles $\{M_i, \,  D_i : i=1,2,3\}$, the particle content of any non-monomer tiles in a root $R$ does not change for any $T\leftrightarrow R$. As a consequence, $\psi_\Lambda(R)$ factorizes over all non-monomer tiles in $R$. Thus, we can partition any root tiling $R=(\tilde{R},M_n^{(i)})\in \caR_\Lambda$ into subtilings where $\tilde{R}$ does not end in a monomer and $M_n^{(i)}:=(M,\ldots, M,M_i)\in \caR_{[1,3(n-1)+i]}$ has $n\geq 0$ consecutive monomers, the last of which has length $i$. Then
\begin{equation}\label{eq:factored_bvmd}
    \psi_\Lambda(R) = \psi_{\tilde{\Lambda}_n^{(i)}}(\tilde{R})\otimes \varphi_n^{(i)},
\end{equation}
where $\tilde{\Lambda}_n^{(i)}:=\Lambda\setminus\Lambda_n^{(i)}$, $\Lambda_n^{(i)}$ is the last $3(n-1)+i$ sites of $\Lambda$,
\begin{equation}
    \label{eq:TTstate}
    \varphi_n^{(i)} = \psi_{[1,3(n-1)+i]}(M_n^{(i)}), \qquad n\geq 1
\end{equation}
and $\varphi_0^{(i)}=1$ for all $i.$

 It is easy to see from considering the set of tilings $T\leftrightarrow M_n^{(i)}$ that $\varphi_n^{(i)}$ satisfies the following relations for all for all $1\leq i\leq 3$, where $\varphi_n := \varphi_n^{(3)}$,
\begin{align}
    \varphi_n^{(i)} & =  \varphi_n^{(j)}\otimes \ket{0}^{i-j} & \forall n\geq 1, \, i\geq j \geq 1 \label{eq:phi_1}\\
    \varphi_{l+r}^{(i)} & =  \varphi_l\otimes \varphi_r^{(i)}+ \lambda \varphi_{l-1}\otimes \ket{D}\otimes \varphi_{r-1}^{(i)} & \forall r\geq 2, \, l\geq 1\label{eq:phi_2}\\
     \varphi_n^{(i)} & = \varphi_{n-1}\otimes\ket{M_i} + \lambda \varphi_{n-2}\otimes \ket{D_i} & \forall n\geq 2.\label{eq:phi_3}
\end{align}

The particle content of the last $i$ sites of $D_i$ is different from that of the monomer $M_i$ for all $i$, and so by the orthonormality of the configuration basis combined with \eqref{eq:phi_3} imply that
$\|\varphi_n^{(i)}\|^2 = \|\varphi_{n-1}\|^2 + |\lambda|^2\|\varphi_{n-2}\|^2$.
This can be used to prove
\begin{equation} \label{eq:beta}
    \beta_n := \frac{\|\varphi_{n-1}\|^2}{\|\varphi_n\|^2}= \frac{1}{\beta_+}\frac{1-\beta^n}{1-\beta^{n+1}}
\end{equation}
where $\beta_{\pm} = \big(1\pm \sqrt{1+4|\lambda|^2}\big)/2$ and $\beta = \beta_-/\beta_+\in(-1,0),$ see \cite{Nachtergaele2021}. Given these properties, it is straightforward to verify that for any root tiling $R=(\tilde{R},M_n^{(i)})$ with $n\geq 2$ as in \eqref{eq:factored_bvmd},
\begin{equation}\label{eq:eta}
    \braket{\psi_\Lambda(R)}{\eta_\Lambda(R)} = 0, \quad \text{where} \quad \eta_\Lambda(R) = \psi_{\tilde{\Lambda}_n^{(i)}}(\tilde{R})\otimes \eta_n^{(i)} \in \caC_\Lambda(R)
\end{equation}
and 
\begin{equation}\label{eq:TT_eta}
\eta_n^{(i)} = -\overline{\lambda}\beta_{n-1}\varphi_{n-1}\otimes \ket{M_i} + \varphi_{n-2}\otimes \ket{D_i} \in \caC_{[1,3(n-1)+i]}(M_n^{(i)}).
\end{equation}
These observations along with Proposition~\ref{prop:counting_gs} culminate in the following result, which will be key for the application of the martingale method in Section~\ref{sec:gap_proof}.

\begin{lemma}\label{lem:ob_tilings}
    Suppose $\Lambda'=[1,L-3]$ and $\Lambda = [1,L]$ for some $L\geq 4.$ If $\caR_\Lambda^{MM}\subseteq \caR_\Lambda$ denotes the set of root tilings that end in two or more consecutive monomers, see \eqref{eq:factored_bvmd}, then for any $R\in\caR_{\Lambda}^{MM}$
    \begin{equation}\label{eq:RMM_basis}
    \caC_\Lambda(R) \cap  (\caG_{\Lambda'}\otimes \caH_{\Lambda\setminus \Lambda'}) =
      \Span\{\psi_\Lambda(R), \, \eta_\Lambda(R)\}
    \end{equation}
    where $\eta_\Lambda(R)$ is as in \eqref{eq:eta}. Moreover, for any $R\in \caR_\Lambda\setminus \caR_\Lambda^{MM}$,
    \begin{equation}\label{eq:non_RMM_basis}
            \caC_\Lambda(R) \cap  (\caG_{\Lambda'}\otimes \caH_{\Lambda\setminus \Lambda'}) =
      \Span\{\psi_\Lambda(R)\}.
    \end{equation}
\end{lemma}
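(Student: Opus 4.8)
The plan is to pin down the dimension of $\caC_\Lambda(R)\cap(\caG_{\Lambda'}\otimes\caH_{\Lambda\setminus\Lambda'})$ via Proposition~\ref{prop:counting_gs}, and separately to read off an explicit spanning set from the recursion \eqref{eq:phi_3}. Throughout I would write $R=(\tilde R,M_n^{(i)})$ as in \eqref{eq:factored_bvmd}, so that $\tilde R$ does not end in a monomer and $M_n^{(i)}$ is a run of $n$ trailing monomers whose last tile has length $i\in\{1,2,3\}$; then $R\in\caR_\Lambda^{MM}$ precisely when $n\ge2$. Since $M_1,M_2,B_r$ are right–boundary tiles (and $D_1,D_2,D_3$ do not occur in roots), whenever $n\ge1$ the subtiling $\tilde R$ necessarily ends in a void.

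First I would compute $n_R(\Lambda')$ for $\Lambda'=[1,L-3]$: the truncation removes only the last three sites, so only the cut between $L-3$ and $L-2$ is relevant, and replacing a neighboring monomer pair by a dimer alters the particle content on both $\Lambda'$ and $\Lambda\setminus\Lambda'$ exactly when that pair consists of the two trailing tiles of $R$. Hence $n_R(\Lambda')=1$ for $R\in\caR_\Lambda^{MM}$ and $n_R(\Lambda')=0$ otherwise. Next I would count the tilings $T\leftrightarrow R$ whose truncation $T\restriction_{\Lambda'}$ is a root: writing $T=(S,B)$ with $S\leftrightarrow\tilde R$, $B\leftrightarrow M_n^{(i)}$ (no replacement straddles the junction, as $\tilde R$ does not end in a monomer), the truncation is a root iff $S=\tilde R$ and the truncation of $B$ contains no $D_j$. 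Since $D_j$ occupies $\ge4$ sites, any dimer of $B$ left of the last three truncated sites survives; and when one cuts the rightmost tile of $B$, a $D_3$ truncated from the right becomes a $D_2$ or $D_1$ (bad), whereas a $D_i$ truncated to its first $i$ sites becomes the legal fragment $V$, $(V,M_1)$, or $B_r$. This leaves exactly $B=M_n^{(i)}$ (so $T=R$) and, when $n\ge2$, $B=(M_{n-2}^{(3)},D_i)$ (so $T=(\tilde R,M_{n-2}^{(3)},D_i)$), with the two distinct root truncations $R_1'=(\tilde R,M_{n-1}^{(i)})=R\restriction_{\Lambda'}$ and $R_2'=(\tilde R,M_{n-2}^{(3)},\dots)$, whose last sites carry the configuration $d_i$ realized by $V$, $(V,M_1)$, or $B_r$ for $i=1,2,3$. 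Proposition~\ref{prop:counting_gs} then yields $\dim\bigl(\caC_\Lambda(R)\cap(\caG_{\Lambda'}\otimes\caH_{\Lambda\setminus\Lambda'})\bigr)=1$ for $R\notin\caR_\Lambda^{MM}$ and $=2$ for $R\in\caR_\Lambda^{MM}$.

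For the spanning set, in the case $n\ge2$ I would iterate \eqref{eq:phi_3} once inside itself to obtain
\[
\varphi_n^{(i)}=\varphi_{n-1}^{(i)}\otimes\ket{\xi_i}+\lambda\bigl(\varphi_{n-2}^{(3)}\otimes\ket{d_i}\bigr)\otimes\ket{\delta_i},\qquad \eta_n^{(i)}=-\overline{\lambda}\,\beta_{n-1}\,\varphi_{n-1}^{(i)}\otimes\ket{\xi_i}+\bigl(\varphi_{n-2}^{(3)}\otimes\ket{d_i}\bigr)\otimes\ket{\delta_i},
\]
where $\xi_i$ (resp.\ $\delta_i$) is the configuration of the last three sites of $M_3M_i$ (resp.\ of $D_i$) and $d_i$ that of the first $i$ sites of $D_i$, so that $\xi_i\neq\delta_i$ and $\ket{d_i}$ is the configuration of $V$, $(V,M_1)$, $B_r$ for $i=1,2,3$. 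This uses the self–similar splittings $\ket{M_3M_i}=\ket{M_i}\otimes\ket{\xi_i}$, $\ket{D_3M_i}=\ket{D_i}\otimes\ket{\xi_i}$, $\ket{D_i}=\ket{d_i}\otimes\ket{\delta_i}$ together with $\varphi_{n-2}^{(3)}\otimes\ket{M_i}+\lambda\varphi_{n-3}^{(3)}\otimes\ket{D_i}=\varphi_{n-1}^{(i)}$ (again \eqref{eq:phi_3}), and the case $n=2$ is checked directly from $\varphi_1=\ket{M_3}$, $\varphi_0=1$. Tensoring with $\psi_{\tilde\Lambda_n^{(i)}}(\tilde R)$ and using that $\psi_\Lambda$ factorizes over non-monomer tiles, the first summands become $\psi_{\Lambda'}(R_1')$ and the $\bigl(\varphi_{n-2}^{(3)}\otimes\ket{d_i}\bigr)$–pieces become $\psi_{\Lambda'}(R_2')$, whence by \eqref{eq:factored_bvmd} and \eqref{eq:eta},
\[
\psi_\Lambda(R)=\psi_{\Lambda'}(R_1')\otimes\ket{\xi_i}+\lambda\,\psi_{\Lambda'}(R_2')\otimes\ket{\delta_i},\qquad \eta_\Lambda(R)=-\overline{\lambda}\,\beta_{n-1}\,\psi_{\Lambda'}(R_1')\otimes\ket{\xi_i}+\psi_{\Lambda'}(R_2')\otimes\ket{\delta_i}.
\]
Both vectors lie in $\caC_\Lambda(R)$ (Lemma~\ref{lem:obc_gss} and \eqref{eq:eta}) and in $\caG_{\Lambda'}\otimes\caH_{\Lambda\setminus\Lambda'}$: $\psi_\Lambda(R)$ is a common ground state of every interaction term and so of $H_{\Lambda'}$, cf.\ \eqref{eq:ff_projections}, and the displayed formula makes this evident for $\eta_\Lambda(R)$ as well. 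Since $\xi_i\neq\delta_i$ and $\psi_{\Lambda'}(R_1'),\psi_{\Lambda'}(R_2')\neq0$, the vectors $\psi_{\Lambda'}(R_1')\otimes\ket{\xi_i}$ and $\psi_{\Lambda'}(R_2')\otimes\ket{\delta_i}$ are linearly independent, and the change of basis to $\psi_\Lambda(R),\eta_\Lambda(R)$ has determinant $1+|\lambda|^2\beta_{n-1}>0$ because $\beta_{n-1}=\|\varphi_{n-2}\|^2/\|\varphi_{n-1}\|^2>0$ by \eqref{eq:beta}; comparing with the dimension count gives \eqref{eq:RMM_basis}. For $R\notin\caR_\Lambda^{MM}$ the single nonzero vector $\psi_\Lambda(R)$ exhausts the one–dimensional intersection, giving \eqref{eq:non_RMM_basis}.

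The step I expect to be the real work is the tiling count: one must track precisely where dimers can sit relative to the cut at $L-3$ and, crucially, invoke that boundary tiles occur only at the boundary — this is the point that eliminates all spurious extra tilings (configurations such as $\ket{\dots100101}$ are not legal BVMD tilings at all, which is why a $D_1$ absorbed near the edge cannot produce a new root truncation). By contrast the algebra yielding the formulas for $\psi_\Lambda(R)$ and $\eta_\Lambda(R)$ is routine once one notices the self–similar splittings $\ket{M_3M_i}=\ket{M_i}\otimes\ket{\xi_i}$ and $\ket{D_3M_i}=\ket{D_i}\otimes\ket{\xi_i}$ that let \eqref{eq:phi_3} be reapplied.
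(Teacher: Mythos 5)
Your proposal is correct and follows essentially the same route as the paper: compute $n_R(\Lambda')\in\{0,1\}$ and invoke Proposition~\ref{prop:counting_gs} with the two truncated roots $R'=(\tilde R,M_{n-1}^{(i)})$ and $R_D'=(\tilde R,M_{n-2}^{(3)},R^{(i)})$ to get the dimension, then use \eqref{eq:phi_3} to write $\psi_\Lambda(R)$ and $\eta_\Lambda(R)$ as combinations of $\psi_{\Lambda'}(R')$ and $\psi_{\Lambda'}(R_D')$ tensored with the last three sites, which lie in $\caG_{\Lambda'}\otimes\caH_{\Lambda\setminus\Lambda'}$ by Lemma~\ref{lem:obc_gss}. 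The only cosmetic differences are that you verify linear independence via the change-of-basis determinant $1+|\lambda|^2\beta_{n-1}>0$ rather than the orthogonality \eqref{eq:eta}, and your parenthetical that $\tilde R$ must end in a void overlooks the harmless possibility $\tilde R=(B_l)$ (all that is used is that $\tilde R$ does not end in a monomer).
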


\begin{proof}
    Fix $R\in\caR_\Lambda$. Since the left boundaries of $\Lambda$ and $\Lambda'$ agree, $R\restriction_{\Lambda'}$ does not separate the particles of any pair of neighboring monomers if $R\in \caR_\Lambda\setminus \caR_\Lambda^{MM}$, and separates the particles from one pair of neighboring monomers if $R\in \caR_\Lambda^{MM}$.

    For $R\in \caR_\Lambda\setminus \caR_\Lambda^{MM}$, the particle content on the last three sites of any $T\leftrightarrow R$ is invariant under tile replacements. Therefore, by Proposition~\ref{prop:counting_gs}
    \[
    \dim\left(\caC_\Lambda(R)\cap (\caG_{\Lambda'}\otimes \caH_{\Lambda\setminus \Lambda'})\right) = 1.
    \]
    The frustration-free property implies $\psi_\Lambda(R) \in \caG_{\Lambda'}\otimes \caH_{\Lambda\setminus \Lambda'}$, which establishes \eqref{eq:non_RMM_basis} by Lemma~\ref{lem:obc_gss}.

      For $R\in\caR_\Lambda^{MM}$, write $R= (\tilde{R}, M_n^{(i)})$ as in \eqref{eq:factored_bvmd} and let $R_D := (\tilde{R}, M_{n-2}^{(3)}, D_i)$. The two distinct roots that result from truncating any $T\leftrightarrow R$ to $\Lambda'$ are
    \begin{equation} \label{eq:restricted}
            R':=R\restriction_{\Lambda'} = (\tilde{R}, M_{n-1}^{(i)}), \quad  R_D':=R_D\restriction_{\Lambda '} = (\tilde{R}, M_{n-2}^{(3)}, R^{(i)})
    \end{equation}
    where $R^{(1)} = (V)$, $R^{(2)}= (V, M_1)$ and $R^{(3)} = (B_r)$. Thus, by Proposition~\ref{prop:counting_gs},
    \begin{equation}\label{eq:dim_RMM}
            \dim(\caC_\Lambda(R)\cap (\caG_{\Lambda'}\otimes \caH_{\Lambda\setminus \Lambda'})) = 2,
    \end{equation}
and it is easy to show given \eqref{eq:restricted} that
\begin{align*}
  \psi_{\tilde{\Lambda}_n^{(i)}}(\tilde{R})\otimes\varphi_{n-1}\otimes \ket{M_i} & = \psi_{\Lambda'}(R')\otimes\ket{R\restriction_{[L-2,L]}}  \\
   \psi_{\tilde{\Lambda}_n^{(i)}}(\tilde{R})\otimes\varphi_{n-2}\otimes \ket{D_i}  & = \psi_{\Lambda'}(R_D')\otimes\ket{R_D\restriction_{[L-2,L]}}.
\end{align*}
 These belong to $\caG_{\Lambda'}\otimes\caH_{\Lambda\setminus\Lambda'}$ by Lemma~\ref{lem:obc_gss}, and so \eqref{eq:RMM_basis} holds by \eqref{eq:eta} and \eqref{eq:dim_RMM}.
\end{proof}

So far, we have shown how to decompose each $\caC_\Lambda^\#(R)$, $\#\in\{{\rm per}, \, \cdot\, \}$, into invariant subspaces of $H_{\Lambda'}$ for any $\Lambda'\subseteq \Lambda$. One can take the direct sum of this over all $R\in \caR_{\Lambda}$ to produce a decomposition of $\caC_\Lambda^\#$. This can be simplified, though, by noticing that, for OBC, any root tiling $R'\in R_{\Lambda'}$ can be extended by zeros to a tiling $T\in\caT_\Lambda$. The same holds in the case of PBC as long as $|\Lambda|\geq |\Lambda'|+4$. Then, by similar arguments as in \eqref{eq:OBC_fragmentation}-\eqref{eq:PBC_fragmentation}
\begin{align}
    \caC_\Lambda & = \bigoplus_{R'\in\caR_{\Lambda'}}\bigoplus_{\substack{T\in\caT_\Lambda : \\ T\restriction_{\Lambda'}=R'}} \ket{T\restriction_{\Lambda_l'}} \otimes \caC_{\Lambda'}(R') \otimes \ket{T\restriction_{\Lambda_r'}}, \\
    \caC_\Lambda^{\rm per} & = \bigoplus_{R'\in\caR_{\Lambda'}} \bigoplus_{\substack{T\in\caT_\Lambda^{\rm per}: \\ T\restriction_{\Lambda'}=R'}} \caC_{\Lambda'}(R') \otimes \ket{T\restriction_{\Lambda\setminus \Lambda'}}, \quad\quad\text{if }\; |\Lambda|\geq |\Lambda'|+4.
\end{align}
As a consequence, if $\caC_{\Lambda'}(R')$ is invariant under $A\in \cB(\caH_{\Lambda'})$ for all $R\in\caR_{\Lambda'}$, the norm and spectrum of $A\restriction_{\caC_\Lambda^{\#}}$ can be calculated from that of $A\restriction_{\caC_{\Lambda'}}.$

\begin{lemma}\label{lem:isospectral}
Fix two intervals $\Lambda'\subseteq \Lambda$, and suppose that $A\in \cB(\caH_{\Lambda'})$ leaves $\caC_{\Lambda'}(R)$ invariant for all $R\in \caR_{\Lambda'}$. Then
\begin{enumerate}
    \item $\|A\otimes \idty_{\Lambda\setminus \Lambda'}\|_{\caC_\Lambda} = \|A\|_{\caC_{\Lambda'}}$ where the norm is as in \eqref{eq:restricted_norm}.
    \item $\spec(A\otimes\idty_{\Lambda\setminus \Lambda'}\restriction_{\caC_\Lambda}) = \spec(A\restriction_{\caC_{\Lambda'}}). $
\end{enumerate}
The same result holds after replacing $\caC_\Lambda$ with $\caC_\Lambda^{\rm per}$ as long as $|\Lambda|\geq |\Lambda'|+4.$
\end{lemma}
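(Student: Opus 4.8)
The plan is to read off both claims directly from the two direct-sum decompositions of $\caC_\Lambda$ and $\caC_\Lambda^{\rm per}$ displayed immediately before the lemma. Write $\Lambda=[1,L]$ and $\Lambda'=[a,b]$, so that $\Lambda_l'=[1,a-1]$ and $\Lambda_r'=[b+1,L]$ partition $\Lambda\setminus\Lambda'$. Since $A\in\caB(\caH_{\Lambda'})$ leaves every $\caC_{\Lambda'}(R)$ invariant, the operator $A\otimes\idty_{\Lambda\setminus\Lambda'}$ leaves each summand $\ket{T\restriction_{\Lambda_l'}}\otimes\caC_{\Lambda'}(R')\otimes\ket{T\restriction_{\Lambda_r'}}$ of $\caC_\Lambda$ invariant, and on that summand it acts as $\idty\otimes(A\restriction_{\caC_{\Lambda'}(R')})\otimes\idty$. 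Via the canonical unitary $\ket{T\restriction_{\Lambda_l'}}\otimes\xi\otimes\ket{T\restriction_{\Lambda_r'}}\mapsto\xi$, the restriction of $A\otimes\idty_{\Lambda\setminus\Lambda'}$ to this summand is unitarily equivalent to $A\restriction_{\caC_{\Lambda'}(R')}$.

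Consequently $A\otimes\idty_{\Lambda\setminus\Lambda'}\restriction_{\caC_\Lambda}$ is an orthogonal direct sum of copies of the operators $A\restriction_{\caC_{\Lambda'}(R')}$, with $R'$ occurring once for every $T\in\caT_\Lambda$ satisfying $T\restriction_{\Lambda'}=R'$. Since the operator norm of a direct sum is the supremum of the summand norms and the spectrum of a finite direct sum is the union of the summand spectra, it follows that $\|A\otimes\idty_{\Lambda\setminus\Lambda'}\|_{\caC_\Lambda}$ equals the supremum of $\|A\|_{\caC_{\Lambda'}(R')}$ over all $R'\in\caR_{\Lambda'}$ that arise as a truncation $T\restriction_{\Lambda'}$, and likewise $\spec(A\otimes\idty_{\Lambda\setminus\Lambda'}\restriction_{\caC_\Lambda})$ is the union of $\spec(A\restriction_{\caC_{\Lambda'}(R')})$ over that same index set. (All spaces being finite-dimensional, spectra are just sets of eigenvalues, so no care about closures is needed.)

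It then remains to check that every $R'\in\caR_{\Lambda'}$ actually arises as a truncation, i.e. there is $T\in\caT_\Lambda$ with $T\restriction_{\Lambda'}=R'$; this is the one spot where the hypothesis $|\Lambda|\geq|\Lambda'|+4$ is used in the periodic case (it is, in any event, already required for the displayed decomposition of $\caC_\Lambda^{\rm per}$). For OBC one simply extends $R'$ by placing the void tile $V$ on every site of $\Lambda\setminus\Lambda'$, which is a legal BVMD tiling of $\Lambda$ restricting to $R'$. For PBC the same padding by voids is legal once $|\Lambda\setminus\Lambda'|\geq 4$, since the four or more added voids keep the occupied sites of $R'$ clear of the forbidden neighborhoods across the seam imposed by Proposition~\ref{prop:tiling_configs}. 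Given this surjectivity, both the supremum and the union above run over all of $\caR_{\Lambda'}$; combined with the decomposition $\caC_{\Lambda'}=\bigoplus_{R'\in\caR_{\Lambda'}}\caC_{\Lambda'}(R')$ into $A$-invariant summands, the two expressions collapse to $\|A\|_{\caC_{\Lambda'}}$ and $\spec(A\restriction_{\caC_{\Lambda'}})$ respectively, proving (i) and (ii). The periodic case is handled identically using the second displayed decomposition. The main (and essentially only) obstacle is this surjectivity-of-truncations bookkeeping: without it one obtains only the inclusion $\spec(A\otimes\idty_{\Lambda\setminus\Lambda'}\restriction_{\caC_\Lambda})\subseteq\spec(A\restriction_{\caC_{\Lambda'}})$ and the inequality $\|A\otimes\idty_{\Lambda\setminus\Lambda'}\|_{\caC_\Lambda}\leq\|A\|_{\caC_{\Lambda'}}$.
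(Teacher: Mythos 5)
Your proof is correct and follows essentially the same route as the paper: the displayed decompositions of $\caC_\Lambda$ and $\caC_\Lambda^{\rm per}$ exhibit $A\otimes\idty_{\Lambda\setminus\Lambda'}$ as a direct sum of copies of the blocks $A\restriction_{\caC_{\Lambda'}(R')}$, and the paper's remark that every root $R'\in\caR_{\Lambda'}$ extends by voids to a tiling of $\Lambda$ (with $|\Lambda|\geq|\Lambda'|+4$ in the periodic case) is exactly your surjectivity-of-truncations step. You have simply made explicit the multiplicity bookkeeping that the paper dispatches in one sentence.
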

This result is an immediate consequence of the fact that the restriction of $A$ to either $\caC_\Lambda^\#$ or $\caC_{\Lambda'}$ is block diagonal with respect to the same set of invariant blocks. The only difference is that a block may appear multiple times in the decomposition of $\caC_\Lambda^\#.$ However, this does not change the norm or spectrum of the operator, only the multiplicity of the eigenvalues.

\section{Applying the bulk gap strategy to the Haldane pseudopotential}\label{sec:gap_proof}

\subsection{A lower bound on $E_1(\caV_\Lambda)$}\label{sec:E_1}

We now prove the lower bound on $E_1(\caV_\Lambda)$ for the truncated $1/3$-filled Haldane pseudopotential on the cylinder where $\caV_\Lambda= \caC_\Lambda^{\rm per}$. This follows from combining Theorem~\ref{thm:Knabe_FQHE} and Theorem~\ref{thm:MM_FQHE}. We start by applying the finite size criterion from Corollary~\ref{cor:fsc}. 

\begin{thm}\label{thm:Knabe_FQHE}
  Let $\Lambda = [1,L]$ be a ring of $L\geq 3n+9$ sites for some fixed $n\geq 2$. Then,
  \begin{equation}\label{eq:E_1_fsc}
      E_1(\caC_\Lambda^{\rm per}) \geq \frac{n}{2(1+2|\lambda|^2)(n-1)}\left(
      \min_{3\leq m \leq 5}\gap\left(H_{[1,3n+m]}\restriction_{\caC_{[1,3n+m]}}\right) - \frac{\kappa(1+2|\lambda|^2)}{n}
      \right).
  \end{equation}
\end{thm}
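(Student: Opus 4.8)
The strategy is to apply the coarse-grained finite size criterion, Corollary~\ref{cor:fsc}, to the restricted Hamiltonian $H_\Lambda^{\rm per}\restriction_{\caC_\Lambda^{\rm per}}$ with a suitable choice of covering intervals $X_i$. The natural choice is to take $X_i = [1+3(i-1), \, 3(i-1)+3n+m_i]$ for an appropriate fixed window length — more precisely, overlapping translates of a single block of length roughly $3n+3$ to $3n+5$, shifted by $3$ sites each (the period of the VMD tiling structure, reflecting that monomers have length $3$). With $\Lambda=[1,L]$ and $L\geq 3n+9$, one arranges $N = L/3$ such blocks around the ring (requiring $3\mid L$, or handling the general case by a minor adjustment). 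First I would verify the three hypotheses of Corollary~\ref{cor:fsc}: (i) $\caC_\Lambda^{\rm per}$ is invariant under each $H_{X_i}$ and each $H_{\Lambda_{n,k}}$ — this is immediate from Lemma~\ref{lem:gss} and the tiling-space fragmentation discussion of Section~2.4, since the (B)VMD spaces are invariant under every individual interaction term; (ii) the overlap condition $X_i\cap X_j\neq\emptyset\Rightarrow j=i+1$ (cyclically) holds provided the shift $3$ and block length are chosen so consecutive blocks overlap but next-nearest ones do not, which forces the lower bound $L\geq 3n+9$ and is the reason the window length must be between $3n+3$ and $3n+5$; (iii) every interaction term $n_xn_{x+2}$ and $q_x^*q_x$ (range $4$) is supported on at least one block, which holds because consecutive blocks overlap in at least $4$ sites.

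Next I would compute the constant $C = \sup_i \|H_{X_i}\|_{\caC_\Lambda^{\rm per}}$. Using Lemma~\ref{lem:isospectral}, $\|H_{X_i}\|_{\caC_\Lambda^{\rm per}} = \|H_{X_i}\|_{\caC_{X_i}}$, so it suffices to bound the restricted norm of a single block Hamiltonian on its BVMD space. Here I expect the bound $C \leq \kappa(1+2|\lambda|^2)$ to emerge: the electrostatic terms $n_xn_{x+2}$ vanish on $\caC$ (VMD configurations have no particles at distance $2$), so only the $\kappa q_x^*q_x$ terms contribute, and on the two-dimensional invariant blocks $\Span\{\ket{1001},\ket{0110}\}$ one has $q_x^*q_x$ with operator norm $1+|\lambda|^2$; summing the overlapping contributions and using that $\caC$-states localize hopping interactions should give the factor $(1+2|\lambda|^2)$ after the relevant Cauchy–Schwarz/telescoping bookkeeping. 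This computation of $C$ — making the constant exactly $\kappa(1+2|\lambda|^2)$ rather than something larger — is the step I expect to require the most care, as it hinges on exploiting the block structure of the VMD space rather than a crude triangle-inequality estimate.

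With $\gamma = \inf_i \gap(H_{X_i}\restriction_{\caC_\Lambda^{\rm per}})$ and the windows all being translates, $\gamma$ reduces via Lemma~\ref{lem:isospectral} to the minimum over the finitely many window lengths of $\gap(H_{[1,3n+m]}\restriction_{\caC_{[1,3n+m]}})$, $m\in\{3,4,5\}$; similarly $\gap(H_{\Lambda_{n,k}}\restriction_{\caC_\Lambda^{\rm per}}) = \gap(H_{[1,3n^2+\cdots]}\restriction_{\caC})$ — here the $n$-fold coarse union $\Lambda_{n,k}$ has length of order $3n+3(n-1)$ plus the window, and by Lemma~\ref{lem:isospectral} its gap equals that of the corresponding open interval. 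Finally, I would substitute into \eqref{coarse_gap}: $E_1(\caC_\Lambda^{\rm per}) = \gap(H_\Lambda^{\rm per}\restriction_{\caC_\Lambda^{\rm per}}) \geq \frac{\gamma n}{2C(n-1)}\big(\min_k \gap(H_{\Lambda_{n,k}}\restriction_{\caC}) - \tfrac{C}{n}\big)$, and then bound $\gamma \geq 1$ (or absorb it; in fact since each $H_{X_i}$ contains enough of the interaction, $\gamma\geq$ some normalization that gets cancelled — more carefully, the stated bound has no leftover $\gamma$, which suggests the intended argument uses $\gamma\geq 1$ after a suitable normalization of the projections, or that $\gamma$ and the inner gap are identified via the same $m$-range), plug in $C = \kappa(1+2|\lambda|^2)$, and recognize $\min_k \gap(H_{\Lambda_{n,k}}\restriction_{\caC})$ as $\min_{3\leq m\leq 5}\gap(H_{[1,3n+m]}\restriction_{\caC_{[1,3n+m]}})$ after identifying which open intervals arise as the distinct isospectral representatives of the coarse blocks $\Lambda_{n,k}$. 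The main obstacle, beyond the norm computation, is the careful combinatorial bookkeeping of exactly which interval lengths $3n+m$ appear and checking that the shift-by-$3$ covering indeed realizes the Knabe hypotheses with the claimed constants; I would rely on the fragmentation identities \eqref{eq:OBC_fragmentation}–\eqref{eq:PBC_fragmentation} and Lemma~\ref{lem:isospectral} to reduce everything on the ring to finite open intervals.
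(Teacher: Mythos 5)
Your overall plan --- apply Corollary~\ref{cor:fsc} with a shift-by-$3$ covering of the ring, reduce everything to open intervals via Lemma~\ref{lem:isospectral}, and feed in $C=\kappa(1+2|\lambda|^2)$ --- is the right one, but you have interchanged the roles of the elementary blocks $X_i$ and the coarse-grained unions $\Lambda_{n,k}$, and this breaks the argument. Condition (ii) of Corollary~\ref{cor:fsc} requires that only \emph{consecutive} blocks intersect; your $X_i$ of length $3n+m_i\geq 9$ shifted by $3$ sites violate it already for $n=2$, since then $X_i\cap X_{i+2}\neq\emptyset$. Moreover, with your choice the unions $\Lambda_{n,k}=\bigcup_{i=k}^{n+k-1}X_i$ have roughly $6n$ sites, so $\min_k\gap(H_{\Lambda_{n,k}}\restriction_{\caC_\Lambda^{\rm per}})$ is \emph{not} $\min_{3\leq m\leq 5}\gap(H_{[1,3n+m]}\restriction_{\caC_{[1,3n+m]}})$, and your concluding identification is unjustified. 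The correct setup is the opposite of yours: write $L=3N+r$ with $r\in\{3,4,5\}$ and take $X_j=[3j-2,3j+3]$ for $1\leq j\leq N$ (six sites each, consecutive blocks overlapping on three sites), plus one longer block $X_{N+1}=[L-r+1,L+3]$ of $r+3\leq 8$ sites. Then only consecutive blocks intersect, every interaction term (support at most $4$ sites) lies in some $X_j$, and the coarse unions $\Lambda_{n,k}$ of $n$ consecutive blocks have exactly $3n+m$ sites with $m\in\{3,4,5\}$ --- these, not the $X_i$, are the intervals appearing in \eqref{eq:E_1_fsc}.

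Your handling of the constants reflects the same confusion. In \eqref{coarse_gap} both $\gamma$ and $C$ refer to the \emph{small} blocks: block-diagonalizing $H_{[1,k]}\restriction_{\caC_{[1,k]}}$ over root tilings for $k=6,7,8$ (where a root has at most three consecutive monomers, so the nontrivial invariant blocks have dimension at most three) gives $\gamma=\min_k\gap(H_{[1,k]}\restriction_{\caC_{[1,k]}})=\kappa$ and $C=\max_k\|H_{[1,k]}\|_{\caC_{[1,k]}}=\kappa(1+2|\lambda|^2)$, and the prefactor in \eqref{eq:E_1_fsc} is exactly $\tfrac{\gamma n}{2C(n-1)}$. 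There is no normalization making $\gamma\geq 1$ or cancelling it; $\gamma$ equals $\kappa$ and survives only through the ratio $\gamma/C=1/(1+2|\lambda|^2)$. Note also that if you kept your large blocks as the $X_i$, the constant $C=\sup_i\|H_{X_i}\|_{\caC_\Lambda^{\rm per}}$ would grow linearly with $n$ (e.g.\ the all-monomer configuration on $3j$ sites has energy $\kappa(j-1)|\lambda|^2$), so the threshold $C/n$ would not shrink as $n$ grows and the criterion would yield nothing. With the covering corrected as above, your invariance check and the Lemma~\ref{lem:isospectral} reductions go through and give the stated bound.
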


\begin{proof}
Take $r\in \{3,4,5\}$ and $N\in\bN$ so that $L=3N+r$, and define a sequence of intervals for $1\leq j \leq N+1$ by
\[
X_j =\begin{cases}
    [3j-2,3j+3], & 1\leq j \leq N \\
    [L-r+1,L+3] & j = N+1
\end{cases}
\]
where addition is understood modulo $L$. This collection of intervals satisfy conditions (ii)-(iii) of Corollary~\ref{cor:fsc} since (1) they all have at least 6 sites, (2) any two consecutive intervals intersect on three sites, and (3) any interaction term is supported on an interval of at most 4 sites. Moreover, condition (i) holds since $\caC_\Lambda^{\rm per}$ is invariant under every interaction term that comprises $H_\Lambda^{\rm per}.$ Hence, Corollary~\ref{cor:fsc} applies. Since $\caG_\Lambda^{\rm per}\subseteq \caC_\Lambda^{\rm per}$, \eqref{E_defs} and \eqref{eq:subspace_gap} imply that $\gap(H_\Lambda^{\rm per}\restriction_{\caC_\Lambda^{\rm per}})=E_1(\caC_\Lambda^{\rm per})$. Combining all of this with Lemma~\ref{lem:isospectral} results in the lower bound
\begin{equation}\label{eq:almost_result}
    E_1(\caV_\Lambda) \geq  \frac{\gamma n}{2C(n-1)}\left(
      \min_{3\leq m \leq 5}\gap\left(H_{[1,3n+m]}\restriction_{\caC_{[1,3n+m]}}\right) - \frac{C}{n}
      \right)
\end{equation}
where, since $6\leq r+3\leq 8$, the constants $\gamma$ and $C$ can be taken as
\begin{equation}\label{eq:fsc_app}
    \gamma := \min_{6\leq k\leq 8}\gap(H_{[1,k]}\restriction_{\caC_{[1,k]}})=\kappa, \quad
C := \max_{6\leq k \leq 8} \|H_{[1,k]}\|_{\caC_{[1,k]}}= \kappa(1+2|\lambda|^2).
\end{equation}

The explicit values of $\gamma$ and $C$ are a consequence of writing
\[
H_{[1,k]}\restriction_{\caC_{[1,k]}} = \bigoplus_{R\in\caR_{[1,k]}}H_{[1,k]}\restriction_{\caC_{[1,k]}(R)}
\]
and calculating $\spec(H_{[1,k]}\restriction_{\caC_{[1,K]}(R)})$ for all possible $R$. This is simplified from noting that, since $k\leq 8$, the spectrum is the same for any $R$ with the same number of consecutive monomers. If $R$ has no pairs of consecutive monomers, then $\caC_{[1,k]}(R)$ is the one-dimensional span of a ground state of $H_{[1,k]}$. If $R$ has two or three consecutive monomers, then $\caC_{[1,k]}(R)$ is a two-dimensional or three-dimensional invariant subspace of $H_{[1,k]}$, respectively. The spectrum can be easily calculated in both cases resulting in \eqref{eq:fsc_app}. Inserting \eqref{eq:fsc_app} into \eqref{eq:almost_result} produces \eqref{eq:E_1_fsc}.
\end{proof}

The lower bound on $E_1(\caC_\Lambda^{\rm per})$ has now been reduced showing that
\[
 \min_{3\leq m \leq 5}\gap\left(H_{[1,3n+m]}\restriction_{\caC_{[1,3n+m]}}\right) > \frac{\kappa(1+2|\lambda|^2)}{n} \qquad \text{for some}\;\; n \geq 2.
\]
This is accomplished in the next result where we use the martingale method to prove a lower bound on $\gap(H_{[1,L]})$ that is independent of $L.$

\begin{thm}\label{thm:MM_FQHE}
   Define $f(r) := \sup_{n\geq 4}f_n(r)$ for $r\geq 0$ where, recalling $\beta_n$ from \eqref{eq:beta},
    \begin{equation}
        f_n(r) := r\beta_{n}\beta_{n-2}\left(\frac{\left[1-\beta_{n-1}(1+r)\right]^2}{1+2r}+\beta_{n-3}\frac{r(1-\beta_{n-1})^2}{1+r}\right), \quad \forall \, r\geq 0.
    \end{equation}
   If $\lambda \neq 0$ and $L\geq 10$, the spectral gap of $H_{[1,L]}\restriction_{\caC_{[1,L]}}$ is bounded below by
    \begin{equation}
        \label{eq:E_1_MM}
        \gap(H_{[1,L]}\restriction_{\caC_{[1,L]}}) \geq \frac{\kappa}{3}\left(1-\sqrt{3f(|\lambda|^2)}\right)^2.
    \end{equation}
\end{thm}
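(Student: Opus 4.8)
The plan is to apply the martingale method in the form of Corollary~\ref{cor:MM} to the open-boundary Hamiltonian $H_{[1,L]}$ on the invariant subspace $\caV_\Lambda = \caC_{[1,L]}$, and to check the four items of Assumption~\ref{assump:MM} with the explicit constants $\gamma = \kappa$, $d = 3$, $\ell = 3$, and $\epsilon^2 = f(|\lambda|^2)$, which then yields $\gap(H_{[1,L]}\restriction_{\caC_{[1,L]}}) \geq \frac{\kappa}{3}(1 - \sqrt{3 f(|\lambda|^2)})^2$ directly from \eqref{eq:MM_gap_IS}. First I would fix the growing sequence of subvolumes $X_n = [3n-2, 3n]$ (the natural ``one monomer/void at a time'' blocks), so that $\Lambda_n = [1,3n]$, and for the leftover sites near the right edge one adjusts the final block to absorb whatever $r \in \{1,2,3\}$ extra sites remain; with these choices $X_m \cap X_n \neq \emptyset$ forces $m \in \{n-1, n\}$ roughly, giving $\ell = 3$ once one accounts for the range-$4$ interaction terms $q_x^*q_x$ reaching across block boundaries (this is why $\ell = 3$ rather than $\ell = 2$).

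The three ``easy'' items: (ii) Absorbs the Interaction, $H_\Lambda \leq \sum_i H_{X_i} \leq d\, H_\Lambda$, holds with $d = 3$ because each interaction term $n_x n_{x+2}$ or $q_x^* q_x$ has support of length $\leq 4$ and hence is contained in at most $3$ consecutive blocks of width $3$; (i) Uniform Local Gap with $\gamma = \kappa$ follows by computing $\spec(H_{X_n}\restriction_{\caC_{X_n}(R)})$ for the finitely many root types $R$ of a width-$3$ (or width-$4$/$5$) interval exactly as in the proof of Theorem~\ref{thm:Knabe_FQHE}, where one already finds the smallest nonzero eigenvalue is $\kappa$; (iii) Sufficiently Local Sequence with $\ell = 3$ is the combinatorial bookkeeping of which blocks the range-$4$ interactions couple. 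The substantive step is item (iv), the bound $\epsilon = \sup_n \|G_{X_{n+1}}\|_{E_n \caC_{[1,L]}} < 1/\sqrt{\ell} = 1/\sqrt{3}$, equivalently $\epsilon^2 = f(|\lambda|^2) < 1/3$: here one must understand $E_n \caC_{[1,L]} = (G_{\Lambda_n} - G_{\Lambda_{n+1}})\caC_{[1,L]}$ concretely. Using the fragmentation \eqref{eq:OBC_fragmentation} and Lemma~\ref{lem:ob_tilings}, the space $\caG_{\Lambda_n} \otimes \caH_{\Lambda \setminus \Lambda_n}$ restricted to a single $\caC_{[1,L]}(R)$ is spanned by $\psi_\Lambda(R)$ together with the extra vector $\eta_\Lambda(R)$ precisely when $R$ ends (relative to $\Lambda_n$) in two or more consecutive monomers, so that $E_n$ picks out exactly the $\eta$-type directions built from $\eta_n^{(i)}$ in \eqref{eq:TT_eta}; then $\|G_{X_{n+1}}\eta_\Lambda(R)\|^2/\|\eta_\Lambda(R)\|^2$ is a finite-dimensional computation involving only $\varphi_{n-1}, \varphi_{n-2}, \varphi_{n-3}$ and the tiles on the last few sites, which using the norm recursion $\|\varphi_n\|^2 = \|\varphi_{n-1}\|^2 + |\lambda|^2\|\varphi_{n-2}\|^2$ and the closed form \eqref{eq:beta} for $\beta_n$ collapses exactly to $f_n(|\lambda|^2)$; taking the supremum over $n$ (equivalently over $n \geq 4$, the smaller cases being handled by the hypothesis $L \geq 10$ and direct inspection) gives $\epsilon^2 \leq f(|\lambda|^2)$.

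The main obstacle I expect is item (iv): correctly identifying $E_n\caC_{[1,L]}$ and carrying out the overlap computation so that it reproduces the precise rational function $f_n(r)$. The delicate points are (a) recognizing that $G_{X_{n+1}}$ (the ground-state projection of the three-site block $[3n+1,3n+3]$) acting on $\eta_n^{(i)} = -\overline{\lambda}\beta_{n-1}\varphi_{n-1}\otimes\ket{M_i} + \varphi_{n-2}\otimes\ket{D_i}$ only ``sees'' the overlap with the local ground state on those three sites, (b) keeping track of the $i$-dependence (the three boundary monomer types $M_1, M_2, M_3$) and checking the bound is uniform in $i$ or that the worst case is $i = 3$, and (c) the two-term structure inside $f_n$, where the first term $[1-\beta_{n-1}(1+r)]^2/(1+2r)$ comes from one replacement pattern and the second $\beta_{n-3} r(1-\beta_{n-1})^2/(1+r)$ from the next, requires carefully expanding $\|\eta_\Lambda(R)\|^2$ via \eqref{eq:phi_2}--\eqref{eq:phi_3} and the orthogonality of configuration states. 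Everything else is routine once the sequence $X_n$ and the constants are fixed; the final inequality \eqref{eq:E_1_MM} is then simply \eqref{eq:MM_gap_IS} with the values substituted, noting $\gap(H_{[1,L]}\restriction_{\caC_{[1,L]}})$ equals $\gap(H_{[1,L]}\restriction_{\caG_{[1,L]}^\perp \cap \caC_{[1,L]}})$ since $H_{[1,L]}$ is frustration-free on $\caC_{[1,L]}$.
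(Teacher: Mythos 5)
Your overall framework is the right one (Corollary~\ref{cor:MM} with $\gamma=\kappa$, $d=3$, $\ell=3$, $\epsilon^2\leq f(|\lambda|^2)$, and the identification of $E_n\caC_{[1,L]}$ via Lemma~\ref{lem:ob_tilings} and the vectors $\eta_\Lambda(R)$), but your choice of the localizing sequence $X_n=[3n-2,3n]$ is a genuine gap, not a cosmetic one. With disjoint three-site blocks, no hopping term $q_x^*q_x$ (support of four sites) is contained in any single $X_i$, so $H_{X_i}$ contains no hopping terms at all and the lower bound in Assumption~\ref{assump:MM}(ii), $H_\Lambda\leq \sum_i H_{X_i}$, fails outright; your justification of $d=3$ conflates ``the support touches at most three consecutive blocks'' with ``the term is supported inside some block,'' which is what (ii) actually requires. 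Likewise your reading of $\ell$ is off: $\ell$ in Assumption~\ref{assump:MM}(iii) counts overlaps among the $X_n$ themselves (for disjoint blocks it would be $\ell=1$), not the reach of the interaction across block boundaries, and the claim $\gamma=\kappa$ cannot come from three-site blocks, whose Hamiltonian is the single electrostatic term $n_{3n-2}n_{3n}$ with gap $1$.

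Most importantly, the key estimate (iv) cannot produce $f_n$ with your blocks: $f_n(r)$ contains the factors $1+2r=\|\varphi_3\|^2$ and $1+r=\|\varphi_2\|^2$, which arise precisely because $G_{X_{n+1}}$ is the ground-state projection of a nine-site window. The paper takes $L=3N+r$, $r\in\{1,2,3\}$, $X_1=[1,6+r]$ and $X_n=[3n+r-5,3n+r+3]$ for $n\geq 2$, so that consecutive blocks overlap on six sites; this is what gives $\ell=3$ and $d=3$, gives $\gamma=\kappa$ as the gap of the $7$-, $8$-, $9$-site Hamiltonians restricted to their tiling spaces (via Lemma~\ref{lem:isospectral}), and reduces (iv), by translation invariance, to bounding $\|G_X(G_{\Lambda'}-G_\Lambda)\|^2_{\caC_\Lambda}$ with $\Lambda=[1,M]$, $\Lambda'=[1,M-3]$, and the \emph{nine-site} window $X=[M-8,M]$. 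Only then does expanding $G_X$ over the two truncated roots $M_n^{(i)}\restriction_X$ and $D_n^{(i)}\restriction_X$ and using \eqref{eq:phi_1}--\eqref{eq:phi_3}, \eqref{eq:beta} give $\|G_X\eta_\Lambda(R)\|^2=f_n(|\lambda|^2)\|\eta_\Lambda(R)\|^2$ (with $G_X\eta_\Lambda(R)=0$ when $\Lambda_n^{(i)}\subseteq X$, i.e.\ $n\leq 3$, which is where the restriction to $n\geq 4$ in $f$ comes from). So you have the right skeleton and the right target identity, but you need to replace your coarse-graining by the overlapping nine-site blocks before any of the four assumptions can be verified with the stated constants.
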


The proof of this result follows closely the arguments given in \cite[Theorem 3.1]{Warzel:2022}-\cite[Lemma 3.2]{Warzel:2022}. However, we provide the details as this is a nontrivial step for proving Theorem~\ref{thm:main_result}.

\begin{proof}
We prove that Assumption~\ref{assump:MM} holds. To begin, let $N\geq 3$ and $r \in\{1,2,3\}$ denote the unique integers such that $L=3N+r$. Define the sequence of intervals
\begin{equation}
    X_n =
    \begin{cases}
        [1,6+r] & n=1 \\
        [3n+r-5, 3n+r+3] & 2\leq n\leq N-1
    \end{cases} \, .
\end{equation}
These satisfy $|X_n\cap X_{n+1}|=6$ for all $n$, $|X_1|=6+r$, and $|X_n|=9$ for $n\geq 2$. As such, Assumption~\ref{assump:MM}(iii) holds with $\ell=3$, and all interaction terms are supported on at least one and at most three of these intervals. Hence, Assumption~\ref{assump:MM}(ii) holds with $d=3$. Moreover, applying Lemma~\ref{lem:isospectral} and arguing similarly to \eqref{eq:fsc_app}, Assumption~\ref{assump:MM}(i) holds with
\[
\gamma := \inf_{m=7,8,9}\gap(H_{[1,m]}\restriction_{\caC_{[1,L]}}) = \inf_{m=7,8,9}\gap(H_{[1,m]}\restriction_{\caC_{[1,m]}}) = \kappa.
\]
If Assumption~\ref{assump:MM}(iv) holds with $\epsilon \leq f(|\lambda|^2)$, then \eq{eq:E_1_MM} will immediately follow from invoking Corollary~\ref{cor:MM}. The remained of this proof is focused on verifying this bound on $\epsilon$.

Recall that $\Lambda_n = \cup_{k=1}^n X_k$ and $E_n = G_{\Lambda_n}-G_{\Lambda_{n+1}}$. Since $\caC_{[1,L]}$ is invariant under $G_{\Lambda_n}$ for all $n$, \eqref{eq:invariant_relation} and Lemma~\ref{lem:isospectral} imply that
\[
\|G_{X_{n+1}}\|_{E_n\caC_{[1,L]}} = \|G_{X_{n+1}}(G_{\Lambda_n}-G_{\Lambda_{n+1}})\|_{\caC_{\Lambda_{n+1}}}, \quad \forall \, 0\leq n \leq N-2.
\]
This norm is trivially zero for $n=0$. Let $\Lambda =[1,M]$, $\Lambda' = [1, M-3]$ and $X = [M-8,M]$ for any $M\geq 10$. As the model is translation invariant, the desired bound on $\epsilon$ is immediate from showing 
\begin{equation}\label{eq:effective_epsilon}
    \|G_{X}(G_{\Lambda'}-G_{\Lambda})\|_{\caC_\Lambda}^2 = \sup_{\substack{0\neq \psi \in \caC_\Lambda: \\ \psi \in \caG_\Lambda^\perp\cap (\caG_{\Lambda'}\otimes \caH_{\Lambda\setminus \Lambda'})}}\frac{\|G_X\psi\|^2}{\|\psi\|^2} \leq f(|\lambda|^2).
\end{equation}
Note the equality above uses $G_{\Lambda'}-G_{\Lambda}= (\idty-G_{\Lambda})G_{\Lambda'}$ as $\caG_{\Lambda}\subseteq \caG_{\Lambda'}\otimes \caH_{\Lambda\setminus\Lambda'}$ by frustration-freeness.

By Lemma~\ref{lem:ob_tilings}, the subspace the supremum is take over in \eqref{eq:effective_epsilon} is
\[
 \{\psi\in\caC_\Lambda : \psi \in \caG_\Lambda^\perp\cap (\caG_{\Lambda'}\otimes \caH_{\Lambda\setminus\Lambda'})\}= \Span\{\eta_\Lambda(R) : R\in \caR_\Lambda^{MM}\} =:\cK_{\Lambda}.
\]
We first calculate $\|G_X\psi\|^2$ for any $\psi=\eta_\Lambda(R)$, and use this to bound the norm for an arbitrary $\psi\in\cK_\Lambda.$ Writing $\eta_\Lambda(R)$ as in \eqref{eq:factored_bvmd}, the first step is divided into two cases: $\Lambda_n^{(i)}\subseteq X$ and $X\subset \Lambda_n^{(i)},$ where we recall $\Lambda_n^{(i)}:=\Lambda\setminus\tilde{\Lambda}_n^{(i)}$ is the last $3(n-1)+i$ sites in $\Lambda.$ 

If $\Lambda_n^{(i)}\subseteq X$, the frustration-free property implies
\[
G_X\eta_\Lambda(R) = G_X\left(\psi_{\tilde{\Lambda}_n^{(i)}}(\tilde{R})\otimes G_{\Lambda_n^{(i)}}\eta_n^{(i)}\right) = 0,
\]
where the last equality holds since $\varphi_n^{(i)},\eta_n^{(i)}\in \caC_{\Lambda_n^{(i)}}(M_n^{(i)})$ are orthogonal by \eqref{eq:eta}, and $\eta_n^{(i)}$ is orthogonal to all other BVMD states by Lemma~\ref{lem:obc_gss}.

In the case $X\subset \Lambda_n^{(i)}$, the ground state projection $G_X$ can be expanded in terms of the orthogonal basis from Lemma~\ref{lem:obc_gss}, resulting in
\[
G_X\eta_\Lambda(R) = \psi_{\tilde{\Lambda}_n^{(i)}}(\tilde{R})  \otimes \left(\sum_{R\in\caR_X }\frac{\ketbra{\psi_X(R)}}{\|\psi_X(R)\|^2}\eta_n^{(i)}\right).
\]
Since $\eta_n^{(i)}\in\caC_{\Lambda_n^{(i)}}(M_n^{(i)})$, by \eqref{eq:OBC_fragmentation} the summation can further be reduced to roots $R\in \caR_{X}$ such that $R= T\restriction_X$, for some $T\leftrightarrow M_n^{(i)}.$ As $R' := M_n^{(i)}\restriction_X$ separates the particles of one pair of neighboring monomers, there are two such roots, $R'$ and $R_D' := D_n^{(i)}\restriction_X$ where $D_n^{(i)}=(M,\ldots, M, D, M, M_i).$ Using \eqref{eq:phi_1}-\eqref{eq:phi_3} to rewrite $\psi_X(R'), \psi_X(R_D')$, and $\eta_n^{(i)}$, one can easily calculate
\[
G_X\eta_n^{(i)} = \frac{\overline{\lambda}(1-\beta_{n-1}\|\varphi_2\|^2)}{\|\varphi_3\|^2}\varphi_{n-3}\otimes\varphi_3^{(i)}+\frac{|\lambda|^2(1-\beta_{n-1})}{\|\varphi_2\|^2}\varphi_{n-4}\otimes\ket{D}\otimes\varphi_2^{(i)}.
\]
Since $\|\eta_n^{(i)}\|^2=\|\varphi_{n-3}\|^2/(\beta_{n}\beta_{n-2})$ by \eqref{eq:beta}, applying $\|\varphi_k\|^2 = 1+(k-1)|\lambda|^2$, $k=2,3$, one obtains
\[
\|G_X\eta_\Lambda(R)\|^2 = f_n(|\lambda|^2)\|\psi_{\tilde{\Lambda}_n^{(i)}}(R)\|^2\|\eta_n^{(i)}\|^2 = f_n(|\lambda|^2)\|\eta_\Lambda(R)\|^2.
\]

As $\caC_\Lambda(R)$ is invariant under $G_X$ for all $R\in\caR_\Lambda$, the mutual orthogonality of the BVMD spaces from Lemma~\ref{lem:obc_gss} implies $\{G_X\eta_\Lambda(R) : R\in\caR_\Lambda^{MM}\}$ is an orthogonal set. Therefore, for an arbitrary $\psi=\sum_{R\in\caR_\Lambda^{MM}}c_R\eta_\Lambda(R)\in \cK_\Lambda$,
\[
\|G_X\psi\|^2 = \sum_{R\in\caR_\Lambda^{MM}}|c_R|^2\|G_X\eta_\Lambda(R)\|^2 \leq \sup_{n\geq 4}f_n(|\lambda|^2)\sum_{R\in\caR_\Lambda^{MM}}|c_R|^2\|\eta_\Lambda(R)\|^2 = f(|\lambda|^2)\|\psi\|^2,
\]
where we use that $X\subseteq \Lambda_n^{(i)}$ if and only if $n\geq 4.$ This establishes \eqref{eq:effective_epsilon} and completes the proof.
\end{proof}

\subsection{A lower bound on $E_0(\caV_\Lambda^\perp)$}\label{sec:E_0}

Consider a fixed finite volume $\Lambda =[1,L]$ in the ring geometry with $L\geq 10$. The goal of this section is prove the necessary lower bound on $E_0(\caV_\Lambda^{\perp})$ from \eqref{E_defs}. We summarize the ideas and calculations for the proof of Theorem~\ref{thm:GSE}, and point the reader to \cite[Theorem 3.4]{Warzel:2022} for the complete details.
\begin{thm}\label{thm:GSE}
    Let $\Lambda = [1,L]$ with $L\geq 11$ and set $\caV_\Lambda = \caC_\Lambda^{\rm per}$. Then
    \begin{equation}\label{eq:E_0_bound}
        E_0(\caV_\Lambda^\perp) \geq \frac{1}{3}\min\left\{1, \frac{\kappa}{\kappa+2}, \frac{\kappa}{2+2\kappa|\lambda|^2}\right\}.
    \end{equation}
\end{thm}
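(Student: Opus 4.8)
The plan is to estimate $\braket{\psi}{H_\Lambda^{\rm per}\psi}$ from below for an arbitrary $\psi=\sum_{\mu\in\caI_\Lambda}\psi(\mu)\ket{\mu}\in\caV_\Lambda^\perp$, where $\caI_\Lambda$ is the configuration index set with $\caV_\Lambda^\perp=\Span\{\ket{\mu}:\mu\in\caI_\Lambda\}$ and $\caI_\Lambda$ consists precisely of the configurations in $\ker(\bigcap_x n_xn_{x+2})$ that are \emph{not} tiling configurations of $\ran(\sigma_\Lambda^{\rm per})$, plus the configurations violating some electrostatic constraint. (One first checks $\caV_\Lambda^\perp$ decomposes as the span of such $\ket{\mu}$.) Using that $\caB_\Lambda$ diagonalizes each $n_xn_{x+2}$ and writing out $q_x$ in the configuration basis, I would expand as in the excerpt:
\[
\braket{\psi}{H_\Lambda^{\rm per}\psi} = \sum_{\mu\in\caI_\Lambda}\Big(\sum_{x\in\Lambda}\mu_x\mu_{x+2}\Big)|\psi(\mu)|^2 + \kappa\sum_{\nu\in\{0,1\}^{|\Lambda|}}\sum_{x\in\Lambda}|\braket{\nu}{q_x\psi}|^2.
\]
The first sum already gives $\sum_{x}\mu_x\mu_{x+2}\cdot|\psi(\mu)|^2\geq |\psi(\mu)|^2$ for any $\mu$ with at least one electrostatically violating pair. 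So it remains to handle those $\mu\in\caI_\Lambda$ that satisfy all electrostatic constraints but fail to be tiling configurations; by the characterization in Proposition~\ref{prop:tiling_configs}, such a $\mu$ must contain a local pattern that can be ``hopped'' by some $q_x$ into an electrostatically forbidden configuration, or into another non-tiling configuration, and the chain of such hops eventually reaches a configuration with nonzero electrostatic energy.

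The core of the argument is a Cauchy–Schwarz / ``loss of a fraction of the weight'' bookkeeping: for each non-tiling $\mu$, identify a site $x=x(\mu)$ such that $q_x$ moves the two particles of an isolated $1001$-block (or a $0110$-block) and such that the target configuration $\mu'$ either already carries electrostatic energy $\geq 1$ or is strictly ``closer'' to such a configuration in a well-founded partial order (e.g.\ by total displacement of particles toward a forbidden cluster). Then $|\braket{\mu'}{q_x\psi}|^2$ controls $|\psi(\mu)|^2$ up to the off-diagonal interference with $\psi(\mu')$ and with the at most one other configuration $\mu''$ mapping to $\mu'$ under $q_x$ (the factor $\lambda$ in $q_x=\sigma^-_x\sigma^-_{x+1}-\lambda\sigma^-_{x-1}\sigma^-_{x+2}$ is what produces the $|\lambda|^2$-dependence and the $\kappa/(2+2\kappa|\lambda|^2)$ term). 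Summing $|\braket{\nu}{q_x\psi}|^2$ over the appropriate $(\nu,x)$ pairs, each $|\psi(\mu)|^2$ with $\mu$ non-tiling gets charged, and each diagonal electrostatic term $\mu_x\mu_{x+2}|\psi(\mu)|^2$ with $\mu$ electrostatically violating gets used at most a bounded number of times; the constant $3$ in the bound comes from the fact that each site lies in at most three of the relevant overlapping blocks, and the three competing ratios $1$, $\kappa/(\kappa+2)$, $\kappa/(2+2\kappa|\lambda|^2)$ arise from the three structurally different types of local obstruction (pure electrostatic, a single $q_x$ hop into a forbidden pair with no interference, and a $q_x$ hop with the $\lambda$-interference term).

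Concretely the steps are: (1) establish the explicit description of $\caI_\Lambda=\{0,1\}^{|\Lambda|}\setminus\ran(\sigma_\Lambda^{\rm per})$ intersected with the relevant sector and the resulting expansion of $\braket{\psi}{H_\Lambda^{\rm per}\psi}$; (2) classify each $\mu\in\caI_\Lambda$ by the type of Proposition~\ref{prop:tiling_configs} constraint it violates and, for each type, exhibit the site $x(\mu)$ and the ``parent/child'' relation under $q_{x(\mu)}$, checking that this relation has no cycles using the center-of-mass/particle-number conservation (so the induction terminates at an electrostatically charged configuration); (3) for a fixed target $\nu$ and site $x$, expand $|\braket{\nu}{q_x\psi}|^2=|\psi(\mu^{(1)})-\lambda\psi(\mu^{(2)})|^2$ and apply the elementary inequality $|a-\lambda b|^2\geq \theta|a|^2-\frac{\theta}{1-\theta}|\lambda b|^2$ or a weighted Cauchy–Schwarz to trade diagonal mass between $\mu^{(1)}$ and $\mu^{(2)}$; (4) sum over all charged pairs, tracking the multiplicity ($\leq 3$) with which each site/configuration is reused, optimize the free weights, and collect the minimum of the three resulting ratios, producing $\gamma^{\rm per}$. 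The main obstacle I anticipate is step (2)–(3): making the ``charging scheme'' genuinely book-keeping-tight so that each $|\psi(\mu)|^2$ is recovered with a clean constant and no configuration is overcharged — in particular handling the configurations near a $0110$/$1001$ ambiguity where two distinct $q_x$'s act, and verifying the well-foundedness of the parent relation on a periodic ring (where wrap-around could a priori create a cycle) is the delicate point; this is exactly where the hypothesis $L\geq 11$ and the symmetry constraints are used. The rest is the routine optimization over auxiliary constants, which I would only sketch.
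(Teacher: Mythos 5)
Your overall skeleton matches the paper's: expand $\braket{\psi}{H_\Lambda^{\rm per}\psi}$ in the configuration basis, let the electrostatic terms handle configurations with $e_\Lambda(\mu)\geq 1$ for free, and use weighted Cauchy--Schwarz on the hopping amplitudes $|\braket{\nu}{q_x\psi}|^2=|\psi(\mu)-\bar\lambda\psi(\eta)|^2$ to charge the remaining non-tiling configurations, with an injectivity/multiplicity count on the pairs $(\nu,x)$. However, there is a genuine gap at the heart of your step (2): you replace the paper's key structural fact with an open-ended induction. The paper uses Proposition~\ref{prop:tiling_configs} to show that every $\mu\notin\ran\sigma_\Lambda^{\rm per}$ with $e_\Lambda(\mu)=0$ contains one of exactly two local patterns (a pair $\mu_x=\mu_{x+1}=1$ with a particle at distance $3$ or $4$ on some side, or two such pairs at the critical separation), and that each pattern reaches an electrostatically charged configuration after \emph{at most two} applications of a single, explicitly chosen $q_x$. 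This bounded-hop classification into $\caS_\Lambda^{(1)},\caS_\Lambda^{(2)},\caS_\Lambda^{(3)}$ is what produces the explicit constants $\kappa/(\kappa+2)$ (two hops, multiplicity $1$) and $\kappa/(2+2\kappa|\lambda|^2)$ (one hop, multiplicity $2$), and the $1/3$ comes from summing the three class-wise bounds, each of which spends the full Hamiltonian once --- not from sites lying in at most three overlapping blocks, as you assert. Your proposal instead allows a chain of hops of a priori unbounded length terminating only by a well-foundedness argument; since every Cauchy--Schwarz step in such a chain loses a constant factor, a chain whose length grows with $L$ would destroy the $\Lambda$-uniformity of the bound, and proving a uniform (indeed, length-two) bound on the chains is precisely the missing classification, i.e.\ the substance of the proof rather than bookkeeping.

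Concretely, to close the gap you would need to show: (a) every non-tiling configuration with zero electrostatic energy violates condition (ii) of Proposition~\ref{prop:tiling_configs} (not merely ``some'' condition), so it contains an adjacent pair $\mu_x=\mu_{x+1}=1$ with a nearby particle at the distances defining $\caS_\Lambda^{(2)}$ or $\caS_\Lambda^{(3)}$; (b) for $\caS_\Lambda^{(3)}$, one hop of that pair (via the $\lambda$-free part of $q_x$, with the $\bar\lambda\psi(\eta)$ term absorbed by the charged target's electrostatic energy, $\delta=\kappa|\lambda|^2/(1+\kappa|\lambda|^2)$) suffices, with at most two preimages per target giving $c_3=2$; (c) for $\caS_\Lambda^{(2)}$, two hops and two applications of the Cauchy--Schwarz inequality give $\kappa/(\kappa+2)$ with an injective assignment; and (d) the families $\caD_\mu$ of (configuration, site) pairs used are pairwise disjoint across $\mu$, which is where $L\geq 11$ enters (to prevent wrap-around collisions), rather than in any well-foundedness of a parent relation. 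Without these elements your scheme neither yields the stated constants nor a bound uniform in $L$.
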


By the injectivity of $\sigma_\Lambda^{\rm per},$ the orthogonal complement of $\caV_\Lambda = \caC_\Lambda^{\rm per}$ can be written as the following span of configuration states:
\[
\caV_\Lambda^\perp = \left(\caC_\Lambda^{\rm per}\right)^\perp= \Span \{\ket{\mu} : \mu\in \cS_\Lambda\}, \qquad \caS_\Lambda = \{0,1\}^{|\Lambda|}\setminus \ran \sigma_\Lambda^{\rm per},
\]
The electrostatic energy $e_\Lambda(\mu) = \sum_{x=1}^L\mu_x\mu_{x+2}$ of any $\mu\in \caS_\Lambda$ is key for producing the lower bound in Theorem~\ref{thm:GSE}. By Proposition~\ref{prop:tiling_configs}, $\caS_\Lambda$ can be partitioned into the three subsets as:
\begin{align*}
   \caS_\Lambda^{(1)} &= \left\{\mu \in \{0,1\}^{|\Lambda|}: e_\Lambda(\mu) \geq 1 \right\} \\
  \caS_\Lambda^{(2)} &= \left\{\mu \in \{0,1\}^{|\Lambda|}: \mu_x=\mu_{x+1}=\mu_{x-4}=\mu_{x-5}=1\; \text{for some}\; x\in \Lambda\right\}\setminus \caS_\Lambda^{(1)} \\
  \caS_\Lambda^{(3)} &= \left\{\mu \in \{0,1\}^{|\Lambda|}: \mu_x=\mu_{x+1}=1, \; \mu_{x-3}+\mu_{x+4}\geq 1\; \text{for some}\; x\in \Lambda\right\}\setminus (\caS_\Lambda^{(1)}\cup \caS_\Lambda^{(2)})
\end{align*}

Suppose that $\psi=\sum_{\mu\in\caS_\Lambda}\psi(\mu)\ket{\mu}\in \left(\caC_\Lambda^{\rm per}\right)^\perp$ is arbitrary. The lower bound in \eqref{eq:E_0_bound} is a result of identifying constants $c_i\in\bN$ and $\gamma_i>0$ that are independent of $\Lambda$ such that
\begin{equation}\label{eq:gs_goal}
    c_i\braket{\psi}{H_\Lambda^{\rm per}\psi} = c_i\Big(\sum_{\mu \in \caS_\Lambda^{(1)}}e_\Lambda(\mu)|\psi(\mu)|^2 + \kappa \sum_{\nu\in \{0,1\}^{|\Lambda|}}\sum_{x\in\Lambda}|\braket{q_x^*\nu}{\psi}|^2 \Big)\geq \gamma_i\sum_{\mu\in \caS_\Lambda^{(i)}}|\psi(\mu)|^2.
\end{equation}
Since $\|\psi\|^2 = \sum_{\mu\in\caS_\Lambda}|\psi(\mu)|^2$, dividing both sides of \eqref{eq:gs_goal} by $c_i$, and summing over $i=1,2,3$ yields
\begin{equation}\label{eq:gs_bound_idea}
  E_0((\caC_\Lambda^{\rm per})^\perp) \geq \frac{1}{3}\min\left\{\frac{\gamma_i}{c_i}: i=1,2,3\right\}.  
\end{equation}

Clearly, \eqref{eq:gs_goal} holds for $i=1$ with $\gamma_1=c_1=1$. For $i=2,3$, any $\mu\in\cS_\Lambda^{(i)}$ can be connected to some $\eta\in\caS_\Lambda^{(1)}$ using at most two hopping terms $q_x^*q_x.$ The argument that establishes \eqref{eq:gs_goal} for $i=2,3$ combines this observation with the following Cauchy-Schwarz bound:
\begin{equation}\label{eq:CS}
|a+b|^2 \geq (1-\delta)|a|^2 -\frac{1-\delta}{\delta}|b|^2,\qquad \forall a,b\in \bbC,\; \delta\in(0,1)  \,.  
\end{equation}

For example, when $\mu\in\caS_\Lambda^{(3)}$, take $\eta\in\caS_\Lambda^{(1)}$ and $\caD_\mu=\{(\nu,x)\}\subset\{0,1\}^{|\Lambda|}\times \Lambda$ as follows:
\[x\equiv x(\mu):= \max\{y\in[1,L] : \mu_{y}=\mu_{y+1}=1\, \wedge \,\mu_{y-3}+\mu_{y+4}\geq 1\},\]
$\nu=\nu(\mu)$ is given by removing the particles at $x$ and $x+1$ from $\mu$, and $\eta=\eta(\mu)$ is given by hopping the particles at $x$ and $x+1$ in $\mu$ to $x-1$ and $x+2$. Since $\mu\notin\caS_\Lambda^{(1)}$, $\eta$ is well-defined, and $e_\Lambda(\eta)=\eta_{x-3}\eta_{x-1}+\eta_{x+2}\eta_{x+4}\geq 1$. Applying \eqref{eq:CS} with $\delta = \tfrac{\kappa|\lambda|^2}{1+\kappa|\lambda|^2}$ produces
   \begin{equation}\label{eq:one_config}
        e_\Lambda(\eta) + \kappa\sum_{(\nu,x)\in D_\mu}|\braket{q_x\nu}{\psi}|^2 = e_\Lambda(\eta)+\kappa|\psi(\mu)-\overline{\lambda}\psi(\eta)|^2 \geq  \frac{\kappa}{1+\kappa|\lambda|^2}|\psi(\mu)|^2:=\gamma_3|\psi(\mu)|^2.
   \end{equation}
   
   The definition of $x(\mu)$ guarantees $D_\mu\cap D_{\mu'}=\emptyset$ if $\mu\neq \mu'$. Hence, summing \eqref{eq:one_config} over all $\mu\in\caS_\Lambda^{(3)}$ shows that \eqref{eq:gs_goal} holds with $\gamma_3$ and
    \[c_3=\max_{\eta\in\caS_\Lambda^{(1)}}|\{\mu\in \caS_\Lambda^{(3)}:\eta(\mu)=\eta\}|.\] The assumption $|\Lambda|\geq 11$ guarantees $\mu\mapsto \eta(\mu)$ is one-to-one when $e_\Lambda(\eta(\mu))=2$. There are at most two $\mu\neq \mu'\in \caS_\Lambda^{(3)}$ such that $\eta(\mu)=\eta(\mu')$ when $e_\Lambda(\eta(\mu))=1$.  Thus, $c_3=2$.

 A similar calculation holds for $\mu\in \caS_\Lambda^{(2)}$ after appropriately choosing $\eta\in\caS_\Lambda^{(1)}$ and $\caD_\mu =\{(\nu_i,x_i):i=1,2\}$. In this case, applying \eqref{eq:CS} twice with well-chosen $\delta_1,\delta_2\in(0,1)$ produces
 \[
 e_\Lambda(\eta)+\kappa \sum_{(\nu,x)\in\caD_\mu}|\braket{q_{x}^*\nu}{\psi}|^2 \geq \frac{\kappa}{\kappa+2}|\psi(\mu)|^2 :=\gamma_2|\psi(\mu)|^2.
 \]
 The configuration $\eta$ and set $D_\mu$ can be taken so that $\caS_\Lambda^{(2)}\ni\mu\mapsto\eta(\mu)$ is one-to-one, and $D_\mu\cap D_{\mu'}=\emptyset$ when $\mu\neq \mu'$. This yields $c_2=1.$ See \cite[Theorem 3.4]{Warzel:2022} for more details.

\subsection{Proof of Theorem~\ref{thm:main_result}} Let $\Lambda=[1,L]$ with $L\geq 11.$ Combining Theorems~\ref{thm:Knabe_FQHE}-~\ref{thm:MM_FQHE} produces the following lower bound on the spectral gap of $H_\Lambda^{\rm per}\restriction_{\caC_\Lambda^{\rm per}}$:
\[
E_1(\caC_\Lambda^{\rm per}) \geq \frac{\kappa n}{2(1+2|\lambda|^2)(n-1)}\left(\frac{1}{3}\left(1-\sqrt{3f(|\lambda|^2)}\right)^2-\frac{1+2|\lambda|^2}{n}\right) \quad \forall\; n \leq L/3-3.
\]
This is positive for $n \geq (3+6|\lambda|^2)/(1-\sqrt{3f(|\lambda|^2)})^2$ and independent of the system size for all $L\geq 3n+9.$ Moreover, by Theorem~\ref{thm:GSE},
\[
E_0((\caC_\Lambda^{\rm per})^\perp) \geq \gamma^{\rm per}:=\frac{1}{3}\min\left\{1, \frac{\kappa}{2+2\kappa|\lambda|^2}, \frac{\kappa}{2+\kappa}\right\}
\]
As $H_\Lambda^{\rm per}\caC_\Lambda^{\rm per}\subseteq \caC_\Lambda^{\rm per}$ and $\caG_\Lambda^{\rm per}\subseteq\caC_\Lambda^{\rm per}$, by \eqref{E_defs} these two bounds yields the desired result:
\[
\liminf_{L\to\infty}\gap(H_\Lambda^{\rm per}) \geq \min\left\{\frac{\kappa}{6(1+2|\lambda|^2)}\left(1-\sqrt{3f(|\lambda|^2)}\right)^2, \, \gamma^{\rm per}\right\}.
\]

\subsection{The edge state energy} The lower bounds on $E_1(\caV_\Lambda)$ and $E_0(\caV_\Lambda^\perp)$ have a positive limit as $|\lambda|\to 0$. This implicitly implies that the edge states of the model with OBC belong to $\caV_\Lambda^\perp.$ To show this explicitly, we analyze the ground state energy of $H_\Lambda\restriction_{\caC_\Lambda^\perp}.$ This analysis runs identically to that of Theorem~\ref{thm:GSE} with one main change. Namely, the interaction terms $q_x^*q_x$, $x=1,L-1,L$, cannot be used in the analysis as they are not interaction terms of $H_\Lambda$, see \eqref{eq:OBC_Ham}. 

The sets $\caS_\Lambda^{(i)}$ can be analogously defined in the OBC case using the convention that any site outside $\Lambda$ is vacant. However, we further partition $\caS_\Lambda^{(3)}$ by whether $D_\mu=\{(\nu,x)\}$ in \eqref{eq:one_config} could have been chosen with $x\notin\{1,L-1,L\}$. Hence, define
\[
\caS_{\Lambda^o}^{(3)} = \left\{\mu \in \caS_{\Lambda}^{(3)} :\exists x\in [2,L-2] \;\text{s.t.}\; \mu_x=\mu_{x+1}=1, \, \mu_{x-3}+\mu_{x+4}\geq 1 \right\}, \quad \caS_{\partial\Lambda}^{(3)} = \caS_\Lambda^{(3)}\setminus \caS_{\Lambda^o}^{(3)}.
\]
We point out that $\caS_{\partial\Lambda}^{(3)}= \{\sigma_\Lambda(R):R\in\caR_\Lambda^{e}\}$ where $\caR_\Lambda^{e}$ is as in Section~\ref{sec:edge_states}.

The analysis for $\mu \in \caS_{\Lambda^o}^{(3)}$ follows that of $\mu \in \caS_\Lambda^{(3)}$ from \eqref{eq:one_config} after restricting $x(\mu)\in[2,L-2]$. A similar adjustment ensures that no $q_x^*q_x$, $x=1,L-1,L$, is used to analyze $\mu\in\caS_\Lambda^{(2)}$. It is only $\mu\in\caS_{\partial\Lambda}^{{(3)}}$ where further adjustments are needed. We give the argument for when $\mu_1=\mu_2=\mu_5=1$. The case that $\mu_{L-4}=\mu_{L-1}=\mu_L=1$ runs analogously. 

Let $\caD_\mu=\{(\nu,x)\}$ where $x=3$ and $\nu$ the configuration that removes the particles from sites $2$ and $5$ in $\mu$, and let $\eta$ be the configuration that hops the particles at sites $2$ and $5$ in $\mu$ to sites $3$ and $4$. Then applying \eqref{eq:CS} with $\delta = \frac{\kappa}{\kappa+1}$ produces
\[
e_\Lambda(\eta) + \kappa\sum_{(\nu,x)\in\caD_\mu}|\braket{q_x^*\nu}{\psi}|^2\geq \frac{\kappa|\lambda|^2}{\kappa+1}|\psi(\mu)|^2.
\]
Then, arguing analogously as in \eqref{eq:gs_goal}-\eqref{eq:gs_bound_idea} produces
\[
\min_{0\neq \psi\in \caC_\Lambda^\perp }\frac{\braket{\psi}{H_\Lambda\psi}}{\|\psi\|^2} \geq \frac{1}{4}\min\left\{1,\,  \frac{\kappa}{\kappa+2}, \, \frac{\kappa}{2+2\kappa|\lambda|^2}, \, \frac{\kappa|\lambda|^2}{\kappa+1} \right\}
\]
which is clearly $\mathcal{O}(|\lambda|^2)$ in the limit $|\lambda|\to 0.$


 \section*{Acknowledgements} This project was funded by the DFG under TRR 352 (Grant: Number: 470903074):  Mathematik der Vielteilchen-Quantensysteme und ihrer kollektiven Phänomene. The author would like to thank Bruno Nachtergaele and Simone Warzel for helpful comments on a draft of this work.

\bibliography{Qmath_22}

\bibliographystyle{plain}

\end{document}